\newenvironment*{tightenum}{\enumerate[noitemsep]}{\endenumerate}
\newcommand*{\reason}[1]{\tag{#1}}
\newcommand*{\TCNLA}[1][]{}  %
\newcommand*{\ifNoAppendix}[1]{}
\newcommand*{\coecref}[1]{\cref{#1}}  %
\let\shortcite\cite
\def\colorschemesepia{sepia}
\def\colorschemedark{dark}
\def\colorschemelight{light}
\let\colorscheme\colorschemelight
\colorlet{textColor}{black}
\colorlet{bgColor}{white}
\definecolor{textColor}{HTML}{433423}
\definecolor{bgColor}{HTML}{fbf0da}
\definecolor{textColor}{HTML}{bdc1c6}
\definecolor{bgColor}{HTML}{202124}
\definecolor{textBlue}{HTML}{8ab4f8}
\definecolor{textRed}{HTML}{f9968b}
\definecolor{textGreen}{HTML}{81e681}
\definecolor{textPurple}{HTML}{c58af9}
\colorlet{textBlue}{blue!50!black}
\colorlet{textRed}{red!50!black}
\colorlet{textGreen}{green!50!black}
\definecolor{textPurple}{HTML}{681da8}
\let\eps\varepsilon
\newcommand*{\etal}{{\em et~al.}}
\newcommand*{\Th}{^{\textrm{th}}}
\newcommand*{\defeq}{:=}
\newcommand*{\wLoG}{without loss of generality}
\newcommand*{\WLoG}{Without loss of generality}
\DeclarePairedDelimiter\ceil{\lceil}{\rceil}
\DeclarePairedDelimiter\floor{\lfloor}{\rfloor}
\newcommand*{\Ahat}{\widehat{A}}
\newcommand*{\Ical}{\mathcal{I}}
\newcommand*{\Icalhat}{\widehat{\mathcal{I}}}
\newcommand*{\ellhat}{\widehat{\ell}}
\newcommand*{\vhat}{\widehat{v}}
\newcommand*{\Dtild}{\widetilde{D}}
\newcommand*{\Icaltild}{\widetilde{\mathcal{I}}}
\newcommand*{\Mtild}{\widetilde{M}}
\newcommand*{\Ntild}{\widetilde{N}}
\newcommand*{\vtild}{\widetilde{v}}
\newcommand*{\betatild}{\widetilde{\beta}}
\newcommand*{\mms}{\text{MMS}}
\newcommand*{\MMS}{\mathrm{MMS}}
\newcommand*{\mmsScore}{\mathrm{MMSscore}}
\newcommand*{\imp}{\min(\frac{1}{36}, \frac{3}{4(4n-1)})}
\newcommand*{\impText}{min(1/36, 3/(16n-4))}
\DeclareMathOperator{\approxMMS}{\mathtt{approxMMS}}
\DeclareMathOperator{\approxMMSHyp}{\hyperref[algo:gt]{\approxMMS}}
\DeclareMathOperator{\toOrdered}{\mathtt{toOrd}}
\DeclareMathOperator{\toOrderedHyp}{\hyperref[defn:toOrd]{\toOrdered}}
\DeclareMathOperator{\normalize}{\mathtt{normalize}}
\DeclareMathOperator{\normalizeHyp}{\hyperref[algo:normalize]{\normalize}}
\DeclareMathOperator{\reduce}{\mathtt{reduce}}
\DeclareMathOperator{\reduceHyp}{\hyperref[sec:valid-redn]{\reduce}}
\DeclareMathOperator{\bagFill}{\mathtt{bagFill}}
\DeclareMathOperator{\bagFillHyp}{\hyperref[algo:bagFill]{\bagFill}}
\newtheorem{lemma}{Lemma}
\newtheorem{theorem}{Theorem}
\newtheorem{definition}{Definition}
\newtheorem{example}{Example}
\crefname{claim}{Claim}{Claims}
\algnewcommand{\LineComment}[1]{\State \textcolor{gray}{\texttt{//} \textit{#1}}}
\gdef\expandafter\expandafter\csname sp:#1\endcsname\expandafter{%
\expandafter\proof\BODY\endproof}
\newcommand*{\recallProof}[1]{\csname sp:#1\endcsname}
\newcommand*{\movedProofNote}{\begin{proof}Proof moved to \cref{sec:missing-proofs}.\end{proof}}
\newcommand*{\callMacro}[1]{\csname#1\endcsname}
\title{Simplification and Improvement of MMS Approximation%
\thanks{Jugal Garg and Eklavya Sharma were supported by NSF Grant CCF-1942321.}}
\date{\empty}
\newcommand*{\emailfont}[1]{\texttt{\small #1}}
\newcommand*{\affilUIUC}{University of Illinois at Urbana-Champaign, USA}
\newcommand*{\affilMPI}{Max Planck Institute for Informatics and
Graduiertenschule Informatik, Universit\"at des Saarlandes, Germany}
\newcommand*{\affilGrubhub}{Grubhub, USA}
\author{
Hannaneh Akrami\footnote{\affilMPI}\\\emailfont{hakrami@mpi-inf.mpg.de}
\and Jugal Garg\footnote{\affilUIUC}\\\emailfont{jugal@illinois.edu}
\and Eklavya Sharma\textsuperscript{\textdagger}\\\emailfont{eklavya2@illinois.edu}
\and Setareh Taki\footnote{\affilGrubhub}\\\emailfont{Staki@grubhub.com}
}
\begin{document}
\maketitle

\begin{abstract}
We consider the problem of fairly allocating a set of indivisible goods among $n$ agents with additive valuations, using the popular fairness notion of maximin share (MMS). Since MMS allocations do not always exist, a series of works provided existence and algorithms for approximate MMS allocations.
The Garg-Taki algorithm gives the current best approximation factor of $(\frac{3}{4} + \frac{1}{12n})$.
Most of these results are based on complicated analyses, especially those providing better than $2/3$ factor. Moreover, since no tight example is known of the Garg-Taki algorithm, it is unclear if this is the best factor of this approach. In this paper, we significantly simplify the analysis of this algorithm and also improve the existence guarantee to a factor of $(\frac{3}{4} + \min(\frac{1}{36}, \frac{3}{16n-4}))$. For small $n$, this provides a noticeable improvement. Furthermore, we present a tight example of this algorithm, showing that this may be the best factor one can hope for with the current techniques. 

\end{abstract}

\section{Introduction}
Fair division of a set of indivisible goods among $n$ agents with diverse preferences is a fundamental problem in many areas, including game theory, social choice theory, and multi-agent systems. 
We assume that agents have additive valuations. 
Maximin share (MMS) is one of the most popular fairness notion in this setting, introduced by Budish \shortcite{budish2011combinatorial}, which has attracted a lot of attention in recent years. It is preferred by participating agents over other notions, as shown in real-life experiments by~\cite{GatesGD20}. 
Every agent $i$ has an associated threshold, called her maximin share ($\MMS_i$), defined as the maximum value $i$ can get by partitioning the set of goods into $n$ bundles (one for each agent) and picking a lowest-value bundle.
An agent considers an allocation to be fair if she receives goods of total value at least her $\MMS$. 

A natural question is whether we can always find an allocation that gives each agent her MMS. Surprisingly, such an allocation need not always exist. 
Procaccia and Wang~\shortcite{procaccia2014fair} 
showed examples for any $n\ge 3$ in which MMS allocations do not exist. This motivated them to initiate the study of approximate MMS.
Agent $i$ considers an allocation to be $\alpha$-MMS fair to her for $\alpha\in(0,1)$ if she receives goods of total value at least $\alpha\cdot\MMS_i$.
They showed that a $2/3$-MMS allocation always exists. Ghodsi \etal\ \shortcite{ghodsi2018fair} improved this result by showing the existence of a $3/4$-MMS using a sophisticated algorithm with a very involved analysis.
More recently, Garg and Taki \shortcite{garg2021improved} improved this result to $(\frac{3}{4} + \frac{1}{12n})$-MMS
using a simple combinatorial algorithm, though their analysis remains quite involved. Furthermore, there is no tight example known for this algorithm, so it is unclear if this is the best factor of the approach. 

A complementary problem is to construct examples with the smallest upper bound on $\alpha$,
say $\alpha^*$, such that $\alpha$-MMS allocations do not always exist for $\alpha > \alpha^*$.
Feige, Sapir, and Tauber~\shortcite{feige2021tight} recently obtained the best-known $\alpha^* = 1-1/n^4$ for $n\ge 4$.
They also gave an improved value of $\alpha^* = 39/40$ for the special case of $n=3$ agents.
However, there is still a substantial gap between the lower and upper bounds. 

In this paper, we investigate the Garg-Taki algorithm and obtain the following results. 
\begin{itemize}
\item A significantly simple analysis of the algorithm. 
\item An improved bound of $(\frac{3}{4} + \imp)$-MMS by slightly modifying the algorithm. Since $\imp \ge \frac{1}{12n}$ for all $n \ge 3$, this provides noticeable improvement for small $n$. We note that $\frac{3}{4} + \frac{1}{12n}$ was the best-known bound for $n> 4$. 
\item A tight example of the Garg-Taki's and our algorithms, which shows the limits of this approach in obtaining a better bound of $\frac{3}{4}+O(1)$. Interestingly, this example only utilizes identical valuations, for which MMS allocations are known to exist.%
\footnote{We recently learned that, interestingly, this example was also obtained earlier
\cite{deuermeyer1982scheduling,babaioff2021fair} to show the tightness of
greedy algorithms for approximating MMS.}
\end{itemize}

Our simplified analysis not only helped us to improve the MMS bound but also, together with the tight example, shed more light on why and for which instances the algorithm cannot do better. We believe that these results would help reduce the gap further between the lower and upper bounds. 

\subsection{Related Work}
Computing the maximin share of any agent is NP-hard (even for 2 agents)%
\footnote{by a straightforward reduction from the partition problem.}, but 
a PTAS exists \cite{woeginger1997polynomial}.
Procaccia and Wang \shortcite{procaccia2014fair} showed the existence 
of a $2/3$-MMS allocation, which can also be computed in polynomial time for a constant $n$.
Later, the algorithm was modified \cite{amanatidis2017approximation,kurokawa2018fair}
to compute a $(2/3-\eps)$-MMS allocation in polynomial time (here $\eps > 0$ is a parameter of the algorithm, whose running time increases with $1/\eps$).
Barman and Krishnamurthy \shortcite{barman2020approximation} gave a simple greedy algorithm with an involved analysis to find a $\frac{2}{3}(1 + \frac{1}{3n-1})$-MMS allocation. %
Garg \etal\ \shortcite{garg2019approximating} gave a simple algorithm with a simple analysis
to output a $2/3$-MMS allocation.

Ghodsi \etal\ \shortcite{ghodsi2018fair} showed the existence of a $3/4$-MMS allocation 
using a complicated algorithm and analysis.
Garg and Taki \shortcite{garg2021improved} showed how to find
a $3/4$-MMS allocation in strongly polynomial time,
and showed that $(\frac{3}{4} + \frac{1}{12n})$-MMS allocations exist.
Their results use simple algorithms, but their analysis is still quite involved.
\smallskip

\noindent{\bf Special cases.}
Amanatidis \etal\ \shortcite{amanatidis2017approximation} showed that
when $m \le n+3$, an MMS allocation always exists.
Feige \etal\ \shortcite{feige2021tight} improved this to $m \le n+5$.
For $n=2$, MMS allocations always exist \cite{bouveret2016characterizing}.
For $n=3$, the MMS approximation was improved from $3/4$ \cite{procaccia2014fair}
to $7/8$ \cite{amanatidis2017approximation} to $8/9$ \cite{gourves2019maximin}, and then to $11/12$~\cite{feige2022improved}.
For $n=4$, Ghodsi \etal\ \shortcite{ghodsi2018fair} showed the existence of $4/5$-MMS. 
\smallskip

\noindent{\bf Experiments.} Bouveret and Lema{\^\i}tre \shortcite{bouveret2016characterizing} 
showed that MMS allocations usually exist (for data generated randomly using uniform or Gaussian valuations).
Amanatidis \etal\ \shortcite{amanatidis2017approximation} gave a simple and efficient algorithm and showed that when the valuation of each good is drawn independently and randomly from the uniform distribution on $[0, 1]$, the algorithm's output is an MMS allocation with high probability when the number of goods or agents is large. Kurokawa \etal\ \shortcite{kurokawa2016can} gave a similar result for arbitrary distributions of sufficiently large variance.
\smallskip

\noindent{\bf Chores.} MMS can be analogously defined for fair division of chores. MMS allocations do not always exist for chores \cite{aziz2017algorithms}, which motivated the study of approximate MMS~\cite{aziz2017algorithms,barman2020approximation,huang2021algorithmic}, with the current best approximation ratio being $11/9$. For 3 agents, $19/18$-MMS allocations exist~\cite{feige2022improved}. %
\smallskip

\noindent{\bf Other settings.} MMS has also been studied for non-additive valuations~\cite{barman2020approximation,ghodsi2018fair,li2021fair}. Generalizations have been studied where restrictions are imposed on the set of allowed allocations, like matroid constraints \cite{gourves2019maximin}, cardinality constraints \cite{biswas2018fair}, and graph connectivity constraints \cite{bei2022price,truszczynski2020maximin}. Stretegyproof versions of fair division have also been studied~\cite{barman2019fair,amanatidis2016truthful,amanatidis2017truthful,aziz2019strategyproof}. MMS has also inspired other notions of fairness, like weighted MMS \cite{farhadi2019fair}, AnyPrice Share (APS) \cite{babaioff2021fair}, Groupwise MMS \cite{barman2018groupwise,chaudhury2021little}, $1$-out-of-$d$ share \cite{hosseini2021guaranteeing}, and self-maximizing shares \cite{babaioff2022fair}.

\subsection{Outline of This Paper}

In \cref{sec:prelims}, we give formal definitions, notations, and preliminaries.
In \cref{sec:simple}, we give a very simple proof that (a minor modification of)
the Garg-Taki algorithm~\shortcite{garg2021improved} outputs a $3/4$-MMS allocation.
In \cref{sec:betterSummary}, we improve the analysis to show that the output
is a $(\frac{3}{4} + \imp)$-MMS allocation.
In \cref{sec:tight-example}, we give a tight example for our algorithm.

\section{Preliminaries}
\label{sec:prelims}

For any non-negative integer $n$, let $[n] \defeq \{1, 2, \ldots, n\}$.

A \emph{fair division instance} $\Ical$ is specified by a triple $(N, M, v)$,
where $N$ is the set of agents, $M$ is the set of goods,
and $v_{i,g}$ is the value of good $g \in M$ for agent $i \in N$.
For a set $S$ of goods, define $v_i(S) \defeq \sum_{g \in S} v_{i,g}$.
Then $v_i$ is called agent $i$'s \emph{valuation function}.
Intuitively, $v_i(S)$ is a measure of how valuable $S$ is to $i$.
For ease of notation, we write $v_i(g)$ instead of $v_i(\{g\})$.
We can assume \wLoG{} that $N = [n]$ and $M = [m]$, where $n = |N|$ and $m = |M|$
(though when dealing with multiple related fair division instances,
not making this assumption can sometimes simplify notation).

For a set $S$ of goods, let $\Pi_n(S)$ denote the set of partitions of $S$ into $n$ bundles.
For any valuation function $u$, define
\[ \MMS_u^n(S) \defeq \max_{X \in \Pi_n(S)} \min_{j=1}^n u(X_j). \]
When the fair division instance $(N, M, v)$ is clear from context,
we write $\MMS_i$ instead of $\MMS_{v_i}^{|N|}(M)$ for conciseness.

\subsection{Ordered Instance}
\label{sec:ordered}

\begin{definition}
A fair division instance $(N, M, v)$ is called \emph{ordered} if there is an ordering
$[g_1, g_2, \ldots, g_{|M|}]$ of goods $M$ such that
for each agent $i$, $v_{i,g_1} \ge v_{i,g_2} \ge \ldots \ge v_{i,g_{|M|}}$.
\end{definition}

We will now see how to reduce the problem of finding an $\alpha$-MMS allocation
to the special case of ordered instances.

\begin{definition}
\label{defn:toOrd}
For the fair division instance $\Ical \defeq (N, M, v)$,
$\toOrdered(\Ical)$ is defined as the instance $(N, [|M|], \vhat)$, where
for each $i \in N$ and $j \in [|M|]$, $\vhat_{i,j}$ is
the $j\Th$ largest number in the multiset $\{v_{i,g} \mid g \in M\}$.
\end{definition}

In Theorem 3.2 of \cite{barman2020approximation}, it was shown that
the transformation $\toOrdered$ is $\alpha$-\emph{MMS-preserving}, i.e.,
for a fair division instance $\Ical$, given an $\alpha$-MMS allocation of $\toOrdered(\Ical)$,
we can compute an $\alpha$-MMS allocation of $\Ical$ in polynomial time.
(The proof is based on ideas by Bouveret and Lema\^itre \shortcite{bouveret2016characterizing}).

\subsection{Valid Reductions}
\label{sec:valid-redn}

We use a technique called \emph{valid reduction},
that helps us reduce a fair division instance to a smaller instance.
This technique has been implicitly used in
\cite{bouveret2016characterizing,kurokawa2016can,kurokawa2018fair,amanatidis2017approximation,ghodsi2018fair,garg2019approximating}
and explicitly used in \cite{garg2021improved}.

\begin{definition}[Valid reduction]
In a fair division instance $(N, M, v)$, suppose we give the goods $S$ to agent $i$.
Then we are left with a new instance $(N \setminus \{i\}, M \setminus S, v)$.
Such a transformation is called a \emph{valid} $\alpha$-\emph{reduction} if
both of these conditions hold:
\begin{tightenum}
\item $v_i(S) \ge \alpha\MMS_{v_i}^{|N|}(M)$.
\item $\MMS_{v_j}^{|N|-1}(M \setminus S) \ge \MMS_{v_j}^{|N|}(M)$ for all $j \in N \setminus \{i\}$.
\end{tightenum}
\end{definition}

Note that valid reductions are $\alpha$-MMS-preserving, i.e.,
if $A$ is an $\alpha$-MMS allocation of an instance obtained by performing a valid reduction,
then we can get an $\alpha$-MMS allocation of the original instance by
giving goods $S$ to agent $i$ and allocating the remaining goods as per $A$.
A valid reduction, therefore, helps us reduce the problem of computing an $\alpha$-MMS
allocation to a smaller instance.

We now describe four standard transformations, called \emph{reduction rules},
and show that they are valid reductions.

\begin{definition}[Reduction rules]
\label{defn:redn-rules}
Consider an ordered fair division instance $(N, M, v)$, where $M \defeq \{g_1, \ldots, g_{|M|}\}$
and $v_{i,g_1} \ge \ldots \ge v_{i,g_{|M|}}$ for every agent $i$. Define
\begin{tightenum}
\item $S_1 \defeq \{g_1\}$.
\item $S_2 \defeq \{g_{|N|}, g_{|N|+1}\}$ if $|M| \ge |N|+1$, else $S_2 \defeq \emptyset$.
\item $S_3 \defeq \{g_{2|N|-1}, g_{2|N|}, g_{2|N|+1}\}$ if $|M| \ge 2|N|+1$, else $S_3 \defeq \emptyset$.
\item $S_4 \defeq \{g_1, g_{2|N|+1}\}$ if $|M| \ge 2|N|+1$, else $S_4 \defeq \emptyset$.
\end{tightenum}

Reduction rule $R_k(\alpha)$: If $v_i(S_k) \ge \alpha\MMS_i$ for some agent $i$,
then give $S_k$ to $i$.

A fair division instance is called $R_k(\alpha)$-\emph{irreducible} if $R_k(\alpha)$
cannot be applied, i.e., $v_i(S_k) < \alpha\MMS_i$ for every agent $i$
(otherwise it is called $R_k(\alpha)$-\emph{reducible}).
An instance is called \emph{totally}-$\alpha$-\emph{irreducible} if it is
$R_k(\alpha)$-irreducible for all $k \in [4]$.
\end{definition}

\begin{lemma}[Lemma~3.1 in \protect\cite{garg2021improved}]
For an ordered instance and for $\alpha \le 1$,
$R_1(\alpha)$, $R_2(\alpha)$, and $R_3(\alpha)$ are valid $\alpha$-reductions.
For an ordered instance and for $\alpha \le 3/4$,
if the instance is $R_1(\alpha)$-irreducible and $R_3(\alpha)$-irreducible,
then $R_4(\alpha)$ is a valid $\alpha$-reduction.
\end{lemma}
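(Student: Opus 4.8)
The plan is to verify the two conditions in the definition of a valid $\alpha$-reduction for the transformation that gives $S_k$ to an agent $i$. The first condition, $v_i(S_k) \ge \alpha\MMS_i$, is precisely what makes the rule $R_k(\alpha)$ fire, so it holds by hypothesis for the agent $i$ we reduce on (and, for $R_4$, the irreducibility assumptions play no role here). All the work therefore goes into the second condition: for every remaining agent $j$ one must show $\MMS_{v_j}^{|N|-1}(M \setminus S_k) \ge \MMS_{v_j}^{|N|}(M)$. Fix such a $j$, put $n \defeq |N|$ and $\mu \defeq \MMS_{v_j}^n(M)$, and fix an $n$-partition $X = (X_1, \ldots, X_n)$ of $M$ with $v_j(X_s) \ge \mu$ for all $s$. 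The aim is to build from $X$ an $(n-1)$-partition of $M \setminus S_k$ whose parts all have value at least $\mu$, using only the ordering property ($v_j(g_a) \ge v_j(g_b)$ whenever $a \le b$) and monotonicity of $v_j$. Two moves are available. \emph{Delete}: if some bundle of $X$ contains all of $S_k$, throw that bundle away and scatter its remaining goods arbitrarily among the other $n-1$ bundles, which then partition $M \setminus S_k$ with each part of value $\ge \mu$. \emph{Merge}: replace two bundles $X_p, X_q$ by the single part $(X_p \cup X_q) \setminus S_k$, of value at least $2\mu - v_j(S_k)$, hence at least $\mu$ whenever $v_j(S_k) \le \mu$.

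For $R_1$, the good $g_1$ lies in some bundle, so \emph{Delete} applies immediately. For $R_2$ (resp.\ $R_3$), the $n+1$ goods $g_1, \ldots, g_{n+1}$ (resp.\ the $2n+1$ goods $g_1, \ldots, g_{2n+1}$) occupy $n$ bundles, so by pigeonhole some bundle $X_t$ contains at least two (resp.\ at least three) of them, say $g_{a_1}, g_{a_2}$ with $a_1 < a_2$ (resp.\ $g_{a_1}, g_{a_2}, g_{a_3}$ with $a_1 < a_2 < a_3$). Since $a_1 \le n$ and $a_2 \le n+1$ (resp.\ $a_1 \le 2n-1$, $a_2 \le 2n$, $a_3 \le 2n+1$), the ordering property makes these goods, item by item, at least as valuable as the elements of $S_2$ (resp.\ $S_3$) listed in index order. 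A routine exchange argument --- swap each $g_{a_\ell}$ out of $X_t$ for the matching element of $S_k$ sitting in some other bundle --- produces an $n$-partition, still with all parts of value $\ge \mu$ (every bundle that picks up one of the larger goods $g_{a_\ell}$ can only gain), in which $S_k$ is contained in a single bundle; \emph{Delete} then finishes the case, and coincidences among the goods involved only simplify the exchange.

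The rule $R_4$ is the only one that genuinely uses the extra hypotheses. Here $g_1$ and $g_{2n+1}$ may lie in different bundles of $X$ and there is no bundle we can force to contain both, so \emph{Delete} is unavailable and we are pushed to \emph{Merge}. The merge costs exactly the bound $v_j(S_4) = v_j(g_1) + v_j(g_{2n+1}) \le \mu$, which is available: $R_1(\alpha)$-irreducibility gives $v_j(g_1) < \alpha\mu$, while $R_3(\alpha)$-irreducibility together with the ordering gives $3 v_j(g_{2n+1}) \le v_j(g_{2n-1}) + v_j(g_{2n}) + v_j(g_{2n+1}) < \alpha\mu$, and combining these with $\alpha \le 3/4$ yields $v_j(S_4) < \frac{4}{3}\alpha\mu \le \mu$. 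So, with $X_p \ni g_1$ and $X_q \ni g_{2n+1}$: if $p \ne q$, \emph{Merge} $X_p$ and $X_q$ and leave the remaining $n-2$ bundles untouched; if $p = q$, just \emph{Delete} that bundle. In either case we obtain the desired $(n-1)$-partition of $M \setminus S_4$ with every part of value $\ge \mu$.

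The one step I expect to be the real obstacle is the handling of $R_4$: seeing that the correct move is to merge rather than delete, and that the merge is paid for exactly by the inequality $v_j(S_4) \le \mu$ --- which is why $R_1(\alpha)$- and $R_3(\alpha)$-irreducibility and the restriction $\alpha \le 3/4$ are precisely the hypotheses needed. The arguments for $R_1$, $R_2$, $R_3$ are elementary; the only point requiring a little care is making the exchange step in $R_2$ and $R_3$ robust to goods that play more than one role, which is standard in this literature.
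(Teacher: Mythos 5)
Your proof is correct and, for the part the paper actually proves ($R_4$, in the appendix), it follows essentially the same route: case-split on whether $g_1$ and $g_{2n+1}$ share a bundle, and pay for the merge with $v_j(S_4) < \frac{4}{3}\alpha\,\MMS_j \le \MMS_j$ derived from $R_1$- and $R_3$-irreducibility. The $R_1$--$R_3$ cases, which the paper delegates to Lemma~3.1 of Garg and Taki, are handled by you with the standard pigeonhole-plus-exchange argument, which is the same proof as in that reference.
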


\begin{lemma}
\label{thm:vr-upper-bounds}
Let $\Ical \defeq ([n], [m], v)$ be an ordered instance where
$v_{i,1} \ge \ldots \ge v_{i,m}$ for each agent $i$.
For any $k \in [3]$, if $\Ical$ is $R_k(\alpha)$-irreducible, then
for each agent $i$ and every good $j > (k-1)n$, we have $v_{i,j} < \alpha\MMS_i/k$.
\end{lemma}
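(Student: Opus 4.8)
The plan is to reduce everything to a single uniform observation: for each $k \in [3]$, the reduction set $S_k$ consists of exactly $k$ goods, and among them the one of largest index --- hence, by the ordering, the one of smallest value to any agent --- is the good $g_{(k-1)n+1}$. Indeed, $S_1 = \{g_1\}$ has its lone good at index $1 = 0\cdot n + 1$; $S_2 = \{g_n, g_{n+1}\}$ has largest index $n+1 = 1\cdot n + 1$; and $S_3 = \{g_{2n-1}, g_{2n}, g_{2n+1}\}$ has largest index $2n+1 = 2\cdot n + 1$. So in each case $|S_k| = k$ and $\max\{\,\text{index of } g : g \in S_k\,\} = (k-1)n+1$.

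First I would dispose of the degenerate case where $S_k = \emptyset$. By the definitions in \cref{defn:redn-rules}, this occurs only when $k \in \{2,3\}$ and $m \le (k-1)n$; but then there is no good $j$ with $j > (k-1)n$ (as $j \le m$), so the claimed inequality holds vacuously. Hence we may assume $S_k \ne \emptyset$, so that $S_k$ contains precisely $k$ goods, each of index at most $(k-1)n+1$.

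Now for a fixed agent $i$: since $\Ical$ is ordered with $v_{i,1} \ge \cdots \ge v_{i,m}$, every good $g \in S_k$ has index at most $(k-1)n+1$ and therefore value $v_{i,g} \ge v_{i,(k-1)n+1}$. Summing over the $k$ goods of $S_k$ gives $v_i(S_k) \ge k\cdot v_{i,(k-1)n+1}$. On the other hand, $R_k(\alpha)$-irreducibility means exactly that $v_i(S_k) < \alpha\MMS_i$. Combining, $k\cdot v_{i,(k-1)n+1} < \alpha\MMS_i$, i.e.\ $v_{i,(k-1)n+1} < \alpha\MMS_i/k$. Finally, any good $j$ with $j > (k-1)n$ satisfies $j \ge (k-1)n+1$, so the ordering yields $v_{i,j} \le v_{i,(k-1)n+1} < \alpha\MMS_i/k$, as required.

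I do not expect a real obstacle here: the argument is a one-line averaging estimate once the indices are read off correctly. The only points demanding care are the bookkeeping that the smallest-value good of $S_k$ sits at index exactly $(k-1)n+1$ (so that the bound $v_i(S_k) \ge k\,v_{i,(k-1)n+1}$ is legitimate) and the handling of the boundary case $S_k = \emptyset$, where the statement is vacuous rather than in need of proof.
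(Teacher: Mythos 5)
Your proof is correct and follows essentially the same argument as the paper: bound $v_i(S_k)$ below by $k$ times the value of its smallest (largest-index) good, which sits at index $(k-1)n+1$, and combine with the irreducibility inequality $v_i(S_k) < \alpha\MMS_i$. The only difference is that you explicitly dispose of the vacuous case $S_k = \emptyset$, which the paper's proof leaves implicit.
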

\begin{proof}
Since $\Ical$ is $R_k(\alpha)$-irreducible, we get $v_i(S_k) < \alpha\MMS_i$ for each agent $i$.
Let $t \defeq (k-1)n + 1$. Then
\[ \alpha\MMS_i > v_i(S_k) = \sum_{g \in S_k} v_{i,g}
\ge |S_k|\min_{g \in S_k} v_{i,g} = k v_{i,t}. \]
Hence, $\forall j \ge t$, we have $v_{i,j} \le v_{i,t} < \alpha\MMS_i/k$.
\end{proof}

\begin{lemma}
\label{thm:r1-irr-2n}
If an ordered instance $(N, M, v)$ is $R_1(\alpha)$-irreducible
for any $\alpha \le 1$, then $|M| \ge 2|N|$.
\end{lemma}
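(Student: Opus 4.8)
The plan is to fix an arbitrary agent $i$ and count goods in an optimal MMS partition for $i$. First I would unpack the hypothesis: $R_1(\alpha)$-irreducibility means $v_i(S_1) = v_{i,g_1} < \alpha\MMS_i$ for every agent, and since $\alpha \le 1$ this gives $v_{i,g_1} < \MMS_i$. Because the instance is ordered, $g_1$ is a most valuable good for $i$, so in fact $v_{i,g} < \MMS_i$ for \emph{every} good $g \in M$. As a preliminary remark I would note that this already forces $\MMS_i > 0$: if $\MMS_i = 0$ then $v_{i,g_1} < 0$, which is impossible for nonnegative valuations (equivalently, $R_1(\alpha)$ would fire on $i$); in particular $M \ne \emptyset$, so the argument below is non-vacuous.

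Next I would pick a partition $X = (X_1, \ldots, X_{|N|}) \in \Pi_{|N|}(M)$ witnessing $\MMS_i = \min_{j} v_i(X_j)$, so that $v_i(X_j) \ge \MMS_i$ for all $j \in [|N|]$. The key step is to show $|X_j| \ge 2$ for each $j$: an empty bundle has value $0 < \MMS_i$, and a singleton bundle $\{g\}$ has value $v_i(g) < \MMS_i$ by the previous paragraph; in either case $v_i(X_j) < \MMS_i$, contradicting $v_i(X_j) \ge \MMS_i$. Summing over the $|N|$ bundles then yields $|M| = \sum_{j=1}^{|N|} |X_j| \ge 2|N|$, which is the claim.

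I do not expect a real obstacle here — the statement follows immediately from orderedness, the definition of $R_1(\alpha)$-irreducibility, and the definition of $\MMS$. The only point requiring a moment's care is the degenerate case $\MMS_i = 0$ (equivalently $M = \emptyset$), which is why I would dispose of it up front before invoking the MMS partition.
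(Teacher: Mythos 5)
Your proof is correct and is essentially the paper's argument in contrapositive form: both use an MMS partition of a fixed agent and the observation that a bundle with at most one good would make $R_1(\alpha)$ applicable, forcing every bundle to contain at least two goods. Your explicit handling of the degenerate case $\MMS_i = 0$ (and empty bundles) is a minor extra care the paper glosses over, but the substance is the same.
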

\begin{proof}
Assume $|M| < 2|N|$. Pick any agent $i \in N$. Let $P$ be an MMS partition of agent $i$.
Then some bundle $P_j$ contains a single good $\{g\}$.
Then $v_{i,g} = v_i(P_j) \ge \MMS_i$.
Hence, the instance is not $R_1(\alpha)$-irreducible for any $\alpha \le 1$.
This is a contradiction. Hence, $|M| \ge 2|N|$.
\end{proof}

We would like to convert fair division instances into totally-$\alpha$-irreducible instances.
This can be done using a very simple algorithm, which we call $\reduce_{\alpha}$.
This algorithm works for $\alpha \le 3/4$.
It takes an ordered fair division instance as input and repeatedly applies
the reduction rules $R_1(\alpha)$, $R_2(\alpha)$, $R_3(\alpha)$, and $R_4(\alpha)$
until the instance becomes totally-$\alpha$-irreducible.
The reduction rules can be applied in arbitrary order, except that
$R_4(\alpha)$ is only applied when $R_1(\alpha)$ and $R_3(\alpha)$ are inapplicable.

Note that the application of reduction rules changes the number of agents and goods,
which affects subsequent reduction rules.
More precisely, the sets $S_1$, $S_2$, $S_3$, $S_4$ (as defined in \cref{defn:redn-rules})
can change after applying a reduction rule.
So, for example, it is possible that an instance is $R_2(\alpha)$-irreducible,
but after applying $R_3(\alpha)$, the resulting instance is $R_2(\alpha)$-reducible.

\subsection{Normalized Instance}
\label{sec:normalized}

\begin{definition}[Normalized instance]
A fair division instance $(N, M, v)$ is called \emph{normalized} if for every agent $i$,
there is a partition $P^{(i)} \defeq (P^{(i)}_1, \ldots, P^{(i)}_{|N|})$ of $M$
such that $v_i(P^{(i)}_j) = 1$ $\forall j \in N$.
\end{definition}

Note that for a normalized instance, every agent's MMS value is 1.
Furthermore, for each agent $i$ and for every MMS partition $Q$ of agent $i$,
we have $v_i(Q_j) = 1$ $\forall j \in N$,
since each partition has total value at least 1
and $\sum_{j \in N} v_i(Q_j) = v_i(M) = \sum_{j \in N} v_i(P^{(i)}_j) = |N|$.

The algorithm $\normalizeHyp$ (c.f.~\cref{algo:normalize})
converts a fair division instance to a normalized instance.

\begin{algorithm}[tb]
\caption{$\normalize((N, M, v))$}
\label{algo:normalize}
\begin{algorithmic}[1]
\For{$i \in N$}
    \State Compute agent $i$'s MMS partition $P^{(i)}$.
    \State $\forall j \in N$, $\forall g \in P^{(i)}_j$,
        let $\vhat_{i,g} \defeq v_{i,g} / v_i(P^{(i)}_j)$.
\EndFor
\State \Return $(N, M, \vhat)$.
\end{algorithmic}
\end{algorithm}

\begin{lemma}
\label{thm:normalize}
Let $(N, M, \vhat) = \normalize((N, M, v))$. Then for any allocation $A$,
$v_i(A_i) \ge \vhat_i(A_i)\MMS_{v_i}^{|N|}(M)$ for all $i \in N$.
\end{lemma}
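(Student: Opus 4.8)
The plan is to unwind the definition of $\normalize$ and use the fact that its third line rescales $v_{i,g}$ by a factor tied to the MMS partition of agent $i$. First I would fix an agent $i$ and recall that $\vhat_{i,g} = v_{i,g}/v_i(P^{(i)}_{j(g)})$, where $j(g)$ is the index of the bundle of $P^{(i)}$ containing $g$. Since every bundle of an MMS partition has value at least $\MMS_{v_i}^{|N|}(M)$, we have $v_i(P^{(i)}_j) \ge \MMS_{v_i}^{|N|}(M)$ for all $j$, hence $\vhat_{i,g} \le v_{i,g}/\MMS_{v_i}^{|N|}(M)$ for every good $g$, i.e.\ $v_{i,g} \ge \vhat_{i,g}\MMS_{v_i}^{|N|}(M)$.

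Summing this pointwise inequality over all $g \in A_i$ and using additivity of $v_i$ and $\vhat_i$ gives
\[
v_i(A_i) = \sum_{g \in A_i} v_{i,g} \ge \sum_{g \in A_i} \vhat_{i,g}\MMS_{v_i}^{|N|}(M) = \vhat_i(A_i)\MMS_{v_i}^{|N|}(M),
\]
which is exactly the claimed bound. Since $i \in N$ was arbitrary, this holds for all agents, completing the proof.

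I do not expect any genuine obstacle here: the only thing to be careful about is the direction of the scaling, namely that dividing by $v_i(P^{(i)}_j) \ge \MMS_{v_i}^{|N|}(M)$ can only decrease the value, so $\vhat_{i,g} \le v_{i,g}/\MMS_{v_i}^{|N|}(M)$ and not the reverse. One should also note $\MMS_{v_i}^{|N|}(M) \ge 0$ so that multiplying through the inequality preserves its direction (the valuations are non-negative, so this is automatic). Everything else is just additivity, so the argument is essentially a two-line computation once the rescaling direction is pinned down.
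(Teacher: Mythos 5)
Your proof is correct and follows essentially the same route as the paper's: both bound $\vhat_{i,g} = v_{i,g}/v_i(P^{(i)}_j) \le v_{i,g}/\MMS_{v_i}^{|N|}(M)$ using that every bundle of the MMS partition has value at least the MMS value, then sum over $A_i$ by additivity. No gaps.
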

\begin{proof}
Let $\beta_i \defeq \MMS_{v_i}^n(M)$.
For any good $g \in P^{(i)}_j$,
$\vhat_{i,g} = v_{i,g} / v_i(P^{(i)}_j) \le v_{i,g} / \beta_i$.
Hence, $v_{i,g} \ge \vhat_{i,g}\beta_i$.
Hence, $v_i(A_i) \ge \vhat_i(A_i)\beta_i$.
\end{proof}

\Cref{thm:normalize} implies that $\normalize$ is $\alpha$-MMS-preserving,
since if $A$ is an $\alpha$-MMS allocation for the normalized instance $(N, M, \vhat)$,
then $A$ is also an $\alpha$-MMS allocation for the original instance $(N, M, v)$.

\section{\boldmath Simple Proof for Existence of \texorpdfstring{$3/4$-\mms}{3/4-MMS}~Allocations}
\label{sec:simple}

\newcommand{\ifBetter}[2]{#2}
\newcommand{\addIfBetterP}[2]{\ifBetter{\left(#1#2\right)}{#1}}
\newcommand{\BFn}{\ifBetter{n'}{n}}  %
\newcommand{\secP}{\ifBetter{B}{S}}  %

We give an algorithm, called $\approxMMSHyp$ (c.f.~\cref{algo:gt}),
that takes as inputs a fair division instance and an approximation factor $\alpha$,
and outputs an $\alpha$-MMS allocation. It works in three major steps:

\begin{algorithm}[tb]
\caption{$\approxMMS(\Ical, \alpha)$
\\ \textbf{Input:} Fair division instance $\Ical = (N, M, v)$ and approximation factor $\alpha$.
\\ \textbf{Output:} Allocation $A = (A_1, \ldots, A_n)$.}
\label{algo:gt}
\begin{algorithmic}[1]
\State $\Icalhat = \toOrderedHyp(\normalizeHyp(\reduceHyp_{\alpha}(\toOrderedHyp(\Ical))))$
\State $\Ahat = \bagFillHyp(\Icalhat, \alpha)$.
\State Use $\Ahat$ to compute an allocation $A$ for $\Ical$ with the same MMS approximation as $\Ahat$.
(This can be done since \cref{sec:ordered,sec:valid-redn,sec:normalized} show that
$\toOrdered$, $\reduce_{\alpha}$, and $\normalize$ are $\alpha$-MMS-preserving.)
\State \Return $A$
\end{algorithmic}
\end{algorithm}

\begin{enumerate}
\item Reduce the problem of finding an $\alpha$-MMS allocation to the special case
    where the instance is Ordered, Normalized, and totally-$\alpha$-Irreducible (ONI).
\item Compute an $\alpha$-MMS allocation for this special case
    using the $\bagFillHyp$ algorithm (c.f.~\cref{algo:bagFill}).
\item Convert this allocation for the special case to an
    allocation for the original fair division instance.
\end{enumerate}

We describe steps 1 and 3 in \cref{sec:simple:oni-reduce} and step 2 in \cref{sec:simple:bagFill}.
In this section, we only consider the case where $\alpha = 3/4$.
In \cref{sec:betterSummary}, we slightly modify $\approxMMS$ so that
it works for $\alpha = \frac{3}{4} + \imp$.
Our algorithm $\approxMMS$ is almost the same as the algorithm of Garg and Taki \shortcite{garg2021improved}.
The only difference is that, unlike them, we ensure that the output of step 1 is normalized.

\subsection{Obtaining an Ordered Normalized Irreducible (ONI) Instance}
\label{sec:simple:oni-reduce}

\begin{lemma}
\label{thm:oni-reduce}
Let $\Ical$ be a fair division instance.
Let $\Icalhat \defeq \toOrderedHyp(\normalizeHyp(\reduceHyp_{3/4}(\toOrderedHyp(\Ical))))$.
Then $\Icalhat$ is ordered, normalized, and totally-$3/4$-irreducible.
Furthermore, the transformation of $\Ical$ to $\Icalhat$ is $3/4$-MMS-preserving,
i.e., a $3/4$-MMS allocation of $\Icalhat$ can be used to obtain
a $3/4$-MMS allocation of $\Ical$.
\end{lemma}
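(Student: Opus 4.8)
The plan is to verify the four assertions separately: that $\Icalhat$ is (i) ordered, (ii) normalized, (iii) totally-$3/4$-irreducible, and (iv) that the composite transformation $\Ical \mapsto \Icalhat$ is $3/4$-MMS-preserving. Items (i) and (iv) are immediate. For (i), $\Icalhat$ is the output of $\toOrdered$, and by definition that output has goods $1, \dots, |M|$ with agent $i$'s value at position $j$ equal to the $j$-th largest value of agent $i$, so the ordering $1, 2, \dots$ witnesses orderedness for every agent simultaneously. For (iv), $\toOrdered$ is $\alpha$-MMS-preserving by Theorem~3.2 of \cite{barman2020approximation}, $\reduce_{\alpha}$ is $\alpha$-MMS-preserving because every step it performs is a valid $\alpha$-reduction and valid reductions are $\alpha$-MMS-preserving (\cref{sec:valid-redn}), and $\normalize$ is $\alpha$-MMS-preserving (\cref{thm:normalize} and the remark following it); a $3/4$-MMS allocation of $\Icalhat$ can then be pulled back through the four transformations, from the outermost inward, to a $3/4$-MMS allocation of $\Ical$.

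The real content of the lemma is that the final application of $\toOrdered$ does not undo normalization or irreducibility. Write $J \defeq (N, M, v') \defeq \reduce_{3/4}(\toOrdered(\Ical))$ and $K \defeq \normalize(J)$, with $\vtild$ denoting the valuation of $\Icalhat = \toOrdered(K)$. Here $J$ is ordered (each reduction rule deletes goods and an agent from a commonly ordered instance, which preserves common orderedness), say with common ordering $g_1, \dots, g_m$, and $J$ is totally-$3/4$-irreducible by definition of $\reduce_{3/4}$; moreover neither $\normalize$ nor $\toOrdered$ changes the agent set or the number of goods, so $\Icalhat$ and $J$ have the same $|N|$ and $|M|$. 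For (ii): $K$ is normalized by construction, since the partition $P^{(i)}$ used by $\normalize$ satisfies $\vhat_i(P^{(i)}_j) = 1$ for all $j$, so every agent's MMS in $K$ is $1$ (\cref{sec:normalized}). Passing from a valuation to its sorted version is, for a single agent, merely a relabeling of the goods by rank, hence preserves that agent's MMS; so every agent's MMS in $\Icalhat$ is $1$ as well, while $\vtild_i([m]) = \vhat_i(M) = |N|$ (the multiset of each agent's values is unchanged). Since the $|N|$ parts of an MMS-optimal partition in $\Icalhat$ each have value at least $1$ and sum to $|N|$, each has value exactly $1$, so $\Icalhat$ is normalized.

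For (iii), I would bound each reduction set in $\Icalhat$ using the corresponding one in $J$. Fix an agent $i$; since $J$ is $R_1(3/4)$-irreducible we have $\MMS_{v'_i}^{|N|}(M) > 0$ (else $v'_i(S_1) < 0$), so $\normalize$ is well defined and, as its proof shows, $\vhat_{i,g} \le v'_{i,g} / \MMS_{v'_i}^{|N|}(M)$ for every good $g$. This pointwise bound survives sorting: the $j$-th largest value of $\vhat_i$ is at most the $j$-th largest value of $v'_i$ divided by $\MMS_{v'_i}^{|N|}(M)$, since the $j$-th largest entry of a vector $(a_g)_g$ equals $\max_{|T| = j} \min_{g \in T} a_g$, which is monotone in the $a_g$. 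As $J$ is ordered, the $j$-th largest value of $v'_i$ equals $v'_{i, g_j}$. Each reduction set $S_k$ of \cref{defn:redn-rules} is a set of rank positions determined only by $|N|$ and $|M|$, which coincide for $J$ and $\Icalhat$; summing the per-position bounds over those positions yields
\[ \vtild_i(S_k) \ \le\ \frac{v'_i(S_k)}{\MMS_{v'_i}^{|N|}(M)} \ <\ \frac{3}{4} \ =\ \frac{3}{4}\, \MMS_{\vtild_i}^{|N|}([m]), \]
using $R_k(3/4)$-irreducibility of $J$ for the strict inequality and normalization of $\Icalhat$ for the last equality. (If $S_k = \emptyset$ this is trivial, the MMS being $1 > 0$; and the same argument covers $k = 4$, where only the inequality $\vtild_i(S_4) < (3/4)\MMS_{\vtild_i}^{|N|}([m])$ is needed, not the validity of $R_4$ as a reduction.) Hence $\Icalhat$ is $R_k(3/4)$-irreducible for every $k \in [4]$, i.e., totally-$3/4$-irreducible.

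The one point requiring care — and the reason the second $\toOrdered$ is present in the definition of $\Icalhat$ at all — is that $\normalize$ rescales an agent's goods by several distinct factors (one per part of $P^{(i)}$), so it can destroy the common ordering of $J$; one therefore cannot just assert that $\normalize$ preserves irreducibility and must re-sort and re-check. The argument hinges on \cref{thm:normalize} supplying an \emph{inequality} $\vhat_{i,g} \le v'_{i,g}/\MMS_{v'_i}^{|N|}(M)$ rather than an equality, together with the elementary fact that such an inequality is preserved under sorting, so that every reduction set in the re-sorted instance remains dominated by the corresponding one in $J$.
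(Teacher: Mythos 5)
Your proof is correct and follows essentially the same route as the paper: reduction rules preserve orderedness, \cref{thm:normalize} gives the per-good ratio bound that keeps the instance totally-$3/4$-irreducible, $\toOrdered$ preserves normalization because it only permutes each agent's multiset of values, and the composition of $3/4$-MMS-preserving maps is $3/4$-MMS-preserving. The only difference is that you spell out the step the paper leaves implicit — that the pointwise bound $\vhat_{i,g} \le v'_{i,g}/\MMS_{v'_i}^{|N|}(M)$ survives re-sorting because order statistics are monotone and the sets $S_k$ are determined by rank positions — which is a worthwhile clarification but not a different argument.
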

\begin{proof}
Let $\Ical^{(1)} \defeq \toOrdered(\Ical)$.
Then $\Ical^{(2)} \defeq \reduce_{3/4}(\Ical^{(1)})$ is totally-$3/4$-irreducible and ordered,
since the application of reduction rules preserves orderedness.

Let $\Ical^{(3)} \defeq \normalize(\Ical^{(2)})$.
By \cref{thm:normalize}, $\normalize$ does not increase the ratio of a good's value to the MMS value.
Hence, $\Icalhat$ is totally-$3/4$-irreducible.
$\Icalhat$ is also normalized, since for each agent, $\toOrdered$ only changes
the identities of the goods, but the (multi-)set of values of the goods remains the same.
Hence, $\Icalhat$ is ordered, normalized, and totally-$3/4$-irreducible.

Since $\toOrdered$, $\reduce_{3/4}$, and $\normalize$ are $3/4$-MMS-preserving operations,
their composition is also $3/4$-MMS-preserving.
\end{proof}

The order of operations is important here,
as well as the need to call $\toOrdered$ twice,
since $\reduce$ requires the input to be ordered,
$\reduce$ may not preserve normalizedness,
and $\normalize$ may not preserve orderedness.

Garg and Taki \shortcite{garg2021improved} transform the instance as
$\reduce_{3/4}(\toOrdered(\Ical))$, since they do not need the input to be normalized.

\subsection{\boldmath \texorpdfstring{$3/4$}{3/4}-MMS Allocation of ONI Instance}
\label{sec:simple:bagFill}

Let $([n], [m], v)$ be a fair division instance that is
ordered, normalized, and totally-$3/4$-irreducible (ONI).
\WLoG{}, assume that $v_{i,1} \ge v_{i,2} \ge \ldots \ge v_{i,m}$ for each agent $i$.

Our algorithm, called $\bagFillHyp(\Ical, \alpha)$, creates $n$ bags,
where the $j\Th$ bag contains goods $\{j, 2n + 1 - j\}$.
(To create bags in this way, there must be at least $2n$ goods.
This is ensured by \cref{thm:r1-irr-2n}.)
It then repeatedly adds a good to an arbitrary bag,
and as soon as some agent $i$ values a bag more than $\alpha$,
that bag is allocated to $i$.
The algorithm terminates when all agents have been allocated a bag.
See \cref{algo:bagFill} for a more precise description.
(In this section, we set $\alpha = 3/4$.
In \cref{sec:betterSummary}, we set $\alpha = \frac{3}{4} + \imp$.)
$\bagFill$ computes a partial allocation, i.e.,
some goods may remain unallocated. But that can be easily fixed
by arbitrarily allocating those goods among the agents.

\begin{algorithm}[tb]
\caption{$\bagFill(\Ical, \alpha)$
\\ \textbf{Input:} Ordered instance $\Ical = ([n], [m], v)$ with $m \ge 2n$ and approximation factor $\alpha$.
\\ \textbf{Output:} (Partial) allocation $A = (A_1, \ldots, A_n)$.
}
\label{algo:bagFill}
\begin{algorithmic}[1]
\For{$k \in [n]$}
    \State $B_k = \{k, 2n+1-k\}$.
\EndFor
\State $U_G = [m] \setminus [2n]$  \Comment{unassigned goods}
\State $U_A = [n]$  \Comment{unsatisfied agents}
\State $U_B = [n]$  \Comment{unassigned bags}
\While{$U_A \neq \emptyset$}
    \Comment{loop invariant: $|U_A| = |U_B|$}
    \If{$\exists i \in U_A$, $\exists k \in U_B$, such that $v_i(B_k) \ge \alpha$}
        \LineComment{assign the $k\Th$ bag to agent $i$:}
        \State $A_i = B_k$
        \State $U_A = U_A \setminus \{i\}$
        \State $U_B = U_B \setminus \{k\}$
    \ElsIf{$U_G \neq \emptyset$}
        \State $g$ = arbitrary good in $U_G$
        \State $k$ = arbitrary bag in $U_B$
        \LineComment{assign $g$ to the $k\Th$ bag:}
        \State $B_k = B_k \cup \{g\}$.
        \State $U_G = U_G \setminus \{g\}$
    \Else
        \State \label{alg-line:bagFill:error}\textbf{error}: we ran out of goods.
            \Return \texttt{null}.
    \EndIf
\EndWhile
\State \Return $(A_1, \ldots, A_n)$
\end{algorithmic}
\end{algorithm}

$\bagFill(\Ical, \alpha)$ allocates a bag $B_k$ to agent $i$ only if $v_i(B_k) \ge \alpha$.
Hence, to prove that $\bagFill(\Ical, 3/4)$ returns a $3/4$-MMS allocation,
it suffices to show that $\bagFill$ terminates successfully,
i.e., line \ref{alg-line:bagFill:error} is never executed.

For $k \in [n]$, let $B_k \defeq \{k, 2n+1-k\}$ be the initial contents of the $k\Th$ bag
and $B'_k$ be the $k\Th$ bag's contents after $\bagFill$ terminates.
We consider two groups of agents.
Let $N^1$ be the set of agents who value all the initial bags at most $1$.
Formally, $N^1 \defeq \{i \in [n] \mid \forall k \in [n], v_i(B_k) \le 1\}$.
Let $N^2 \defeq [n] \setminus N^1 = \{i \in [n] \mid \exists k \in [n]: v_i(B_k) > 1\}$
be the rest of the agents.

Let $U_A$ be the set of agents that did not receive a bag when $\bagFill$ terminated.
Note that $U_A$ is non-empty iff we execute line \ref{alg-line:bagFill:error}.
We first show that all agents in $N^1$ receive a bag, i.e., $U_A \cap N^1 = \emptyset$.
Then we show that $U_A \cap N^2 = \emptyset$.
Together, these facts establish that $\bagFill$ terminates successfully,
and hence its output is $3/4$-MMS.

\begin{lemma}
\label{thm:pair-bound}
Let $([n], [m], v)$ be an ordered and normalized fair division instance.
For all $k \in [n]$ and agent $i \in [n]$, if $v_{i,k} + v_{i,2n-k+1} > 1$,
then $v_{i,2n-k+1} \le 1/3$ and $v_{i, k} > 2/3$.
\end{lemma}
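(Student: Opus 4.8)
The plan is to use the ordered structure of the instance together with the facts, established in \cref{thm:vr-upper-bounds}, that total-$3/4$-irreducibility caps the sizes of the goods $g_k$ for various ranges of $k$. The key observation is that the pair $\{k, 2n-k+1\}$ consists of one good among the first $n$ goods (good $k$, since $k \le n$) and one good among goods $n+1, \ldots, 2n$ (good $2n-k+1$, since $1 \le k \le n$ gives $n+1 \le 2n-k+1 \le 2n$). I will apply \cref{thm:vr-upper-bounds} to each of these two ranges.

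\textbf{First}, since the instance is $R_2(3/4)$-irreducible, \cref{thm:vr-upper-bounds} with $k=2$ tells us that for every good $j > n$ we have $v_{i,j} < (3/4)\MMS_i/2 = 3/8$ (using that the instance is normalized, so $\MMS_i = 1$, this is $3/8$). In particular $v_{i,2n-k+1} < 3/8$. Hmm, but the claim asks for the sharper bound $v_{i,2n-k+1} \le 1/3$. So I will instead argue directly: if $v_{i,2n-k+1} > 1/3$, then because the instance is ordered, goods $g_1, \ldots, g_{2n-k+1}$ all have value $> 1/3$. In an MMS partition of agent $i$ into $n$ bundles (each of value exactly $1$, by normalizedness), no bundle can contain three goods each of value $> 1/3$ — actually that needs care, value $>1/3$ gives sum $>1$ which is allowed only if it is strictly more than the bundle value $1$, contradiction. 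So each of the $n$ bundles contains at most two of these $2n-k+1$ high-value goods, giving $2n-k+1 \le 2n$, i.e. $k \ge 1$, which is no contradiction yet. I need to be more careful: the right count is that if we had $2n+1$ goods of value $> 1/3$ we'd get a contradiction, so the threshold is subtler. Let me reconsider — the cleaner route is genuinely via the reduction rules.

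\textbf{The correct approach}: For $v_{i,2n-k+1} \le 1/3$, note $2n-k+1 \ge n+1$, and apply $R_2$-irreducibility: $v_i(\{g_n, g_{n+1}\}) < 3/4$, hence $2 v_{i,g_{n+1}} \le v_{i,g_n} + v_{i,g_{n+1}} < 3/4$... that gives $< 3/8$, still not $1/3$. So the bound $1/3$ must come from combining with the hypothesis $v_{i,k} + v_{i,2n-k+1} > 1$. Indeed: if $v_{i,2n-k+1} > 1/3$ then, since the instance is ordered, all of $g_k, g_{k+1}, \ldots, g_{2n-k+1}$ have value $> 1/3$ (there are $2n-2k+2$ of them, but I also should look at $g_1,\ldots,g_k$ which are even larger). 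Actually all of $g_1, \ldots, g_{2n-k+1}$ have value $> 1/3$; that is $2n-k+1 \ge n+1$ goods each exceeding $1/3$. In agent $i$'s MMS partition ($n$ bundles of value $1$ each), each bundle holds at most two such goods, so at most $2n$ of them, consistent — no contradiction. I must instead use $R_1$-irreducibility ($v_{i,1} < 3/4$) or $R_3$ to pin things down. The honest statement: from $v_{i,k}+v_{i,2n-k+1}>1$ and $v_{i,k} \le v_{i,1} < 3/4$ we immediately get $v_{i,2n-k+1} > 1 - 3/4 = 1/4$; and symmetrically $v_{i,k} > 1 - v_{i,2n-k+1}$. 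To reach $1/3$ I will use that $v_{i,2n-k+1} < 3/8$ (from $R_2$-irreducibility as above) combined with a second application: consider $g_{2n-1}, g_{2n}, g_{2n+1}$ or use that three copies of value $\ge v_{i,2n-k+1}$ summing past a bundle...

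\textbf{So the structure of the proof I will write is}: Assume $v_{i,k} + v_{i,2n-k+1} > 1$. For the second conclusion, $v_{i,k} > 2/3$: suppose not, $v_{i,k} \le 2/3$; then $v_{i,2n-k+1} > 1/3$; but $v_{i,2n-k+1} \le v_{i,k} \le 2/3$ (ordering), so both goods $g_k$ and $g_{2n-k+1}$ lie in $(1/3, 2/3]$, and by ordering every good $g_1, \ldots, g_{2n-k+1}$ lies in this range or is larger. Now in agent $i$'s MMS partition, a bundle of value exactly $1$ containing a good of value $> 2/3$ can contain no good of value $> 1/3$ besides possibly... this is where I count: the $2n-k+1 \ge n+1$ goods of value $> 1/3$ must be distributed so that any bundle containing one of value in $(1/2, 1]$ holds only it among these, etc. The main obstacle, which I will handle by careful case analysis, is precisely turning "$v_{i,k}+v_{i,2n-k+1}>1$" into these two one-sided bounds using only orderedness, normalizedness, and total-$3/4$-irreducibility; I expect the cleanest argument runs: $v_{i,2n-k+1} \le 1/3$ follows because otherwise $g_k, g_{2n-k+1}$ and some third good of value $> 1/3$ (which exists since $2n-k+1 \ge n+1 \ge 4$ when $n \ge 3$, wait I only have two specific goods) — ultimately I will derive it from $R_2$- and $R_3$-irreducibility bounding $v_{i,2n-k+1}$, then note that if $v_{i,2n-k+1} \in (1/4, 1/3]$ were possible the hypothesis forces $v_{i,k} > 2/3$ directly, giving the second conclusion, while $v_{i,2n-k+1} > 1/3$ is ruled out by a bundle-counting contradiction in the MMS partition. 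I will present these two contradictions explicitly as the heart of the proof.
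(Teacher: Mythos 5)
There is a genuine gap: you never establish the core claim $v_{i,2n-k+1} \le 1/3$. Your own counting attempt (``$2n-k+1 \ge n+1$ goods of value $>1/3$, at most two per bundle, so at most $2n$ of them --- consistent, no contradiction'') correctly diagnoses that the naive count fails, but you then fall back on $R_1$-, $R_2$-, and $R_3$-irreducibility, which (a) are not hypotheses of this lemma --- it assumes only \emph{ordered} and \emph{normalized} --- and (b) are in any case insufficient, since $R_2$-irreducibility only yields $v_{i,2n-k+1} < 3/8 > 1/3$. The proposal ends by promising a ``bundle-counting contradiction'' that is never exhibited.

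The missing idea is to exploit the \emph{pairing} structure of the hypothesis inside agent $i$'s MMS partition $P$, whose bundles all have value exactly $1$ by normalization. By orderedness, for every $j \in [k]$ and $j' \in [2n-k+1]$ we have $v_{i,j} + v_{i,j'} \ge v_{i,k} + v_{i,2n-k+1} > 1$, so no such pair can share a bundle. In particular the goods $1,\dots,k$ occupy $k$ \emph{distinct} bundles, and each of those $k$ bundles contains \emph{no other} good from $[2n-k+1]$ --- this is the exclusion your count of ``at most two per bundle'' misses. Hence the $2n-2k+1$ goods $\{k+1,\dots,2n-k+1\}$ all lie in the remaining $n-k$ bundles, and since $2n-2k+1 > 2(n-k)$, pigeonhole places three of them in a single bundle $B$ with $v_i(B) = 1$; the smallest of the three has value at most $1/3$ and, by orderedness, at least $v_{i,2n-k+1}$. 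This gives $v_{i,2n-k+1} \le 1/3$, and then $v_{i,k} > 2/3$ follows from the hypothesis, which is the one step of your outline that was already correct.
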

\begin{proof}
It suffices to prove $v_{i,2n-k+1} \le 1/3$ and then $v_{i, k} > 2/3$ follows.
Let $P = (P_1, \ldots, P_n)$ be an MMS partition of agent $i$.
For $j \in [k]$ and $j' \in [2n+1-k]$,
$v_{i,j} + v_{i,j'} \ge v_{i,k} + v_{i,2n+1-k} > 1$, since the instance is ordered.
Furthermore, $j$ and $j'$ cannot be in the same bundle in $P$, since the instance is normalized.
In particular, no two goods from $[k]$ are in the same bundle in $P$.
Hence, assume \wLoG{} that $j \in P_j$ for all $j \in [k]$.

For all $j \in [k]$ and $j' \in [2n-k+1]$, $j' \not\in P_j$.
Thus, $\{k+1, \ldots, 2n-k+1\} \subseteq P_{k+1} \cup \ldots \cup P_n$.
By pigeonhole principle, there exists a bundle $B \in \{P_{k+1}, \ldots, P_{n}\}$
that contains at least $3$ goods $g_1, g_2, g_3$ in $\{k+1, \ldots, 2n-k+1\}$. Hence,
\begin{align*}
v_{i, 2n-k+1} &\le \min_{g \in \{g_1, g_2, g_3\}} v_{i,g}
\le \frac{1}{3}\sum_{g \in \{g_1, g_2, g_3\}} v_{i,g}
\TCNLA\le \frac{v_i(B)}{3} = \frac{1}{3}.
\qedhere \end{align*}
\end{proof}

\begin{lemma}
\label{thm:S:upper-1}
Let $i$ be any agent.
For all $k \in [n]$, if $v_i(B_k) \le 1$, then $v_i(B'_k) \le 1$.
\end{lemma}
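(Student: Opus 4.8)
The plan is to track $v_i(B_k)$ over the run of $\mathtt{bagFill}$ and argue that it never exceeds $1$, given that it starts at $v_i(B_k)\le 1$. Two observations drive the argument. First, every good $\mathtt{bagFill}$ ever appends to a bag lies in $U_G=[m]\setminus[2n]$, hence has index larger than $2n$; since the instance is $R_3(3/4)$-irreducible and normalized (so $\MMS_i=1$), \cref{thm:vr-upper-bounds} with $k=3$ gives $v_{i,j}<\tfrac{3/4}{3}=\tfrac14$ for every such good. Second, a good is appended to $B_k$ only in the $\mathtt{ElsIf}$ branch, i.e., only after the $\mathtt{If}$ test fails, so at that moment $B_k$ is unassigned and no unsatisfied agent values any unassigned bag at $\ge\tfrac34$.

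I would then show that whenever a good is appended to $B_k$, its value to $i$ just before the append is below $\tfrac34$; with the first observation this keeps $v_i(B_k)<\tfrac34+\tfrac14=1$ after every append, and hence $v_i(B'_k)\le 1$ (if $B_k$ is never enlarged, then $B'_k=B_k$ and the bound is immediate). If $i$ has not yet been assigned a bag, $v_i(B_k)<\tfrac34$ follows directly from the second observation. The substantive case is when $i$ already holds a bag $B_{k_i}$. For this I would establish, by an easy induction on $\mathtt{bagFill}$'s main loop (using that bags grow one good at a time and leave $U_B$ as soon as some unsatisfied agent values them at $\ge\tfrac34$), the ``crossing'' property that at any moment at most one unassigned bag is worth $\ge\tfrac34$ to any unsatisfied agent --- the most recently enlarged one --- and that this bag is assigned on the very next iteration. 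This forces $v_i(B_k)<\tfrac34$ at the moment $i$ received $B_{k_i}$ (as $B_k\neq B_{k_i}$ was still unassigned then); combined with the normalization identity $\sum_k v_i(B'_k)+v_i(\text{unallocated goods})=v_i(M)=n$ and the resulting control on the other bags' values to $i$, this limits how many more small goods $B_k$ can absorb before an unsatisfied agent values it at $\ge\tfrac34$ and freezes it, and one verifies this keeps $v_i(B_k)<1$ throughout.

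The step I expect to be the real obstacle is precisely this last case: ruling out that a cascade of sub-$\tfrac14$ appends to some still-unassigned bag quietly carries its value to $i$ past $1$ before the bag gets claimed. The clean way to handle it is the crossing invariant above, proved by induction on $\mathtt{bagFill}$'s loop --- if necessary jointly with the statement of this lemma --- together with the normalization identity to bound the total value still circulating among the unassigned bags and goods.
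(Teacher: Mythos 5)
Your first case is exactly the paper's proof, and it is all that is needed: if $B'_k \neq B_k$, let $g$ be the \emph{last} good added to $B'_k$; since the algorithm took the good-adding branch at that moment, no unsatisfied agent valued any unassigned bag at $\ge 3/4$, so $v_i(B'_k\setminus \{g\})<3/4$, and $v_{i,g}<1/4$ by \cref{thm:vr-upper-bounds} (as $g>2n$), giving $v_i(B'_k)<1$. You do not need to control every append, only the final one. The reason the ``substantive case'' you identify (agent $i$ already holds a bag) can be ignored is that the lemma is only ever invoked for $i\in U_A$, i.e., for an agent who never receives a bag: in \cref{thm:S:n1-agents} the agent is fixed in $U_A\cap N^1$, so $i$ is unsatisfied throughout the run and the inequality $v_i(B'_k\setminus\{g\})<3/4$ is available at the final append. (Strictly speaking, the lemma as stated for ``any agent'' tacitly carries this hypothesis; the paper's own proof uses it in the same implicit way.)

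The completion you sketch for the satisfied-agent case is a genuine gap, and I do not believe it can be closed. Your crossing invariant --- at any moment at most one unassigned bag is worth $\ge 3/4$ to some unsatisfied agent --- is false in the initial phase, before any good has been appended, where many initial bags may be worth $\ge 3/4$ to many agents and the algorithm assigns them one per iteration. More importantly, once $i$ has been satisfied, only the valuations of agents still in $U_A$ gate further appends; a bag worth, say, $0.9$ to $i$ but little to every remaining unsatisfied agent can then absorb a good worth nearly $1/4$ to $i$ and cross $1$, so no invariant phrased in terms of unsatisfied agents can rescue the claim for such an $i$. Rather than trying to ``verify this keeps $v_i(B_k)<1$ throughout,'' restrict the statement to $i\in U_A$ (which is all the application requires) and apply your first case to the last good added.
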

\begin{savedProof}{upper-1}
If $B'_k = B_k$, then the claim obviously holds. Now assume $B_k \subsetneq B'_k$.
Let $g$ be the last good that was added to $B'_k$.
We have $v_i(B'_k \setminus g) < 3/4\ifBetter{ + \delta}{}$, otherwise $g$ would not be added to $B'_k$.
Also note that $g > 2\BFn$ and hence $v_{i, g} < 1/4\ifBetter{ + \delta/3}{}$
by \cref{thm:vr-upper-bounds}. Thus, we have
\begin{align*}
v_i(B'_k) &= v_i(B'_k \setminus g) + v_{i, g}
\ifBetter{\TCNLA}{}< \addIfBetterP{\frac{3}{4}}{ + \delta} + \addIfBetterP{\frac{1}{4}}{ + \frac{\delta}{3}}
= 1\ifBetter{ + \frac{4\delta}{3}}{}.
\qedhere
\end{align*}
\end{savedProof}

\begin{lemma}
\label{thm:S:n1-agents}
$U_A \cap N^1 = \emptyset$, i.e., every agent in $N^1$ gets a bag.
\end{lemma}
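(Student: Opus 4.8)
The plan is to argue by contradiction: suppose $\bagFill$ terminates with some agent $a \in N^1$ still unsatisfied, so that line \ref{alg-line:bagFill:error} was executed and $U_G = \emptyset$ at termination. The key accounting idea is to track how much total value agent $a$ sees in all the bags at termination. By definition of $N^1$, agent $a$ values each \emph{initial} bag $B_k$ at most $1$, and by \cref{thm:S:upper-1} this persists: $v_a(B'_k) \le 1$ for every $k \in [n]$. Since $a$ never gets satisfied, the bags that \emph{were} allocated were handed out because some other agent valued them at least $\alpha = 3/4$; but crucially, once all goods are distributed, the $n$ final bags $B'_1,\dots,B'_n$ partition all $m$ goods among themselves (the unassigned-goods pool is empty), so $\sum_{k=1}^n v_a(B'_k) = v_a([m]) = n$, using that the instance is normalized.

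From $\sum_k v_a(B'_k) = n$ together with $v_a(B'_k)\le 1$ for all $k$, I would conclude that in fact $v_a(B'_k) = 1$ for every $k$. In particular, consider any bag $B'_k$ that is still in $U_B$ at the moment $a$ is the only unsatisfied agent (one exists since $|U_A| = |U_B| \ge 1$); agent $a$ values it at $1 \ge 3/4 = \alpha$, so the algorithm would have assigned this bag to $a$ rather than declaring an error — contradicting that line \ref{alg-line:bagFill:error} was reached. Hence $U_A \cap N^1 = \emptyset$. A small point to handle carefully: when the \textbf{error} line is hit we have $U_A \ne \emptyset$ and $U_G = \emptyset$, but we also need $U_B \ne \emptyset$ to name a bag $a$ could have taken — this follows from the loop invariant $|U_A| = |U_B|$.

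The main obstacle, and the step deserving the most care, is justifying $\sum_{k=1}^n v_a(B'_k) = n$. This requires that at termination every good lies in exactly one final bag $B'_k$: goods that were placed into bags stay there, goods never placed are exactly $U_G$, and reaching the error line forces $U_G = \emptyset$. (If instead $\bagFill$ terminated normally, $a\in N^1$ would already have a bag and there is nothing to prove; so the contradiction setup is what lets us assume $U_G=\emptyset$.) Combined with normalizedness giving $v_a([m]) = n$, the sum identity follows, and the squeeze $v_a(B'_k)=1$ finishes the argument. One should also note \cref{thm:S:upper-1} is stated for the $\alpha = 3/4$ setting used in this section, which is exactly what we need here; the $\delta$-variant is irrelevant for this lemma.
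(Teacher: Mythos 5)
Your proposal is correct and follows essentially the same route as the paper: both rest on \cref{thm:S:upper-1}, normalization giving $v_a([m]) = n$, and the fact that an unallocated bag (with all goods already distributed into bags) remains when the error line is reached. The paper extracts the contradiction slightly more directly — the unallocated bag is worth less than $3/4$ to the unsatisfied agent, so the bags sum to less than $(n-1) + 3/4 = n - 1/4 < n$ — whereas you squeeze to $v_a(B'_k) = 1$ for every $k$ and then contradict the algorithm's branch choice; the two are equivalent.
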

\begin{proof}
For the sake of contradiction, assume $U_A \cap N^1 \neq \emptyset$.
Hence, $\exists i \in U_A \cap N^1$. Also, for some $j \in [n]$, the $j\Th$ bag is unallocated.
Hence, $v_i(B_j') < 3/4$ and
\begin{align*}
n &= v_i(M) = v_i(B_j') + \sum_{k \in [n]\setminus \{j\}} v_i(B'_k)
    \reason{since $M = \bigcup_{k \in [n]} B'_k$}
\\ &< (n-1) + \frac{3}{4} = n - \frac{1}{4},  \reason{by \cref{thm:S:upper-1}}
\end{align*}
which is a contradiction. Hence, $U_A \cap N^1 = \emptyset$.
\end{proof}

Now we prove that $\bagFill$ allocates a bag to all agents in $N^2$,
i.e., $U_A \cap N^2 = \emptyset$.

\begin{lemma}
\label{thm:S:upper-1-12}
$i \in N^2 \implies v_{i, 2n+1} < 1/12$.
\end{lemma}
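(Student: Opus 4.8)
The plan is to combine a lower bound on agent $i$'s top goods, which is forced by the hypothesis $i \in N^2$, with the $R_4(3/4)$-irreducibility of the instance, which ties together the values of $g_1$ and $g_{2n+1}$.

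First I would unpack the hypothesis. Since $i \in N^2$, there is some $k \in [n]$ with $v_i(B_k) > 1$, i.e.\ $v_{i,k} + v_{i,2n-k+1} > 1$. Feeding this pair into \cref{thm:pair-bound} gives $v_{i,k} > 2/3$. Because the instance is ordered, $v_{i,1} \ge v_{i,k} > 2/3$, so agent $i$'s most valued good alone is worth strictly more than $2/3$.

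Next I would invoke irreducibility. The instance is totally-$3/4$-irreducible, hence in particular $R_4(3/4)$-irreducible, so $v_i(S_4) < (3/4)\MMS_i$. Since the instance is normalized, $\MMS_i = 1$, and $S_4 = \{g_1, g_{2n+1}\}$, so $v_{i,1} + v_{i,2n+1} < 3/4$. Combining with $v_{i,1} > 2/3$ yields $v_{i,2n+1} < 3/4 - 2/3 = 1/12$, which is the claim. (If the instance has fewer than $2n+1$ goods, $g_{2n+1}$ does not exist and one reads $v_{i,2n+1}$ as $0$, so there is nothing to prove; but $\bagFill$ is only invoked with $m \ge 2n$, and the case relevant for the later lemmas is $m \ge 2n+1$.)

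I do not expect a genuine obstacle here: the argument is a two-step chain — a value lower bound from \cref{thm:pair-bound} plus the value upper bound from $R_4(3/4)$-irreducibility. The only point to be slightly careful about is that $R_4(\alpha)$ is a valid reduction only on instances already $R_1(\alpha)$- and $R_3(\alpha)$-irreducible, but total $3/4$-irreducibility supplies $R_4(3/4)$-irreducibility directly, so no side conditions need to be checked.
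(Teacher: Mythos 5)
Your proof is correct and is essentially identical to the paper's: both derive $v_{i,1} > 2/3$ from \cref{thm:pair-bound} applied to the bag witnessing $i \in N^2$, and then combine this with $R_4(3/4)$-irreducibility (i.e.\ $v_{i,1} + v_{i,2n+1} < 3/4$, using that the normalized instance has $\MMS_i = 1$) to conclude $v_{i,2n+1} < 1/12$. The remark about the degenerate case $m < 2n+1$ is a harmless extra.
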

\begin{savedProof}{upper-1-12}
Since $i \in N^2$, there exists a bag $B_k$ such that $v_i(B_k) > 1$.
By \cref{thm:pair-bound}, $v_{i,k} > 2/3$. Thus, $v_{i,1} > 2/3$.
Moreover,
\begin{align*}
v_{i, 2\BFn+1} &< \frac{3}{4}\ifBetter{ + \delta}{} - v_{i,1}
    \reason{since $R_4(3/4\ifBetter{+\delta}{})$ is not applicable}
\\ &< \frac{3}{4}\ifBetter{ + \delta}{} - \frac{2}{3}
= \frac{1}{12}\ifBetter{ + \delta}{}.  \reason{since $v_{i,1} > 2/3$}
\end{align*}
\end{savedProof}

From now on assume for the sake of contradiction that $U_A \neq \emptyset$.
Let $a$ be a fixed agent in $U_A$. By \cref{thm:S:n1-agents}, $a \in N^2$.
Let $A^+ \defeq \{k \in [n] \mid v_a(B_k) > 1\}$,
$A^- \defeq \{k \in [n] \mid v_a(B_k) < 3/4\}$,
and $A^0 \defeq \{k \in [n] \mid 3/4 \le v_a(B_k) \le 1\}$.
We will try to get upper bounds on $v_a(B_k')$ for each of the cases
$k \in A^+$, $k \in A^-$, and $k \in A^0$.

Note that $n = |A^+| + |A^-| + |A^0|$.
Also, $n \in A^-$ since the instance is $R_2(3/4)$-irreducible,
and $|A^+| \ge 1$ since $a \in N^2$.

\begin{lemma}
\label{thm:S:upper-5-6}
$\forall k \in A^-$, $v_a(B'_k) < 5/6$.
\end{lemma}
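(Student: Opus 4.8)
The plan is to bound $v_a(B'_k)$ for $k \in A^-$ in the same spirit as \cref{thm:S:upper-1}: the final bag $B'_k$ is obtained from $B_k$ by adding goods one at a time, and no good is added once the bag's value (for some agent) would reach $\alpha = 3/4$. So if $B'_k \neq B_k$, write $g$ for the last good added; then $v_a(B'_k \setminus g) < 3/4$, and since $g$ is a good with index $> 2n$, \cref{thm:vr-upper-bounds} applied with $k=2$ (the instance is $R_2(3/4)$-irreducible) gives $v_{a,g} < (3/4)/2 = 3/8$. Naively this only yields $v_a(B'_k) < 3/4 + 3/8$, which is too weak. The point of using $A^-$ rather than $A^0$ is that we get a better handle on $v_a(B'_k \setminus g)$: I would instead argue that at the moment $g$ was added, the bag already had value $< 3/4$, and moreover it contained its two original goods plus possibly more, but the relevant slack is that we can afford to remove $g$ and still the bag before the addition was a valid "$<3/4$" bag. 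The sharper estimate should come from bounding $v_{a,g}$ more tightly: $g$ is a good that was sitting in $U_G$, so $g \ge 2n+1$, hence $v_{a,g} \le v_{a,2n+1} < 1/12$ by \cref{thm:S:upper-1-12} (using $a \in N^2$, established just before the lemma). Combining, $v_a(B'_k) = v_a(B'_k \setminus g) + v_{a,g} < 3/4 + 1/12 = 5/6$.

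More carefully, I would organize it as follows. First, if $B'_k = B_k$ then $v_a(B'_k) = v_a(B_k) < 3/4 < 5/6$ since $k \in A^-$, and we are done. Otherwise at least one good was added; let $g$ be the last one. By the \texttt{ElsIf} branch of \cref{algo:bagFill}, $g$ was drawn from $U_G = [m] \setminus [2n]$, so $g \ge 2n+1$ and, by orderedness, $v_{a,g} \le v_{a,2n+1}$. Since $a \in N^2$, \cref{thm:S:upper-1-12} gives $v_{a,2n+1} < 1/12$, so $v_{a,g} < 1/12$. Second, just before $g$ was added, agent $a$ (who is still in $U_A$ throughout, since $a$ never receives a bag) did not accept bag $k$, so $v_a(B'_k \setminus g) < 3/4$ — otherwise the algorithm would have allocated that bag to $a$ on the preceding iteration rather than (or in addition to) adding $g$. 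Then
\[
v_a(B'_k) = v_a(B'_k \setminus g) + v_{a,g} < \frac34 + \frac{1}{12} = \frac56 .
\]

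The main obstacle I anticipate is the second step: justifying cleanly that $v_a(B'_k \setminus g) < 3/4$. The subtlety is that $B'_k \setminus g$ is the content of bag $k$ at the iteration just before $g$ is added, and I must invoke the invariant that agent $a$ is in $U_A$ at that time (true because $a \in U_A$ at termination and agents only leave $U_A$). Since the \texttt{If} branch has priority over the \texttt{ElsIf} branch in \cref{algo:bagFill}, the fact that the algorithm chose to add $g$ rather than allocate means that at that iteration no unsatisfied agent valued any unassigned bag at $\ge 3/4$; in particular $v_a$ of bag $k$ (which was unassigned and which $a$ was eligible for) was $< 3/4$. This requires being slightly careful about the possibility that bag $k$ itself had already been allocated to some other agent before $g$ was added — but that cannot happen, because $g$ is added to bag $k$, and \cref{algo:bagFill} only adds goods to bags in $U_B$ (unassigned bags), so bag $k$ was unassigned at that moment. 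Once this bookkeeping is in place, the arithmetic is immediate. I should also record, for use in the later case analysis, that this argument did not use $k \in A^-$ beyond the $B'_k = B_k$ base case — the real content is the $1/12$ bound on the last-added good — so the same technique will adapt to $A^0$ and $A^+$ with the appropriate base-case values.
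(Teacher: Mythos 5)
Your proof is correct and follows essentially the same route as the paper's: handle $B'_k = B_k$ directly, and otherwise bound the last-added good $g$ by $v_{a,g} \le v_{a,2n+1} < 1/12$ (via \cref{thm:S:upper-1-12} and $a \in N^2$) on top of the $v_a(B'_k \setminus \{g\}) < 3/4$ threshold argument. The bookkeeping you flag about $a$ remaining in $U_A$ and bag $k$ remaining in $U_B$ is exactly the justification the paper uses implicitly.
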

\begin{savedProof}{upper-5-6}
If $B'_k = B_k$, then $v_a(B'_k) < 3/4\ifBetter{+\delta}{} < 5/6\ifBetter{+2\delta}{}$.
Otherwise, let $g$ be the last good that was added to $B'_k$.
Then $v_a(B'_k \setminus \{g\}) < 3/4\ifBetter{+\delta}{}$, otherwise $\bagFill$ would
assign $B'_k \setminus \{g\}$ to agent $i$ instead of adding $g$ to it. Hence,
\begin{align*}
v_a(B'_k) &= v_a(B'_k \setminus \{g\}) + v_{a,g}
\\ &< \addIfBetterP{\frac{3}{4}}{ + \delta} + v_{a,2\BFn+1}
    \reason{since $v_a(B'_k \setminus g) < 3/4\ifBetter{+\delta}{}$
        and $v_{a,g} \le v_{a,2\BFn+1}$}
\\ &< \addIfBetterP{\frac{3}{4}}{ + \delta} + \addIfBetterP{\frac{1}{12}}{ + \delta}
    = \frac{5}{6}\ifBetter{ + 2\delta}{}.
    \reason{$v_{a, 2\BFn+1} < 1/12\ifBetter{+\delta}{}$ by \cref{thm:\secP:upper-1-12}}
\end{align*}
\end{savedProof}

Let $\ell$ be the smallest such that for all $k \in [\ell+1, n]$,
$v_{a,k} + v_{a, 2n-k+1 + \ell} \le 1$.
See \cref{fig:ell} for a better understanding of $\ell$.
Note that $\ell \ge 1$, since $a \in N^2$.

\begin{figure*}[tb]
\centering
\newlength{\cellW}\newlength{\cellH}
\setlength{\cellW}{3.45em}
\setlength{\cellH}{1.8em}
\begin{tikzpicture}[
outerBox/.style = {semithick},
innerBord/.style = {},
highlight/.style = {very thick,fill={textColor!5!bgColor}},
myArrow/.style={->,>={Stealth},thick},
]
\draw[outerBox] (0, 0) rectangle +(12\cellW, 2\cellH);
\draw[innerBord] (0, 1\cellH) -- +(12\cellW, 0);
\foreach \x in {1,2,3.5,5.5,7,8.5,10,11}
    \draw[innerBord] (\x\cellW, 0) -- +(0, 2\cellH);
\draw[highlight] (3.5\cellW, 1\cellH) rectangle +(2\cellW, 1\cellH);
\draw[highlight] (7\cellW, 0) rectangle +(1.5\cellW, 1\cellH);
\foreach \x/\w/\downText/\upText in {
        0/1/1/2n,
        1/1/2/2n-1,
        2/1.5/\cdots/\cdots,
        3.5/2/k-\ell/2n+1-k+\ell,
        5.5/1.5/\cdots/\cdots,
        7/1.5/k/2n+1-k,
        8.5/1.5/\cdots/\cdots,
        10/1/n-1/n+2,
        11/1/n/n+1
        } {
    \path (\x\cellW, 0) rectangle +(\w\cellW, 1\cellH) node[pos=0.5] {$\downText$};
    \path (\x\cellW, 1\cellH) rectangle +(\w\cellW, 1\cellH) node[pos=0.5] {$\upText$};
}
\node[circle,draw,inner sep=0] (plus) at (6.5\cellW, -0.6\cellH) {$+$};
\draw[thick] (5\cellW, 1\cellH) -- (plus);
\draw[thick] (7.75\cellW, 0) -- (plus);
\draw[myArrow] (plus) -- +(0, -0.8\cellH);
\node (le1) at (6.5\cellW, -1.7\cellH) {$\le 1$};
\end{tikzpicture}

\caption{The items $[2n]$ are arranged in a table, where the $k\Th$ column is $B_k \defeq \{k, 2n+1-k\}$.
For $i \in N^1$, we have $v_i(B_k) = v_{i,k} + v_{i,2n+1-k} \le 1$ for all $k$.
However, $a \not\in N^1$. Hence, we look for the smallest \emph{shift} $\ell$ such that
$v_{a,k} + v_{a,2n+1-k+\ell} \le 1$ for all $k$.}
\label{fig:ell}
\end{figure*}

\begin{lemma}
\label{thm:S:tricky-bound}
$\sum_{k \in A^+} v_a(B'_k) < |A^+| + \min(\ell, |A^+|)/12$.
\end{lemma}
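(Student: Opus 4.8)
The plan is to bound $v_a(B'_k)$ for each $k \in A^+$ in two different ways and combine them. First, observe that for \emph{every} bag (whether or not it ever receives an extra good), the argument of \cref{thm:S:upper-5-6} gives $v_a(B'_k) < 3/4 + 1/12 = 5/6 < 1$ if $B'_k \ne B_k$, and if $B'_k = B_k$ then $v_a(B'_k) = v_a(B_k)$. So the only bags with $v_a(B'_k) > 1$ are those in $A^+$ that never grew, and for those $v_a(B'_k) = v_a(B_k)$. Hence $\sum_{k \in A^+} v_a(B'_k)$ is at most $|A^+|$ plus the total ``excess'' $\sum_{k \in A^+}\max(0, v_a(B'_k) - 1) = \sum_{k \in A^+ : B'_k = B_k}(v_a(B_k) - 1)$.

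The heart of the argument is to bound this excess by $\min(\ell, |A^+|)/12$. By definition of $\ell$, for every $k \in [\ell+1, n]$ we have $v_{a,k} + v_{a,2n+1-k+\ell} \le 1$; equivalently, writing the excess of bag $k$ as $v_a(B_k) - 1 = v_{a,k} + v_{a,2n+1-k} - 1 \le v_{a,k} + v_{a,2n+1-k} - (v_{a,k} + v_{a,2n+1-k+\ell}) = v_{a,2n+1-k} - v_{a,2n+1-k+\ell}$ for $k \ge \ell+1$, and for $k \le \ell$ we use the trivial bound $v_a(B_k) - 1 \le v_{a,2n+1-k} \le v_{a,2n+1-k}$, which telescopes as well. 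The key point is that the indices $2n+1-k$ for $k$ ranging over $[n]$ are exactly $n+1, \ldots, 2n$, all of which are $> 2n$... wait, they are $\le 2n$; instead I should note each such index is $\ge n+1 > n$ so \cref{thm:vr-upper-bounds} with $k=1$ gives only $v_{a,j} < \alpha$, which is not small enough. Better: I expect the intended route is that the ``second goods'' $2n+1-k$ are all distinct, and summing $v_a(B_k) - 1 \le v_{a,2n+1-k}$ over any set of $\min(\ell,|A^+|)$ bags, the smallest such value among the relevant indices is at most $v_{a,2n+1}< 1/12$ by \cref{thm:S:upper-1-12} (since $a \in N^2$); more carefully, one shows at most $\min(\ell,|A^+|)$ bags in $A^+$ have positive excess and each has excess $< 1/12$, because a bag with $v_a(B_k)>1$ forces $v_{a,k}>2/3$ (\cref{thm:pair-bound}) hence, together with $R_4$-irreducibility, $v_{a,2n+1-k} \le v_a(B_k) - v_{a,k} < 3/4 - 2/3 = 1/12$. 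And the count is at most $\ell$ because $v_a(B_k)>1$ with the minimality of $\ell$ implies $k \le \ell$ (otherwise $v_{a,k}+v_{a,2n+1-k}\le v_{a,k}+v_{a,2n+1-k+\ell}\le 1$).

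Putting it together: the bags in $A^+$ with excess are a subset of $\{1,\ldots,\ell\}$ and also a subset of $A^+$, so there are at most $\min(\ell,|A^+|)$ of them, each contributing $< 1/12$; the remaining bags in $A^+$ contribute $\le 1$ each. Hence $\sum_{k\in A^+} v_a(B'_k) < |A^+| + \min(\ell,|A^+|)/12$, as claimed.

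\textbf{Main obstacle.} The delicate part is the bookkeeping: one must be careful that a bag $B_k$ with $v_a(B_k) > 1$ cannot have grown (so $B'_k = B_k$) — this follows because $\bagFill$ never adds a good to a bag already valued above $\alpha \ge 3/4$ by an unsatisfied agent, but here agent $a$ is unsatisfied, so indeed $B_k$ was available to $a$ and would have been claimed; this needs the observation that $a \in U_A$ sees every bag in $A^+$ with value $> 1 > \alpha$ at all times until it (or $a$) is assigned, contradicting $a \in U_A$ unless... actually this shows $A^+$ bags get assigned to \emph{someone}, not that they don't grow. The real subtlety — and the step I expect to be hardest — is correctly arguing that a bag valued $>1$ by $a$ is never grown, which requires tracking that once $v_a(B_k) > 1 > \alpha$ the bag is immediately assignable and the algorithm's ``if'' branch fires before any ``elsif'' growth step, so such bags retain their initial two-good contents; combined with the $\le \ell$ counting and the $< 1/12$ per-bag excess, the inequality follows.
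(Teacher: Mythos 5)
There is a genuine gap. Your decomposition (bags in $A^+$ never grow, so $\sum_{k\in A^+}v_a(B'_k)=|A^+|+\sum_{k\in A^+}(v_a(B_k)-1)$, then bound the total excess) is a reasonable start, and the per-bag excess bound of $1/12$ is in fact correct — though not for the reason you give: your claim $v_{a,2n+1-k}\le v_a(B_k)-v_{a,k}<3/4-2/3=1/12$ is false (it would force $v_a(B_k)<3/4+1/12<1$, contradicting $k\in A^+$); the correct route is $v_a(B_k)-1=v_{a,k}+v_{a,2n+1-k}-1<3/4+1/3-1=1/12$, using $v_{a,k}<3/4$ from \cref{thm:vr-upper-bounds} and $v_{a,2n+1-k}\le 1/3$ from \cref{thm:pair-bound}. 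The fatal step is your counting claim that every $k\in A^+$ satisfies $k\le\ell$. Your justification reverses an inequality: since the instance is ordered and $2n+1-k+\ell\ge 2n+1-k$, we have $v_{a,2n+1-k}\ge v_{a,2n+1-k+\ell}$, not $\le$, so the definition of $\ell$ gives no upper bound on $v_a(B_k)$ for $k>\ell$ and does not exclude $k\in A^+$. (Concretely, with $n=3$ and $v_a=(0.7,0.7,0.35,0.35,0.35,0.3)$ one gets $A^+=\{2\}$ but $\ell=1$.) Without that containment, your argument only yields $|A^+|+|A^+|/12$, which is too weak exactly when $\ell<|A^+|$ — the case the lemma's $\min(\ell,|A^+|)$ is there to handle.

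The missing idea is a pairing of goods \emph{across different bags} of $A^+$ rather than a per-bag bound. Writing $A^+=\{g_1<\cdots<g_{|A^+|}\}$, pair the first good of bag $g_{k+\ell}$ with the second good $2n+1-g_k$ of bag $g_k$, for $k\in[|A^+|-\ell]$. Since $g_{k+\ell}\ge g_k+\ell$, the definition of $\ell$ together with orderedness gives $v_{a,g_{k+\ell}}+v_{a,2n+1-g_k}\le 1$, so these pairs contribute at most $|A^+|-\ell$ in total. What remains unpaired are the $\ell$ smallest first goods and the $\ell$ largest second goods of $A^+$, and each such (first, second) pair contributes less than $3/4+1/3=13/12$, giving the extra $\ell/12$. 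This cross-bag telescoping is what your per-bag accounting cannot replicate.
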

\begin{savedProof}{tricky-bound}
Let $S \in A^+$ be the set of $\min(\ell, |A^+|)$ smallest indices in $A^+$
and $L \in A^+$ be the set of $\min(\ell, |A^+|)$ largest indices in $A^+$.
Since $|A^+| \ge 1$ and $\ell \ge 1$, we get $|S| = |L| \ge 1$.
Note that
\begin{align*}
& \sum_{k \in A^+} v_a(B'_k)
= \left({\sum_{k \in S} v_{a,k} + \sum_{k \in L} v_{a,2\BFn-k+1}}\right)
    \TCNLA[\qquad]+ \left(\sum_{k \in A^+ \setminus S} v_{a,k} + \sum_{k \in A^+ \setminus L} v_{a,2\BFn-k+1}\right).
\end{align*}

By \cref{thm:pair-bound}, we get $v_{a,2\BFn-k+1} \le \frac{1}{3}$.
Since $v_{a,k} < 3/4\ifBetter{+\delta}{}$ and $|S| \ge 1$, we get
\begin{equation}
\ifBetter{\label{eqn:B:claim-1}}{\label{eqn:S:claim-1}}
\begin{aligned}
& \sum_{k \in S} v_{a,k} + \sum_{k \in L} v_{a,2\BFn-k+1}
\ifBetter{\TCNLA[\qquad]}{}< |S|\left(\frac{3}{4}\ifBetter{+\delta}{} + \frac{1}{3}\right)
= \addIfBetterP{\frac{13}{12}}{ + \delta}|S|.
\end{aligned}
\end{equation}

If $\ell \ge |A^+|$, then $|S| = |L| = |A^+|$, and we are done.
Now assume $\ell < |A^+|$. Then $|S| = |L| = \ell$.

Let $A^+ \defeq \{g_1, \ldots, g_{|A^+|}\}$ and $g_1 < \ldots < g_{|A^+|}$.
Then $A^+ \setminus S = \{g_{\ell+1}, \ldots, g_{|A^+|}\}$
and $A^+ \setminus L = \{g_1, \ldots, g_{|A^+|-\ell}\}$.
The idea is to pair the goods $g_{k+\ell}$ and $2\BFn-g_k+1$ (for $k \in [|A^+|-\ell]$)
and prove that their value is at most $1$ for agent $a$.

Since $g_{k+\ell} \ge g_k + \ell$, we get
$v_{a, g_{k+\ell}} + v_{a, 2\BFn-g_k+1} \le 1$ by definition of $\ell$. Hence,
\begin{equation}
\ifBetter{\label{eqn:B:claim-2}}{\label{eqn:S:claim-2}}
\begin{aligned}
& \sum_{k \in A^+ \setminus S} v_{a,k} + \sum_{k \in A^+ \setminus L} v_{a,2n-k+1}
\TCNLA= \sum_{k \in [|A^+|-\ell]} (v_{a, g_{k+\ell}} + v_{a, 2n-g_k+1}) \le |A^+| - \ell.
\end{aligned}
\end{equation}

\Cref{eqn:\secP:claim-1,eqn:\secP:claim-2} imply \cref{thm:\secP:tricky-bound}.
\end{savedProof}

\begin{lemma}
\label{thm:S:small-goods-bound}
$v_a([m] \setminus [2n]) > \ell/4$.
\end{lemma}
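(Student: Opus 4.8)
The plan is to work with agent $a$'s MMS partition $P=(P_1,\dots,P_n)$, which by normalizedness has $v_a(P_j)=1$ for every $j$, and to extract the lower bound on $v_a([m]\setminus[2n])$ by tracking where a block of "heavy" goods and their bundle-mates land. First I would turn the minimality of $\ell$ into a concrete witness: since $\ell$ is the smallest shift for which $v_{a,k}+v_{a,2n+1-k+\ell}\le 1$ holds for all $k\in[\ell+1,n]$, the shift $\ell-1$ fails, so there is some $k_0\in[\ell,n]$ with $v_{a,k_0}+v_{a,2n-k_0+\ell}>1$. Write $x:=v_{a,k_0}$ and $y:=v_{a,2n-k_0+\ell}$. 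Ordering gives $v_{a,j}\ge x$ for all $j\le k_0$ and $v_{a,j}\ge y$ for all $j\le 2n-k_0+\ell$; since $k_0\le n$ we have $2n-k_0+\ell\ge k_0$, hence $y\le x$, and still $x+y>1$.

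Next, because $x+y>1$, no bundle of $P$ can contain a good of value $\ge x$ together with a distinct good of value $\ge y$; in particular goods $1,\dots,k_0$ lie in $k_0$ distinct bundles, so after relabelling $j\in P_j$ for $j\in[k_0]$, and each such $P_j$ contains no good from $\{1,\dots,2n-k_0+\ell\}$ other than $j$ itself. Hence $P_j\setminus\{j\}\subseteq\{2n-k_0+\ell+1,\dots,2n\}\cup([m]\setminus[2n])$ and $v_a(P_j\setminus\{j\})=1-v_{a,j}$. Summing over $j\in[k_0]$ and using that the sets $P_j\setminus\{j\}$ are disjoint,
\[
k_0-\sum_{j=1}^{k_0}v_{a,j}=\sum_{j=1}^{k_0}v_a(P_j\setminus\{j\})\le\sum_{j=2n-k_0+\ell+1}^{2n}v_{a,j}+v_a([m]\setminus[2n]).
\]
To finish it suffices to show $\sum_{j=1}^{k_0}v_{a,j}+\sum_{j=2n-k_0+\ell+1}^{2n}v_{a,j}\le\sum_{j=1}^{\ell}v_{a,j}+(k_0-\ell)$: plugging this into the display yields $v_a([m]\setminus[2n])\ge\ell-\sum_{j=1}^{\ell}v_{a,j}$, and since the instance is $R_1(3/4)$-irreducible we have $v_{a,j}<3/4$ for every $j$ by \cref{thm:vr-upper-bounds}, giving $v_a([m]\setminus[2n])>\ell-\tfrac34\ell=\ell/4$. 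For the remaining inequality I would split $\sum_{j=1}^{k_0}v_{a,j}=\sum_{j=1}^{\ell}v_{a,j}+\sum_{j=\ell+1}^{k_0}v_{a,j}$ and apply the defining property of $\ell$ to each $j\in[\ell+1,k_0]\subseteq[\ell+1,n]$, namely $v_{a,j}\le 1-v_{a,2n+1-j+\ell}$; as $j$ ranges over $[\ell+1,k_0]$ the index $2n+1-j+\ell$ ranges over exactly $\{2n-k_0+\ell+1,\dots,2n\}$, so $\sum_{j=\ell+1}^{k_0}v_{a,j}\le(k_0-\ell)-\sum_{j=2n-k_0+\ell+1}^{2n}v_{a,j}$, and the $\sum_{j=2n-k_0+\ell+1}^{2n}$ term cancels.

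The crux — and the step I expect to be the main obstacle to discover — is precisely this cancellation: the "tail" goods $\{2n-k_0+\ell+1,\dots,2n\}$ that the MMS‑partition argument unavoidably introduces as an error term are exactly the goods that the shift-$\ell$ inequality lets us subtract off when bounding $\sum_{j=\ell+1}^{k_0}v_{a,j}$. The more obvious attempts — covering all of $[2n]$ by shifted pairs, or pigeonholing the goods of value $\ge y$ among the $n$ bundles — stall because they leave an uncontrolled block of roughly $\ell$ goods near position $n$; localizing the argument inside the bundles $P_1,\dots,P_{k_0}$ and using the shift-$\ell$ inequality only on the "middle" indices $[\ell+1,k_0]$ is what makes the bookkeeping telescope. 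Degenerate ranges ($k_0=\ell$, or few small goods) need no case analysis: every inequality above holds with the convention that empty sums are $0$, and $m\ge 2n$ by \cref{thm:r1-irr-2n} guarantees all indices up to $2n$ are valid.
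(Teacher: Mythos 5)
Your proof is correct. The setup coincides with the paper's: both extract from the minimality of $\ell$ a witness $k_0\in\{\ell,\dots,n\}$ with $v_{a,k_0}+v_{a,2n-k_0+\ell}>1$, use normalizedness to place $1,\dots,k_0$ in distinct bundles $P_1,\dots,P_{k_0}$ of $a$'s MMS partition, and observe that $P_j\setminus\{j\}$ is confined to $\{2n-k_0+\ell+1,\dots,2n\}\cup([m]\setminus[2n])$. The two arguments part ways only in the closing accounting. The paper pigeonholes: the tail $\{2n-k_0+\ell+1,\dots,2n\}$ has only $k_0-\ell$ goods, so at least $\ell$ of the bundles $P_1,\dots,P_{k_0}$ contain no good of $[2n]$ besides $j$ itself, and each such bundle contributes $1-v_{a,j}>1/4$ entirely to $[m]\setminus[2n]$. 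You instead sum $1-v_{a,j}$ over all $k_0$ bundles and cancel the tail goods' value against $\sum_{j=\ell+1}^{k_0}v_{a,j}$ via the defining shift-$\ell$ inequality, arriving at $v_a([m]\setminus[2n])\ge\ell-\sum_{j=1}^{\ell}v_{a,j}>\ell/4$. Both are valid; the paper's counting step is shorter and never invokes the shift-$\ell$ inequality itself (only its failure at shift $\ell-1$), whereas your telescoping yields the slightly sharper intermediate bound $\ell-\sum_{j\le\ell}v_{a,j}$ before applying $v_{a,j}<3/4$. The degenerate cases you flag are indeed harmless, and $\ell\ge 1$ (from $a\in N^2$) is what makes the final inequality strict.
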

\begin{savedProof}{small-goods-bound}
By definition of $\ell$, there exists a good $k \in \{\ell, \ldots, \BFn\}$ such that
$v_{a,k} + v_{a,2\BFn-k+\ell} > 1$.
Hence, for all $j \in [k]$ and $t \le [2\BFn-k+\ell]$,
we have $v_{a,j} + v_{a,t} \ge v_{a,k} + v_{a,2\BFn-k+\ell} > 1$.

Let $P \defeq (P_1, \ldots, P_{\BFn})$ be an MMS partition of agent $a$.
Then, for $j \in [k]$ and $t \in [2\BFn-k+\ell]$, $j$ and $t$ cannot be in the same bundle in $P$,
since the instance is normalized.
In particular, no two goods from $[k]$ are in the same bundle in $P$.
Hence, assume \wLoG{} that $j \in P_j$ for all $j \in [k]$.
Thus, $[2\BFn-k+\ell] \setminus [k] \subseteq P_{k+1} \cup \ldots \cup P_{\BFn}$.

Bundles in $\{P_1, \ldots, P_k\}$ can only have goods from $[k]$,
$[2\BFn] \setminus [2\BFn-k+\ell]$, and $\ifBetter{D \cup }{}[m] \setminus [2\BFn]$.
There are $k-\ell$ goods in $[2\BFn] \setminus [2\BFn-k+\ell]$.
Hence, at least $\ell$ bundles in $\{P_1, \ldots, P_k\}$ have just 1 good from $[2\BFn]$.
Let $L$ be the indices of these bundles, i.e.,
$L \defeq \{t \in [k] \mid |P_t \cap [2\BFn]| = 1\}$. Then
\begin{align*}
v_a(\ifBetter{D \cup }{}[m] \setminus [2\BFn])
&\ge \sum_{j \in L} v_a(P_j \setminus \{j\})
\TCNLA= \sum_{j \in L} (v_a(P_j) - v_{a,j})
\ifBetter{\\ &}{\TCNLA}> \sum_{j \in L} \left(1 - \addIfBetterP{\frac{3}{4}}{ + \delta}\right)
    \reason{$v_{a,j} < 3/4\ifBetter{+\delta}{}$ by \cref{thm:vr-upper-bounds}}
\ifBetter{\TCNLA}{}= \ifBetter{|L|\left(\frac{1}{4}-\delta\right)}{\frac{|L|}{4}}
    \ge \ifBetter{\ell\left(\frac{1}{4}-\delta\right)}{\frac{\ell}{4}}.
\end{align*}
\end{savedProof}

\begin{lemma}
\label{thm:S:lower-1-2}
For all $i \in N^2$ and $k \in [n]$, $v_i(B_k) > 1/2$.
\end{lemma}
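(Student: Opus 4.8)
The plan is a short proof by contradiction that pits the ``heavy bag'' guaranteed by $i\in N^2$ against an allegedly ``light bag'' $B_k$, and derives a violation of $R_1(3/4)$-irreducibility. Suppose, for contradiction, that some $i\in N^2$ and $k\in[n]$ satisfy $v_i(B_k)=v_{i,k}+v_{i,2n+1-k}\le 1/2$. The first thing I would note is that each $B_k$ pairs a high-indexed good with a low-indexed one: since $k\le n$ we have $k<2n+1-k$, so $v_{i,2n+1-k}\le v_{i,k}$ and hence $v_{i,2n+1-k}\le \tfrac12 v_i(B_k)\le 1/4$. In particular the ``bottom'' good of $B_k$ is small, and $v_{i,k}\le v_i(B_k)\le 1/2$.

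Next I would bring in membership in $N^2$: there is some $k'\in[n]$ with $v_i(B_{k'})>1$, and \cref{thm:pair-bound} gives $v_{i,k'}>2/3$. Since $v_{i,k}\le 1/2<2/3<v_{i,k'}$ and the goods are ordered non-increasingly, this forces $k'<k$; consequently $2n+1-k'>2n+1-k$, so $v_{i,2n+1-k'}\le v_{i,2n+1-k}\le 1/4$. Plugging this back in, $v_{i,1}\ge v_{i,k'}=v_i(B_{k'})-v_{i,2n+1-k'}>1-1/4=3/4$, contradicting $R_1(3/4)$-irreducibility of the ONI instance (which asserts $v_i(S_1)=v_{i,1}<3/4$). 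Hence $v_i(B_k)>1/2$. (The borderline case $k=1$ is subsumed: then $v_{i,1}\le v_i(B_1)\le 1/2$ already contradicts $v_{i,k'}>2/3\le v_{i,1}$.)

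There is no real obstacle here; the one mildly delicate point is the index comparison — that the witness index $k'$ of the heavy bag must lie to the left of the hypothetical light bag's index $k$, which is exactly what lets us conclude that the bottom good of $B_{k'}$ is at most as large as the already-tiny bottom good of $B_k$. Normalizedness is used only indirectly, through \cref{thm:pair-bound}; the final punch is purely the $R_1$-irreducibility bound $v_{i,1}<3/4$.
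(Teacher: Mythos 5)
Your proof is correct and uses exactly the same ingredients as the paper's: \cref{thm:pair-bound} to get a top good above $2/3$ in the heavy bag, the $R_1(3/4)$-irreducibility cap of $3/4$ to force that bag's bottom good above $1/4$, and the ordering of indices to transfer this to the allegedly light bag. The paper argues directly by cases on whether $k$ is left or right of the first heavy index, while you run the contrapositive, but the mathematical content is identical.
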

\begin{savedProof}{lower-1-2}
Fix an $i \in N^2$. Let $t$ be smallest such that $v_i(B_t) > 1$.
By \cref{thm:pair-bound}, $v_{i,t} > 2/3$.
Hence, for all $k \le t$,
\[ v_i(B_k) \ge v_{i,k} \ge v_{i,t} > \frac{2}{3} > \frac{1}{2}\ifBetter{-2\delta}{}. \]
Since $v_i(B_t) = v_{i, t} + v_{i, 2\BFn-t+1} > 1$ and
$v_{i,t} < 3/4\ifBetter{+\delta}{}$ (by \cref{thm:vr-upper-bounds}),
we get $v_{i, 2\BFn-t+1} > 1/4\ifBetter{-\delta}{}$.
For all $k > t$, we have $k < 2\BFn-k+1 < 2\BFn-t+1$. Hence,
\[ v_i(B_k) = v_{i,k} + v_{i, 2\BFn-k+1}
\ge 2 \cdot v_{i, 2\BFn-t+1} > \frac{1}{2}\ifBetter{-2\delta}{}.
\qedhere \]
\end{savedProof}

\begin{lemma}
\label{thm:S:n2-agents}
$U_A \cap N^2 = \emptyset$, i.e., every agent in $N^2$ gets a bag.
\end{lemma}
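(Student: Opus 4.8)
The plan is a proof by contradiction: assume $U_A \neq \emptyset$, fix $a \in U_A$, which by \cref{thm:S:n1-agents} lies in $N^2$, and reuse all the notation already set up ($A^+, A^-, A^0$, the shift $\ell$, $B_k$ vs.\ $B'_k$). The strategy is to sum $v_a(B'_k)$ over all $k \in [n]$ in two ways. On one hand, since $\bagFill$ terminated with agent $a$ still unsatisfied and with an unallocated bag, every final bag has value $< \alpha = 3/4$ to $a$, so actually we also know $v_a(M) = n$ exactly (normalized instance) and $v_a(M) = v_a([2n]) + v_a([m]\setminus[2n]) = \sum_{k\in[n]} v_a(B_k) + v_a([m]\setminus[2n])$. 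On the other hand, $M = \bigcup_k B'_k$ gives $n = v_a(M) = \sum_{k\in[n]} v_a(B'_k)$. The aim is to bound $\sum_{k\in[n]} v_a(B'_k)$ strictly below $n$ using the per-bag estimates, contradicting the equality.

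First I would split the sum over the three groups: $\sum_{k\in[n]} v_a(B'_k) = \sum_{k\in A^+} v_a(B'_k) + \sum_{k\in A^-} v_a(B'_k) + \sum_{k\in A^0} v_a(B'_k)$. For $k \in A^+$, \cref{thm:S:tricky-bound} gives $\sum_{k\in A^+} v_a(B'_k) < |A^+| + \min(\ell,|A^+|)/12$. For $k \in A^-$, \cref{thm:S:upper-5-6} gives $v_a(B'_k) < 5/6$, so this block contributes $< \frac{5}{6}|A^-| = |A^-| - \frac{1}{6}|A^-|$. For $k \in A^0$, I expect to argue $v_a(B'_k) \le 1$: indeed $v_a(B_k) \le 1$ for $k \in A^0$, so \cref{thm:S:upper-1} applies and $v_a(B'_k) \le 1$. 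Combining, using $n = |A^+| + |A^-| + |A^0|$,
\[
n = \sum_{k\in[n]} v_a(B'_k) < n + \frac{\min(\ell,|A^+|)}{12} - \frac{|A^-|}{6}.
\]
So a contradiction follows once I show $\frac{|A^-|}{6} \ge \frac{\min(\ell,|A^+|)}{12}$, i.e.\ $2|A^-| \ge \min(\ell,|A^+|)$. This last inequality is where \cref{thm:S:small-goods-bound} and \cref{thm:S:lower-1-2} come in: the small goods have total $a$-value $> \ell/4$, and all of them end up distributed among the $B'_k$; bags in $A^+ \cup A^0$ can only absorb so much extra mass before crossing $\alpha=3/4$ (they already start above $3/4$ or above $1$), so essentially all the $> \ell/4$ worth of small goods must land in $A^-$ bags, each of which can hold less than $5/6 - 1/2 = 1/3$ extra beyond its $> 1/2$ base value (by \cref{thm:S:lower-1-2} the base is $> 1/2$, by \cref{thm:S:upper-5-6} the final value is $< 5/6$). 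Dividing $\ell/4$ by the at-most-$1/3$ capacity per $A^-$ bag yields $|A^-| > 3\ell/4$, which comfortably implies $2|A^-| \ge \min(\ell,|A^+|)$ whenever $\ell \le |A^+|$; and if $\ell > |A^+|$, then $\min(\ell,|A^+|) = |A^+|$ and I would instead use that bags in $A^+$ never receive a small good (a bag valued $>1$ is immediately allocated, so $B'_k = B_k$ for $k \in A^+$), forcing an even tighter accounting.

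The main obstacle is the bookkeeping around where the small goods can go: I must be careful that bags in $A^0$ and $A^+$ really cannot absorb the small-goods mass — for $A^+$ this is because such a bag is handed out immediately (so $B'_k = B_k$ and it receives no small good at all), and for $A^0$ I need that once a bag reaches value $\ge 3/4$ for the terminating-but-unsatisfied agent $a$, no further good is added to it by $\bagFill$ \emph{on $a$'s behalf} — but another agent's threshold is what actually stops additions, so a bag in $A^0$ could still grow; here I rely on \cref{thm:S:upper-1} to cap it at $1$ rather than on it not growing. Reconciling these two accounting regimes (the $\ell \le |A^+|$ case via $A^-$ capacity, the $\ell > |A^+|$ case via $A^+$ bags being frozen) is the delicate part; everything else is substitution into the displayed inequality above.
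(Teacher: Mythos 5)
Your plan follows the paper's proof essentially step for step: the same split of $\sum_{k} v_a(B'_k)$ over $A^+$, $A^0$, $A^-$ using \cref{thm:S:tricky-bound,thm:S:upper-1,thm:S:upper-5-6} to conclude $|A^-| < \ell/2$, followed by the same small-goods accounting via \cref{thm:S:small-goods-bound,thm:S:upper-5-6,thm:S:lower-1-2} to conclude $|A^-| > 3\ell/4$, a contradiction.

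The one genuine soft spot is your treatment of the $A^0$ bags in the second accounting. You worry that a bag in $A^0$ might still grow and propose to fall back on \cref{thm:S:upper-1} to cap its final value at $1$. That cap suffices for the first sum but does \emph{not} rescue the small-goods step: if $A^0$ bags could absorb mass, then $v_a([m]\setminus[2n]) = \sum_k (v_a(B'_k)-v_a(B_k))$ would include a contribution of up to $1/4$ from each of the $|A^0|$ bags, and $|A^0|$ is not controlled by $\ell$, so the key inequality $\ell/4 < |A^-|/3$ would no longer follow. Fortunately the worry is unfounded, for exactly the reason you already give for $A^+$: $\bagFill$ adds a good only when \emph{no} unsatisfied agent values \emph{any} unassigned bag at $\ge\alpha$; since $a$ remains unsatisfied throughout the run and values every bag of $A^+\cup A^0$ at $\ge 3/4=\alpha$ from the outset, every such bag must be assigned before any good is ever added, hence $B'_k=B_k$ for all $k\in A^+\cup A^0$. (It is not ``another agent's threshold'' that governs additions; the existential quantifier ranges over all unsatisfied agents, including $a$.) With that one-line observation inserted, your two accounting regimes collapse into one and the case split on $\ell$ versus $|A^+|$ becomes unnecessary, since $|A^-|>3\ell/4$ already contradicts $|A^-|<\min(\ell,|A^+|)/2\le\ell/2$. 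Your passing remark that every final bag has value $<3/4$ to $a$ is false (bags in $A^+$ end with value $>1$), but it is not used anywhere.
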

\begin{proof}
Assume for the sake of contradiction that $U_A \cap N^2 \neq \emptyset$.
Then, as discussed before, we fix an agent $a \in U_A \cap N^2$
and define $A^+$, $A^-$, $A^0$, and $\ell$.
\begin{align*}
n &= v_a([m]) = \sum_{k \in [n]} v_a(B'_k)
\TCNLA= \sum_{k \in A^-} v_a(B'_k) + \sum_{k \in A^+} v_a(B'_k) + \sum_{k \in A^0} v_a(B'_k)
\\ &< \frac{5}{6}|A^-| + \left(|A^+| + \frac{\ell}{12}\right) + |A^0|
    \reason{by \cref{thm:S:upper-5-6,thm:S:tricky-bound}}
\TCNLA= n + \frac{\ell}{12} - \frac{|A^-|}{6}
\end{align*}
Hence, $|A^-| < \ell/2$.

Now we show that there are enough goods in $[m] \setminus [2n]$ to fill the bags in $A^-$.
\begin{align*}
\frac{\ell}{4} &\le v_a([m] \setminus [2n])
    \reason{by \cref{thm:S:small-goods-bound}}
\\ &= \sum_{k \in A^-} (v_a(B'_k) - v_a(B_k))
    \reason{since $B_k' = B_k \subseteq [2n]$ for $k \in A^+ \cup A^0$}
\\ &< |A^-|\left(\frac{5}{6} - \frac{1}{2}\right)
    \reason{by \cref{thm:S:upper-5-6,thm:S:lower-1-2}}
\\ &= |A^-|\cdot\frac{1}{3} < \frac{\ell}{6},
    \reason{since $|A^-| < \ell/2$}
\end{align*}
which is a contradiction.
\end{proof}

By \cref{thm:S:n1-agents,thm:S:n2-agents}, we get that $U_A = \emptyset$, i.e.,
every agent gets a bag, and hence, $\bagFillHyp$'s output is $3/4$-MMS.

\section{\boldmath Better than \texorpdfstring{$3/4$}{3/4}-MMS}
\label{sec:betterSummary}

In this section, we give an overview of how to refine the techniques of \cref{sec:simple} to get
an algorithm that outputs a $(\frac{3}{4} + \imp)$-MMS allocation.
The details can be found in \coecref{sec:better}%
\ifNoAppendix{ of the full version of our paper \cite{akrami2023simplification}}.

\begin{theorem}
For any fair division instance with additive valuations,
a $(\frac{3}{4} + \imp)$-MMS allocation exists.
\end{theorem}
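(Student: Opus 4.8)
The plan is to follow the structure of \cref{sec:simple} verbatim, but now run $\bagFill$ with $\alpha = \frac{3}{4} + \delta$, where $\delta \defeq \imp$, and re-examine every inequality to see how much slack is lost. The reductions $\toOrdered$, $\reduce_\alpha$, $\normalize$ are still $\alpha$-MMS-preserving for this $\alpha$ since $\alpha \le 3/4 + 1/36 < 1$ and, crucially, $\alpha \le 3/4$ is \emph{not} needed for $R_1, R_2, R_3$ (only for $R_4$), so we must be slightly careful: $R_4(\alpha)$ is a valid $\alpha$-reduction only when $\alpha \le 3/4$. The fix is the standard one used by Garg–Taki: apply $R_4$ at threshold $3/4$ while applying $R_1,R_2,R_3$ at threshold $\alpha$; the preliminaries on valid reductions still go through, and \cref{thm:vr-upper-bounds} then gives $v_{i,j} < \alpha\MMS_i/k$ for $j > (k-1)n$ from $R_k(\alpha)$-irreducibility. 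After $\normalize$ every agent's MMS is $1$, so the target per agent is $\alpha = \frac34 + \delta$.

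The key steps, in order, mirror the lemmas of \cref{sec:simple} with the ``better'' branch of the macros (the commented $+\delta$ terms) switched on. First, \cref{thm:pair-bound} is unchanged (it concerns value $>1$, not $\ge \alpha$), so $v_i(B_k) > 1 \implies v_{i,k} > 2/3$ still holds. Next, the analogue of \cref{thm:S:upper-1} becomes $v_i(B_k) \le 1 \implies v_i(B'_k) < 1 + \frac{4\delta}{3}$, using $v_i(B'_k\setminus g) < \alpha$ and $v_{i,g} < \alpha/4 = \frac14 + \frac{\delta}{3}$. The $N^1$-agent argument then needs $n < (n-1) + \alpha + (n-1)\cdot\frac{4\delta}{3}$ to fail, i.e. we need $\alpha - \frac14 + (n-1)\frac{4\delta}{3} < 0$ — wait, the inequality we get is $n < (n-1)(1+\frac{4\delta}{3}) + \alpha$, which gives $0 < \alpha - 1 + (n-1)\frac{4\delta}{3}$; for a contradiction we need $\alpha - 1 + (n-1)\frac{4\delta}{3} \le 0$, i.e. $\frac{4(n-1)\delta}{3} \le \frac14 - \delta$, i.e. $\delta \le \frac{3}{4(4n-1)} = \frac{3}{16n-4}$. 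This is exactly the $\frac{3}{16n-4}$ term in $\imp$, so \cref{thm:S:n1-agents} survives precisely under that bound on $\delta$. Then $v_{i,2n+1} < 3/4+\delta - 2/3 = 1/12 + \delta$ (analogue of \cref{thm:S:upper-1-12}), giving $v_a(B'_k) < \alpha + (1/12+\delta) = 5/6 + 2\delta$ for $k \in A^-$ (analogue of \cref{thm:S:upper-5-6}). In \cref{thm:S:tricky-bound} the bound becomes $\sum_{k\in A^+} v_a(B'_k) < |A^+| + \min(\ell,|A^+|)(\frac{1}{12} + \delta)$, from $v_{a,k} < 3/4+\delta$, $v_{a,2n-k+1} \le 1/3$, and the pairing argument (which is threshold-independent). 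Finally, in \cref{thm:S:small-goods-bound} we get $v_a([m]\setminus[2n]) > \ell(\frac14 - \delta)$, and \cref{thm:S:lower-1-2} gives $v_i(B_k) > 1/2 - 2\delta$ on $N^2$.

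The endgame (analogue of \cref{thm:S:n2-agents}) is where the $1/36$ comes in. Summing over all bags for the unsatisfied agent $a \in U_A \cap N^2$: $n < (\frac56 + 2\delta)|A^-| + |A^+| + \ell(\frac1{12}+\delta) + |A^0|$. With $n = |A^+| + |A^-| + |A^0|$ this rearranges to $0 < -\frac{|A^-|}{6} + 2\delta|A^-| + \ell(\frac1{12}+\delta)$, hence $|A^-| < \frac{\ell(1/12 + \delta)}{1/6 - 2\delta} = \frac{\ell(1+12\delta)}{2(1-12\delta)}$. The second bound: $\ell(\frac14 - \delta) \le v_a([m]\setminus[2n]) = \sum_{k\in A^-}(v_a(B'_k) - v_a(B_k)) < |A^-|\big((\frac56+2\delta) - (\frac12 - 2\delta)\big) = |A^-|(\frac13 + 4\delta)$. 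Combining, $\ell(\frac14 - \delta) < \frac{\ell(1+12\delta)}{2(1-12\delta)}\cdot(\frac13 + 4\delta)$, and we need the reverse to get a contradiction: $\frac14 - \delta \ge \frac{(1+12\delta)(1/3 + 4\delta)}{2(1-12\delta)}$, i.e. $2(1-12\delta)(\frac14-\delta) \ge (1+12\delta)(\frac13+4\delta)$. Expanding, LHS $= \frac12 - 2\delta - 6\delta + 24\delta^2 = \frac12 - 8\delta + 24\delta^2$ and RHS $= \frac13 + 4\delta + 4\delta + 48\delta^2 = \frac13 + 8\delta + 48\delta^2$, so the condition is $\frac16 - 16\delta - 24\delta^2 \ge 0$, which holds for $\delta \le \frac{1}{36}$ (one checks $\frac16 - \frac{16}{36} - \frac{24}{1296} > 0$ is false — actually $\frac{16}{36} \approx 0.444 > \frac16$, so one should be more careful and the honest cutoff is somewhat smaller; the precise constant $1/36$ is obtained by keeping track of the exact $\delta^2$ terms and possibly a sharper version of one of the lemmas). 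The main obstacle is exactly this: squeezing the endgame inequality to yield a clean constant like $1/36$ rather than an ugly root of the quadratic, which likely requires the more careful case analysis deferred to \cref{sec:better} — e.g. treating small $\ell$ separately or sharpening \cref{thm:S:lower-1-2}/\cref{thm:S:upper-5-6} on the relevant bags. I would therefore set $\delta = \imp$ so that \emph{both} the $N^1$ constraint ($\delta \le \frac{3}{16n-4}$) and the $N^2$ constraint ($\delta \le \frac1{36}$) are satisfied, and verify that no other lemma imposes a tighter bound.
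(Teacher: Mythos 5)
There are two genuine gaps. First, your repair of $R_4$ is not the one the paper (or Garg--Taki) uses, and as stated it fails: if you trigger $R_4$ at threshold $3/4$ while targeting $\alpha = \frac34+\delta$, then whenever $R_4$ fires the receiving agent $i$ gets a bundle with $v_i(S_4)\ge\frac34\MMS_i$ but possibly $v_i(S_4)<\alpha\MMS_i$, so the final allocation is only $3/4$-MMS for that agent and the theorem is lost; triggering $R_4$ at threshold $\alpha$ instead satisfies the receiver but destroys validity, since in Case~2 of the validity proof the merged bundle has value only $(2-\frac{4\alpha}{3})\beta<\beta$ once $\alpha>3/4$. The paper's actual fix (\cref{sec:imp:fix-R4}) is the \emph{dummy goods} technique: apply $R_4(\alpha)$ and create a fictional good of value $\max(0,v_j(S_4)-\MMS_j)<\frac{4\delta}{3}\MMS_j$ for each remaining agent $j$ to restore their MMS. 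These dummy goods then thread through the entire analysis --- they contribute a $|D|\cdot\frac{4\delta}{3}$ term to the $N^1$ counting argument (\cref{thm:B:n1-agents}, via \cref{thm:dummy-bound}) and a $-\frac{4|D|}{3}\delta$ correction to the small-goods bound --- none of which appears in your accounting.

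Second, your endgame does not reach $1/36$, and you concede as much. Combining $|A^-| < \frac{\ell(1+12\delta)}{2(1-12\delta)}$ with $\ell(\frac14-\delta) < |A^-|(\frac13+4\delta)$ requires $\frac16-16\delta-24\delta^2\ge 0$, i.e.\ $\delta \le \frac{\sqrt{272}-16}{48}\approx 0.0103$, well short of $\frac{1}{36}\approx 0.0278$. This is not a matter of ``keeping track of the exact $\delta^2$ terms'': the constant $\frac{1}{36}$ comes from the substantially finer argument in \cref{thm:B:n2-agents}, which splits $A^-$ into the set $S$ of bags still worth less than $\frac34+\delta$ at termination versus the rest, parametrizes by $x=|S|-1$ and $y=|A^-|-|S|$, derives two competing lower bounds on $\delta$ (one from the total-value count, one from the small-goods count, both involving $\ell$, $|A^+|$, and $|D|$), and minimizes the worst case over $\ell$ via \cref{thm:trinomial-ratio} with $p=2$, $q=\frac{\sqrt{17}-1}{2}$, $r=2$, followed by a case split on $|A^+|\le 2$ versus $|A^+|\ge 3$ (the latter forcing $|D|\le n-4$ and producing the $\frac{9}{4n-1}$ branch). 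Since this optimization is precisely where the claimed constant originates, deferring it leaves the theorem unproved; with your two fixes combined you would only establish existence of roughly $(\frac34+\min(0.01,\frac{3}{16n-4}))$-MMS allocations, and even that only after the $R_4$ issue is repaired with dummy goods.
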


Algorithm $\approxMMS$ from \cref{sec:simple} does not work with $\alpha > 3/4$,
since $R_4(\alpha)$ may not be a valid reduction.
To fix this, we modify $R_4(\alpha)$ using the \emph{dummy goods} technique from \cite{garg2021improved}.

Consider the fair division instance $([n], [m], v)$.
When performing $R_4(\alpha)$, in addition to giving the goods $S_4 \defeq \{1, 2n+1\}$
to some agent $i$ for whom $v_i(S_4) \ge \alpha\MMS_i$,
we create a \emph{dummy good} $g$ where $v_j(g) \defeq \max(0, v_j(S_4) - \MMS_j)$
for each agent $j \neq i$.
With this change, $R_4(\alpha)$ becomes a valid reduction even for $\alpha > 3/4$.
See \coecref{sec:imp:fix-R4} for a proof.
Note that dummy goods are fictional, i.e.,
they exist solely to guide the valid reductions.
No agent is allocated a dummy good.

Formally, a fair division instance with dummy goods
is represented as a tuple $\Ical \defeq (N, M, v, D)$,
where $D$ is the set of dummy goods and $M$ is the set of non-dummy goods.
We can extend the concepts of \cref{sec:ordered} (ordered instance),
\cref{sec:valid-redn} (valid reductions), and \cref{sec:normalized} (normalized instance)
to instances with dummy goods. See \coecref{sec:imp:mods} for details.
In particular, instance $(N, M, v, D)$ is ordered iff $(N, M, v)$ is ordered,
and $(N, M, v, D)$ is normalized iff $(N, M \cup D, v)$ is normalized.

With these modifications, we can extend $\approxMMS$ to the case where $\alpha > 3/4$.
$\approxMMS$ first transforms the instance into an ordered, normalized, and
totally-$\alpha$-irreducible instance. Then it discards all the dummy goods
and allocates the remaining goods using the algorithm $\bagFillHyp$.
In \coecref{sec:imp:analysis}, we show that when $\alpha \le \frac{3}{4} + \imp$,
$\bagFill$ allocates a bag of value at least $\alpha$ to every agent.
Our proof is almost the same as that in \cref{sec:simple}.
The main difference is that the analogue of \cref{thm:S:n2-agents}
(\coecref{thm:B:n2-agents} in \coecref{sec:better})
involves more elaborate algebraic manipulations so that we can get tighter bounds.

\section{Tight Example}
\label{sec:tight-example}

We give an almost tight example for our algorithm
and Garg and Taki's \shortcite{garg2021improved} algorithm.
We show that these algorithms' output on this example is not better than
$(\frac{3}{4} + \frac{3}{8n-4})$-MMS.

\begin{example}
\label{ex:1}
Consider a fair division instance with $n$ agents and $m = 3n-1$ goods.
All agents have the same valuation function $u$, where
\[ u(j) \defeq \begin{cases}
\displaystyle \frac{2n-1-\floor{(j-1)/2}}{4n-2} & \textrm{ if } j \le 2n
\\ \displaystyle \frac{n}{4n-2} & \textrm{ if } j > 2n
\end{cases}. \]
\end{example}

\begin{lemma}
\label{thm:ex1-mms}
\Cref{ex:1} is normalized.
\end{lemma}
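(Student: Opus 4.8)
The plan is to exhibit, for each agent, an explicit partition of the $m = 3n-1$ goods into $n$ bundles each of value exactly $1$; since all agents share the valuation $u$, one partition suffices. Observe first that the values $u(1), \ldots, u(2n)$ come in consecutive pairs: $u(2t-1) = u(2t) = \frac{2n-t}{4n-2}$ for $t \in [n]$, so the pairs take the values $\frac{2n-1}{4n-2}, \frac{2n-2}{4n-2}, \ldots, \frac{n}{4n-2}$, and the $n-1$ extra goods $g_{2n+1}, \ldots, g_{3n-1}$ each have value $\frac{n}{4n-2}$. The natural guess is to form $n-1$ ``mixed'' bundles of size three, each consisting of one good of value $\frac{2n-t}{4n-2}$ (for $t = 1, \ldots, n-1$), one good of value $\frac{n-1+t}{4n-2}$ from the opposite end of the first $2n$ goods, and one good of value $\frac{n}{4n-2}$; the remaining bundle is the pair $\{g_{2n-1}, g_{2n}\}$, both of value $\frac{n}{4n-2}$, together with no extra good.

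The key step is the arithmetic check that each bundle sums to $1$. For a mixed bundle pairing the $t\Th$ largest pair-value with the $t\Th$ smallest pair-value among the first $2n$ goods plus one copy of $\frac{n}{4n-2}$, the sum is
\[
\frac{2n-t}{4n-2} + \frac{n+t-1}{4n-2} + \frac{n}{4n-2}
= \frac{(2n-t) + (n+t-1) + n}{4n-2}
= \frac{4n-1}{4n-2},
\]
which is \emph{not} $1$ — so I would instead pair each top pair-value with a bottom pair-value \emph{without} an extra good for most bundles, and distribute the $n-1$ copies of $\frac{n}{4n-2}$ so that the totals come out right. Concretely: pair $u(2t-1) = \frac{2n-t}{4n-2}$ with $u(2(n+1-t)) = \frac{n+t-1}{4n-2}$ for $t \in [n]$; each such pair sums to $\frac{(2n-t)+(n-1+t)}{4n-2} = \frac{3n-1}{4n-2}$, and we must then add value $\frac{1}{2} \cdot \frac{4n-2}{4n-2} - \frac{3n-1}{4n-2}$... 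I will recompute: we need each bundle to total $\frac{4n-2}{4n-2}$, so the deficit of a bare pair is $\frac{(4n-2)-(3n-1)}{4n-2} = \frac{n-1}{4n-2}$, which is \emph{not} a multiple of $\frac{n}{4n-2}$. Since the counting must work exactly, the correct combinatorial structure is: use $n-1$ bundles of the form $\{g_{2t-1}, g_{2t}, g\}$ where $g$ is an extra good — wait, $u(2t-1)+u(2t) = \frac{2(2n-t)}{4n-2}$, plus $\frac{n}{4n-2}$ gives $\frac{5n-2t}{4n-2}$, equal to $1$ iff $t = \frac{n}{2}$. None of these naive groupings works termwise, which tells me the right partition mixes goods across pairs; the honest approach is to index the first $2n$ goods by value level $\lceil j/2 \rceil$ and set up a system where bundle $r$ ($r \in [n]$) receives a multiset of value-levels summing to $2(2n-1)$ in numerator, using each level-$\ell$ exactly twice over all bundles and using level $n$ (the extras) $n-1$ extra times.

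I expect the \textbf{main obstacle} to be finding the correct explicit partition rather than verifying it: once the bundles are written down, normalizedness is a one-line sum, but guessing the grouping requires solving the integer ``subset-sum'' bookkeeping that $\sum_{j=1}^{3n-1} u(j) = n$ forces. Since $\sum_{j=1}^{2n} u(j) = \frac{2}{4n-2}\sum_{t=1}^{n}(2n-t) = \frac{2}{4n-2} \cdot \frac{n(3n-1)}{2} = \frac{n(3n-1)}{4n-2}$ and $\sum_{j=2n+1}^{3n-1} u(j) = (n-1)\cdot\frac{n}{4n-2} = \frac{n(n-1)}{4n-2}$, the grand total is $\frac{n(3n-1)+n(n-1)}{4n-2} = \frac{n(4n-2)}{4n-2} = n$, confirming feasibility of a value-$1$-per-bundle partition; the remaining work is the purely combinatorial design of which goods go in which bundle, which I would present as an explicit table (analogous to \cref{fig:ell}) and then verify each row sums to $1$ by the telescoping identity that pairing value-level $t$ with value-level $3n-1-t$... $/(4n-2)$ and patching with copies of level $n$ yields numerator $4n-2$.
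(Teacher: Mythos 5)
There is a genuine gap: you never actually exhibit the partition, and the statement has no proof without one. You correctly reduce the problem to finding a single partition of $[3n-1]$ into $n$ bundles of value $1$ (one partition suffices since all agents share $u$), and your check that $u([3n-1]) = n$ is a correct feasibility computation, but it is only a necessary condition. You then try three candidate groupings, correctly observe that each fails, and stop at a sketch whose arithmetic is itself off by one: pairing value-level $t$ with value-level $3n-1-t$ and adding one extra good of level $n$ gives numerator $t + (3n-1-t) + n = 4n-1$, not $4n-2$, so the ``telescoping identity'' you appeal to at the end does not close the argument.

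The missing idea is to treat the two top goods specially. Goods $1$ and $2$ each have value $\frac{2n-1}{4n-2}$, so $\{1,2\}$ is by itself a bundle of value exactly $1$ and consumes no extra good. Once these are set aside, the remaining $2n-2$ goods of $[2n]$ can be paired so that the two levels sum to $3n-2$ (rather than $3n-1$, which is the off-by-one plaguing all of your attempts), and adding one good of level $n$ then gives numerator exactly $4n-2$. Concretely, the paper takes $M_1 := \{1,2\}$ and $M_{i+1} := \{i+2,\; 2n+1-i,\; 2n+i\}$ for $i \in [n-1]$; one checks $u(i+2) = \frac{2n-1-\lceil i/2\rceil}{4n-2}$ and $u(2n+1-i) = \frac{n-1+\lceil i/2\rceil}{4n-2}$, so each triple sums to $\frac{(2n-1-\lceil i/2\rceil)+(n-1+\lceil i/2\rceil)+n}{4n-2} = 1$, and the $M_i$ are disjoint and cover $[3n-1]$. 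Your write-up contains all the raw material for this verification but not the construction itself, so as it stands it does not establish the lemma.
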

\begin{proof}
Let $M_1 \defeq \{1, 2\}$ and for $i \in [n-1]$, let $M_{i+1} \defeq \{i+2, 2n+1-i, 2n+i\}$.
Then for any $i \neq j$, $M_i \cap M_j = \emptyset$.
Also, $u(M_1) = u(1) + u(2) = 1$ and for each $i \in [n-1]$,
\begin{align*}
& (4n-2)u(M_{i+1})
\\ &= (4n-2)(u(i+2) + u(2n+1-i) + u(2n+i))
\\ &= \Big(2n-1-\floor[\Big]{\frac{i+1}{2}}\Big) + \Big(2n-1-\floor[\Big]{\frac{2n-i}{2}}\Big) + n
\\ &= (2n-1-\ceil{i/2}) + (n-1+\ceil{i/2}) + n
\\ &= 4n-2.
\qedhere \end{align*}
\end{proof}

Define the $\mmsScore$ of an allocation as the maximum $\alpha$ such that
it is an $\alpha$-MMS allocation.
Formally, for an allocation $A \defeq (A_1, \ldots, A_n)$,
\[ \mmsScore(A) \defeq \min_{i=1}^n \frac{v_i(A_i)}{\MMS_i}. \]

\begin{theorem}
\label{thm:ex1-gt}
Let $\Ical$ be the fair division instance of \cref{ex:1}.
Let $S_1 \defeq \{1\}$, $S_2 \defeq \{n, n+1\}$, $S_3 \defeq \{2n-1, 2n, 2n+1\}$,
$S_4 \defeq \{1, 2n+1\}$.
Consider a fair division algorithm that either outputs $\bagFillHyp(\Ical, \alpha)$ for some $\alpha$,
or allocates the set $S_k$, for some $k \in [4]$, to an agent $i$,
and allocates the remaining goods to the remaining agents in an unspecified way.
Let $A$ be the allocation output by this algorithm. Then
\[ \mmsScore(A) \le \frac{3n}{4n-2} = \frac{3}{4} + \frac{3}{8n-4}. \]
\end{theorem}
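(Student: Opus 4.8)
The plan is to analyze each of the finitely many possible "moves" the algorithm could make on this identical-valuation instance and show that every one of them forces some agent to receive a bundle of value at most $3n/(4n-2)$. Since all agents share the valuation $u$ and the instance is normalized (by \cref{thm:ex1-mms}), every agent has $\MMS_i = 1$, so $\mmsScore(A) = \min_i v_i(A_i) = \min_i u(A_i)$, and it suffices to exhibit, in each case, an agent whose bundle has $u$-value at most $3n/(4n-2)$. Note $3n/(4n-2) = 3/4 + 3/(8n-4)$ is a straightforward algebraic identity, so I will not belabor it.

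First I would dispatch the four reduction-rule cases. If the algorithm allocates $S_k$ to some agent $i$ and then hands the remaining $m - |S_k|$ goods to the other $n-1$ agents arbitrarily, then by pigeonhole (and since $u \ge 0$) some agent among those $n-1$ gets a bundle of value at most $(u(M) - u(S_k))/(n-1) = (n - u(S_k))/(n-1)$. So I just need to check that $u(S_k)$ is large enough that this quantity is $\le 3n/(4n-2)$ for each $k \in [4]$; concretely $u(S_1) = u(1) = (2n-1)/(4n-2)$, $u(S_2) = u(n)+u(n+1)$, $u(S_3) = u(2n-1)+u(2n)+u(2n+1)$, $u(S_4) = u(1)+u(2n+1)$, all computed from the closed form. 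In fact one expects each $S_k$ to be a "valid $\alpha$-reduction" precisely at the threshold $\alpha = 3n/(4n-2)$, which is what makes the example tight; I would verify $(n - u(S_k))/(n-1) \le 3n/(4n-2)$, i.e. $u(S_k) \ge n - 3n(n-1)/(4n-2) = n(n+1)/(4n-2)$, holds for each $k$.

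The substantive case is when the algorithm outputs $\bagFill(\Ical,\alpha)$. Here I would trace $\bagFill$ on this instance: it forms bags $B_k = \{k, 2n+1-k\}$ for $k \in [n]$, with leftover goods $U_G = \{2n+1,\dots,3n-1\}$, a set of $n-1$ identical goods each of value $n/(4n-2)$. Using the closed form, $u(B_k) = u(k) + u(2n+1-k)$; one computes $u(B_k) = \big((2n-1-\lfloor (k-1)/2\rfloor) + (2n-1-\lfloor (2n-k)/2\rfloor)\big)/(4n-2)$, which is slightly below $1$ for every $k$ (this matches the fact that all agents lie in $N^1$ — the example has identical valuations, so $v_i(B_k) \le 1$ for all $i,k$, and in fact strictly less). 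Consequently no bag is ever allocated at initialization; $\bagFill$ must first pour leftover goods into bags. Since there are only $n-1$ leftover goods and $n$ bags, after all pouring at least one bag — say $B_j$ — never receives a leftover good, so $B'_j = B_j$ with $u(B'_j) < 1 < 3n/(4n-2)$... wait, that's the wrong direction. Let me instead argue: the total value is exactly $n$, the $n-1$ leftover goods total $(n-1)n/(4n-2)$, and these must be distributed among the $n$ bags; whichever agent ends up with the bag containing the fewest leftover goods — at most $\lfloor (n-1)/n \rfloor = 0$ or, more carefully, the bag $B'_j$ minimizing $u(B'_j)$ — has value at most the average shortfall. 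Precisely, $\min_k u(B'_k) \le \frac{1}{n}\sum_k u(B'_k) = \frac{1}{n}u(M) = 1$; to get below $3n/(4n-2)$ I would instead use that at least one bag gets no extra good (pigeonhole: $n-1$ goods, $n$ bags), so its final value is $u(B_k) = u(k)+u(2n+1-k)$ for that $k$, and I would check $\min_{k\in[n]} \big(u(k)+u(2n+1-k)\big) \le 3n/(4n-2)$ from the closed form — in fact $u(B_1) = u(1)+u(2n) = \frac{2n-1}{4n-2}+\frac{n}{4n-2} = \frac{3n-1}{4n-2} \le \frac{3n}{4n-2}$, which already gives it.

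The main obstacle is the bookkeeping with the floor functions in the closed form of $u$: one must evaluate $u(k) + u(2n+1-k)$ and the various $u(S_k)$ carefully, splitting into the parity cases of $k$ and handling the boundary $j = 2n$ versus $j = 2n+1$ correctly. None of this is conceptually hard, but it is where an error would creep in, so I would present these evaluations explicitly (as in the computation inside the proof of \cref{thm:ex1-mms}) and then the inequalities $(n - u(S_k))/(n-1) \le 3n/(4n-2)$ and $u(B_1) = (3n-1)/(4n-2) \le 3n/(4n-2)$ follow by inspection. A secondary subtlety is making sure the "unspecified way" of allocating the remaining goods in the reduction cases genuinely cannot be exploited to beat the bound — but since $u \ge 0$ and we only use the pigeonhole averaging lower bound on the worst recipient, no clever allocation of leftovers can help, so the bound holds regardless.
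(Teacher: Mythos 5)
Your handling of the $\bagFill$ branch is sound and reaches the same conclusion as the paper: every initial bag is worth exactly $(3n-1)/(4n-2)$ (the paper then cases on whether $\alpha$ exceeds this threshold, while you observe that among the $n$ bags at least one receives none of the $n-1$ leftover goods, so whoever is assigned it gets value $(3n-1)/(4n-2)\le 3n/(4n-2)$); either route works.

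The reduction branch, however, has a genuine gap. The agent who is shortchanged there is agent $i$ herself: by the hypothesis of the theorem she receives exactly the set $S_k$ and nothing else, so her value is $u(S_k)$, and a direct computation gives $u(S_1)=1/2$, $u(S_2)=u(S_4)=(3n-1)/(4n-2)$, and $u(S_3)=3n/(4n-2)$, each at most $3n/(4n-2)$ --- which already finishes this case. Your argument instead tries to bound the worst of the \emph{other} $n-1$ agents by averaging the leftover value, which requires verifying $u(S_k)\ge n(n+1)/(4n-2)$; that inequality is false for every $k\in[4]$ once $n\ge 3$, since even the largest value $u(S_3)=3n/(4n-2)$ falls short of $n(n+1)/(4n-2)$. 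Concretely, for $k=1$ and $n=3$ the averaging bound is $(n-u(S_1))/(n-1)=2.5/2=1.25>1$, so it can never certify a score below $3n/(4n-2)<1$. The intuition behind the example is the opposite of what your averaging step assumes: each $S_k$ is worth too \emph{little} (the instance sits exactly at the irreducibility threshold), so applying a reduction rule harms its recipient, while the remaining agents are, if anything, left with more than their fair share.
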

\begin{proof}
$u(S_1) = 1/2$, $u(S_2) = u(S_4) = (3n-1)/(4n-2)$, and $u(S_3) = 3n/(4n-2)$.
Hence, if the algorithm allocates $S_k$ to an agent $i$, for some $k \in [4]$,
then that agent will get a bundle of value at most $3n/(4n-2)$.

Now suppose that the algorithm outputs $\bagFill(\Ical, \alpha)$.
Every bag initially has value $\tau \defeq (3n-1)/(4n-2)$.
If $\alpha \le \tau$, then no bag receives any more items,
and each agent gets a bag of value $\tau$.
If $\alpha > \tau$, then we run out of goods and $\bagFill$ fails (i.e., returns \texttt{null}),
since there are $n$ bags but only $n-1$ goods in $[m] \setminus [2n]$.
\end{proof}

\section{Conclusion}
\label{sec:conclusion}

In fair division of indivisible goods, MMS is one of the most popular notions of fairness,
and determining (tight lower and upper bounds on) the maximum $\alpha$ for which
$\alpha$-MMS allocations are guaranteed to exist is an important open problem.

To gain a better understanding of this problem, we thoroughly studied
Garg and Taki's \shortcite{garg2021improved} algorithm for obtaining $3/4$-MMS allocations.
We considerably simplified its analysis and our techniques helped
improve the best-known MMS approximation factor to $\frac{3}{4} + \imp$.
Furthermore, we presented a tight example that reveals a fundamental barrier
towards improving the MMS approximation guarantee using techniques in \cite{garg2021improved}.

\appendix

\section{\boldmath Existence of \texorpdfstring{$(\frac{3}{4} + \imp)$-\mms}{(3/4+\impText)-MMS}~Allocations}
\label{sec:better}

\renewcommand{\ifBetter}[2]{#1}

In this section, we show how to refine the techniques of \cref{sec:simple} to get
an algorithm that outputs a $(\frac{3}{4} + \imp)$-MMS allocation.
We would like to use the algorithm $\approxMMS$ from \cref{sec:simple}
with $\alpha$ slightly more than $3/4$. However,
$R_4(\alpha)$ may not be a valid reduction when $\alpha > 3/4$.
Hence, we will slightly modify the $\approxMMS$ algorithm so that it works even for $\alpha > 3/4$.
This modification is based on the \emph{dummy goods} technique from \cite{garg2021improved}.

\subsection{\boldmath Fixing \texorpdfstring{$R_4$}{R\_4}}
\label{sec:imp:fix-R4}

Let us recall the proof of $R_4(\alpha)$ being a valid reduction when $\alpha \le 3/4$.

\begin{lemma}
Consider the fair division instance $([n], [m], v)$,
where $v_{i,1} \ge \ldots \ge v_{i,m}$ for each agent $i$
and the instance is $R_1(\alpha)$ and $R_3(\alpha)$ irreducible.
Let $S_4 \defeq \{1, 2n+1\}$ and $v_i(S_4) \ge \alpha\MMS_{v_i}^n([m])$.
Then giving $S_4$ to agent $i$ is a valid reduction if $\alpha \le 3/4$.
\end{lemma}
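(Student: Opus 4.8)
The plan is to check the two defining conditions of a valid $\alpha$-reduction directly. The first condition, $v_i(S_4) \ge \alpha\MMS_{v_i}^n([m])$, is precisely the hypothesis, so there is nothing to do. All of the work is in the second condition: for every agent $j \in [n] \setminus \{i\}$ we must establish $\MMS_{v_j}^{n-1}([m] \setminus S_4) \ge \MMS_{v_j}^n([m])$. (Note $m \ge 2n+1$ is implicit, since $R_4$ is only invoked when $S_4$ is defined, and $n \ge 2$ since such a $j$ exists.)

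First I would extract two consequences of irreducibility for a fixed agent $j \neq i$, writing $\beta_j \defeq \MMS_{v_j}^n([m])$. Since the instance is $R_1(\alpha)$-irreducible, $v_{j,1} < \alpha\beta_j$. Since it is $R_3(\alpha)$-irreducible and $2n+1 > 2n = (3-1)n$, \cref{thm:vr-upper-bounds} with $k = 3$ gives $v_{j,2n+1} < \alpha\beta_j/3$. Using $\alpha \le 3/4$, these combine to $v_{j,1} + v_{j,2n+1} < (3/4 + 1/4)\beta_j = \beta_j$. This single inequality is the crux of the whole argument.

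Next I would take an MMS partition $P = (P_1, \ldots, P_n)$ of agent $j$ over $[m]$, so $v_j(P_t) \ge \beta_j$ for every $t$, and let $P_a$ be the bundle containing good $1$ and $P_b$ the bundle containing good $2n+1$. Then split on whether $a = b$. If $a = b$, simply discard the bundle $P_a$ (which contains all of $S_4$) and scatter its leftover goods $P_a \setminus S_4$ arbitrarily among the other $n-1$ bundles; this produces a partition of $[m] \setminus S_4$ into $n-1$ bundles, each still of value at least $\beta_j$. If $a \neq b$, instead merge these two bundles into $Q \defeq (P_a \cup P_b) \setminus S_4$ and leave the remaining $n-2$ bundles untouched; then $v_j(Q) = v_j(P_a) + v_j(P_b) - v_{j,1} - v_{j,2n+1} \ge 2\beta_j - (v_{j,1} + v_{j,2n+1}) > \beta_j$ by the inequality above, so again $[m] \setminus S_4$ is partitioned into $n-1$ bundles of value at least $\beta_j$. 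In both cases $\MMS_{v_j}^{n-1}([m] \setminus S_4) \ge \beta_j = \MMS_{v_j}^n([m])$, which is the second condition, so the reduction is valid.

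The hard part — and really the only nontrivial step — is the $a \neq b$ case: unlike the usual $R_1$, $R_2$, $R_3$ arguments we cannot just delete a bundle, so the merged bundle must absorb the loss of both good $1$ and good $2n+1$ at once. The bound $v_{j,1} + v_{j,2n+1} < \beta_j$, obtained by charging the $R_1$-irreducibility slack to $v_{j,1}$ and the $R_3$-irreducibility slack to $v_{j,2n+1}$, is exactly what rescues it; this is also precisely why both $R_1$- and $R_3$-irreducibility are assumed and why the restriction $\alpha \le 3/4$ appears (at $\alpha = 3/4$ the two slacks sum to exactly $\beta_j$, so the inequality is tight).
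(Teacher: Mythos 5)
Your proof is correct and follows essentially the same route as the paper's: the same case split on whether goods $1$ and $2n+1$ lie in the same bundle of agent $j$'s MMS partition, and the same key bound $v_j(S_4) < \alpha\beta_j + \alpha\beta_j/3 \le \beta_j$ derived from $R_1$- and $R_3$-irreducibility via \cref{thm:vr-upper-bounds}. The only cosmetic difference is that the paper phrases the merged-bundle bound as $\beta(2 - 4\alpha/3) \ge \beta$ rather than your $2\beta_j - (v_{j,1}+v_{j,2n+1}) > \beta_j$, which is the same inequality.
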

\begin{proof}
Fix an agent $j \neq i$ and let $P$ be $j$'s MMS partition.
Let $\beta \defeq \MMS_{v_j}^n([m])$. We get two cases.

\textbf{Case 1}: goods $1$ and $2n+1$ belong to the same bundle $P_k$.
\\ Then redistribute the goods $P_k \setminus S_4$ among the remaining bundles in $P$,
and give $S_4$ to agent $i$.
The bundles $\{P_t \mid t \neq k\}$ give us a partition of $[m] \setminus S_4$
among agents $[n] \setminus \{i\}$, and each bundle has value at least $\beta$.
Hence, $\MMS_{v_j}^{n-1}([m] \setminus S_4) \ge \beta$, so $R_4(\alpha)$ is a valid reduction.
Note that this case works for all $\alpha \le 1$.

\textbf{Case 2}: goods $1$ and $2n+1$ belong to different bundles of $P$.
Let these bundles be $P_{n-1}$ and $P_n$ (\wLoG{}).
Define $Q \defeq (Q_1, \ldots, Q_{n-1})$ as
\[ Q_k \defeq \begin{cases}P_k & \textrm{ if } k \in [n-2]
\\ P_{n-1} \cup P_n \setminus S_4 & \textrm{ if } k = n-1
\end{cases}. \]
To show that $R_4(\alpha)$ is a valid reduction, we just need to show that $v_j(Q_{n-1}) \ge \beta$.
$R_1(\alpha)$-irreducibility implies $v_{j,1} < \alpha\beta$.
$R_3(\alpha)$-irreducibility implies $v_{j,2n+1} < \alpha\beta/3$ by \cref{thm:vr-upper-bounds}.
Hence, $v_j(S_4) < 4\alpha\beta/3$. Hence,
\begin{align*}
v_j(Q_{n-1}) &= v_j(P_{n-1}) + v_j(P_n) - v_j(S_4)
\TCNLA> \beta + \beta - 4\alpha\beta/3 = \beta(2 - 4\alpha/3).
\end{align*}
When $\alpha \le 3/4$, we get $v_j(Q_{n-1}) \ge \beta$.
Hence, $R_4(\alpha)$ is a valid reduction.
\end{proof}

Note that case 2 of the proof fails when $\alpha > 3/4$ because $Q_{n-1}$ doesn't have enough value.
To remedy this, we add a \emph{dummy good} of value $\max(0, v_j(S_4) - \beta)$ to $Q_{n-1}$.

Note that these dummy goods are fictional. They exist solely to guide the valid reductions.
Once we obtain an ordered, normalized, totally-$\alpha$-irreducible instance,
we throw away all the dummy goods in the beginning of $\bagFill$.
Hence, no agent receives a dummy good.

\subsection{Fair Division with Dummy Goods}
\label{sec:imp:mods}

We now formally define dummy goods, and what it means for an instance containing dummy goods
to be ordered, normalized, and irreducible.

A fair division instance is represented as a tuple $\Ical \defeq (N, M, v, D)$,
where $N$ is the set of agents, $M$ is the set of (non-dummy) goods,
$D$ is the set of dummy goods, and $v_{i,g}$ is agent $i$'s valuation for good $g \in M \cup D$.
An allocation $A \defeq (A_1, \ldots, A_{|N|})$ in $\Ical$ is a partition of $M$, i.e.,
dummy goods are not allocated but all other goods are allocated.

An allocation $A$ is $\alpha$-MMS for $\Ical$ if $v_i(A_i) \ge \MMS_{v_i}^{|N|}(M \cup D)$,
i.e., agents take dummy goods into consideration when computing their MMS values.
When the fair division instance $(N, M, v, D)$ is clear from context,
we write $\MMS_i$ instead of $\MMS_{v_i}^{|N|}(M \cup D)$.

An instance $(N, M, v, D)$ is ordered if $(N, M, v)$ is ordered.
We extend $\toOrdered$ to this setting by simply ignoring the dummy goods. Formally,
$\toOrdered((N, M, v, D))$ is defined as $(N, [|M|], \vhat, D)$,
where $(N, [|M|], \vhat) \defeq \toOrdered((N, M, v))$ and
$\vhat_{i,g} \defeq v_{i,g}$ for all $g \in D$.

$(N, M, v, D)$ is said to be normalized if $(N, M \cup D, v)$ is normalized.
$\normalize$ works analogously: it computes each agent's MMS partition of $M \cup D$,
and scales all goods (both dummy and non-dummy) so that each partition has value 1.

\begin{definition}[Valid reduction with dummy goods]
In a fair division instance $(N, M, v, D)$, suppose we give the goods $S \subseteq M$
to agent $i$ and create dummy goods $T$.
Then we are left with a new instance $(N \setminus \{i\}, M \setminus S, v, D \cup T)$.
Such a transformation is called a \emph{valid} $\alpha$-\emph{reduction} if
both of these conditions hold:
\begin{tightenum}
\item $v_i(S) \ge \alpha\MMS_{v_i}^{|N|}(M \cup D)$.
\item $\MMS_{v_j}^{|N|-1}((M \setminus S) \cup D \cup T) \ge \MMS_{v_j}^{|N|}(M \cup D)$
    for all $j \in N \setminus \{i\}$.
\end{tightenum}
\end{definition}

The reduction rules remain the same (as in \cref{defn:redn-rules}),
except that we modify $R_4(\alpha)$ when $\alpha > 3/4$:
if we give goods $S_4$ to agent $i$, then we create a dummy good $g$,
where $v_{j,g} \defeq \max(0, v_j(S_4) - \MMS_{v_j}^{|N|}(M \cup D))$ for each agent $j$.
We also modify the algorithm $\reduceHyp_{\alpha}$ to use the new $R_4(\alpha)$.

\subsection{Algorithm}
\label{sec:imp:analysis}

The high-level algorithm $\approxMMS$ (\cref{algo:gt}) is the same as that in \cref{sec:simple}.
The only changes are that we modify $\toOrderedHyp$, $\normalizeHyp$, and $\reduceHyp_{\alpha}$
as described in \cref{sec:imp:mods}.
$\bagFillHyp$ discards all dummy goods and then proceeds as before.

Let there be $n$ agents in the original fair division instance
and let $n'$ be the number of remaining agents after all reduction rules are applied
(i.e., we performed $n-n'$ valid reductions).
Let the input to $\bagFill$ be the instance $\Ical \defeq ([n'], [m], v, D)$,
which is ordered, normalized, and totally-$\alpha$-irreducible.
Then $n \ge n' + |D|$. Let $\delta \defeq \alpha - 3/4$.

\begin{lemma}
\label{thm:dummy-bound}
For every agent $i$ and every dummy good $g$, we have $v_{i,g} < 4\delta/3$.
\end{lemma}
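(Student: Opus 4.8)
The plan is to track each dummy good back to the reduction rule that created it, establish the bound there, and then show it survives the remaining reductions and the normalization step. First, observe that dummy goods are introduced only by the modified rule $R_4(\alpha)$, which is used only when $\alpha > 3/4$; if $\delta = \alpha - 3/4 \le 0$ then $D = \emptyset$ and there is nothing to prove, so assume $\delta > 0$. Fix a dummy good $g$ and an agent $i$ of the final instance. Good $g$ was created when $\reduce_{\alpha}$ applied $R_4(\alpha)$ to some intermediate instance; let $\nu$ be that instance's number of agents, let $\mu_i$ be $i$'s MMS value in it (taken over all its goods, dummy and non-dummy), and recall that the rule sets $v_{i,g} \defeq \max(0, v_i(S_4) - \mu_i)$, where $S_4 = \{g_1, g_{2\nu+1}\}$ consists of the largest and the $(2\nu+1)$-st largest non-dummy goods of that instance (this requires $|M| \ge 2\nu+1$; otherwise $S_4 = \emptyset$ and $g$ has value $0$, for which the claim is immediate since $\delta > 0$).

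Next I bound $v_i(S_4)$ at the moment $g$ is created. Since $R_4(\alpha)$ is applied only when $R_1(\alpha)$ and $R_3(\alpha)$ are inapplicable, that intermediate instance is $R_1(\alpha)$- and $R_3(\alpha)$-irreducible. $R_1(\alpha)$-irreducibility gives $v_{i,g_1} < \alpha\mu_i$, and $R_3(\alpha)$-irreducibility together with \cref{thm:vr-upper-bounds} applied with $k = 3$ to the non-dummy good $g_{2\nu+1}$ gives $v_{i,g_{2\nu+1}} < \alpha\mu_i/3$; here I use that the proof of \cref{thm:vr-upper-bounds} relies only on the defining inequality of $R_3(\alpha)$-irreducibility and the ordering of the non-dummy goods, hence carries over verbatim to instances with dummy goods. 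Adding these, $v_i(S_4) < \frac{4\alpha}{3}\mu_i$, so
\[
v_{i,g} = \max\!\left(0,\ v_i(S_4) - \mu_i\right) < \left(\tfrac{4\alpha}{3} - 1\right)\mu_i = \tfrac{4\delta}{3}\,\mu_i,
\]
using $\frac{4\alpha}{3} - 1 = \frac{4(\alpha - 3/4)}{3}$ and $\delta > 0$ to cover the case where the maximum equals $0$.

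It remains to carry this bound to the instance handed to $\bagFill$. Every valid reduction performed after $g$ is created leaves $i$'s value of $g$ unchanged and, by the second condition in the definition of a valid reduction, does not decrease $i$'s MMS value; hence in the (still un-normalized) output of $\reduce_{\alpha}$ we have $v_{i,g} < \frac{4\delta}{3}\MMS_{v_i}^{n'}([m] \cup D)$. Then $\normalize$ rescales each agent's goods so that this MMS value becomes $1$, and by \cref{thm:normalize} it does not increase the ratio of any good's value (including a dummy good's) to that agent's MMS value; hence after $\normalize$, $v_{i,g} < \frac{4\delta}{3}$. The final call to $\toOrdered$ leaves the values of dummy goods unchanged, so $v_{i,g} < 4\delta/3$ holds in the instance input to $\bagFill$, as required.

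The work is essentially bookkeeping: the main care points are keeping track of which instance (and thus which MMS value) each inequality refers to, observing that \cref{thm:vr-upper-bounds} transfers unchanged to the dummy-goods setting, that MMS values are monotone non-decreasing along valid reductions, and that $\normalize$ is MMS-ratio non-increasing on dummy goods as well as genuine goods.
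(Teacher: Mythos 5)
Your proposal is correct and follows essentially the same route as the paper: bound the dummy good's value by $\frac{4\delta}{3}$ times the creating instance's MMS value using $R_1$- and $R_3$-irreducibility with \cref{thm:vr-upper-bounds}, then observe that later valid reductions only increase the relevant MMS value and that $\normalize$ rescales it to $1$. The extra care you take (transferring \cref{thm:vr-upper-bounds} to the dummy-goods setting, the $S_4=\emptyset$ corner case, and $\toOrdered$ preserving dummy values) is all consistent with the paper's argument.
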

\begin{proof}
Assume $\delta > 0$, since otherwise there are no dummy goods.
Suppose $g$ was created when $R_4(\alpha)$ was applied to the instance
$\Icaltild \defeq (\Ntild, \Mtild, \vtild, \Dtild)$
to get $(\Ntild \setminus \{j\}, \Mtild \setminus S_4, \vtild, \Dtild \cup \{g\})$.
Let $\betatild \defeq \MMS_{\vtild_i}^{|\Ntild|}(\Mtild \cup \Dtild)$.
Since $\Icaltild$ is $R_1(\alpha)$ and $R_3(\alpha)$ irreducible,
by \cref{thm:vr-upper-bounds}, we get that $\vtild_i(S_4) < (4/3)\alpha\betatild$.
Hence, $\vtild_{i,g} \defeq \max(0, \vtild_i(S_4) - \betatild) < (4/3)\betatild\delta$.

Just before we $\normalize$ the instance, let $\beta$ be agent $i$'s MMS value.
Then $\beta \ge \betatild$, since the reduction rules are valid reductions.
After we $\normalize$ the instance, we scale down every good's value by $\beta$.
Hence, after $\normalize$, $g$'s value to agent $i$ is
$v_{i,g} = \vtild_{i,g} / \beta \le \vtild_{i,g} / \betatild < (4/3)\delta$.
\end{proof}

For $k \in [n']$, let $B_k \defeq \{k, 2n'+1-k\}$ be the initial contents of the $k\Th$ bag
and $B'_k$ be the $k\Th$ bag's contents after $\bagFill$ terminates.
Just like \cref{sec:simple}, we consider two groups of agents:
$N^1 \defeq \{i \in [n'] \mid \forall k \in [n'], v_i(B_k) \le 1\}$
and $N^2 \defeq [n'] \setminus N^1$.
Let $U_A$ be the set of agents that didn't receive a bag when $\bagFill$ terminated.
We first show that all agents in $N^1$ receive a bag, i.e., $U_A \cap N^1 = \emptyset$.
Then we show that $U_A \cap N^2 = \emptyset$.
Together, these facts establish that $\bagFill$ terminates successfully,
and hence its output is $\alpha$-MMS.

In this section, many lemmas and their proofs are very similar to those in \cref{sec:simple}.
Hence, their proofs have been moved to \cref{sec:missing-proofs}.

\begin{restatable}{lemma}{rthm:B:upper-1}
\label{thm:B:upper-1}
Let $i$ be any agent.
For all $k \in [n']$, if $v_i(B_k) \le 1$, then $v_i(B'_k) \le 1 + 4\delta/3$.
\end{restatable}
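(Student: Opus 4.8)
The plan is to mimic the proof of \cref{thm:S:upper-1} from \cref{sec:simple} almost verbatim, with the only modification being that the bound on a single good's value now carries the extra $\delta$-terms coming from the relaxed threshold $\alpha = 3/4 + \delta$ and from the presence of dummy goods. Recall that in the $\alpha = 3/4$ case the argument was: if $B'_k = B_k$ the claim is trivial; otherwise let $g$ be the last good added to bag $k$, observe that $v_i(B'_k \setminus \{g\}) < \alpha$ (since otherwise $\bagFill$ would have allocated that bag rather than adding $g$), bound $v_{i,g}$ using $R_3(\alpha)$-irreducibility via \cref{thm:vr-upper-bounds}, and add the two estimates.

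First I would dispose of the trivial case $B'_k = B_k$, where $v_i(B'_k) = v_i(B_k) \le 1 \le 1 + 4\delta/3$. Then I would assume $B_k \subsetneq B'_k$ and let $g$ be the last good added to bag $k$ by $\bagFill$. The key point is that $g$ is a non-dummy good with index $g > 2n'$ (the bags $U_G = [m] \setminus [2n']$ are exactly the goods available for filling), so by \cref{thm:vr-upper-bounds} applied with $k = 3$ (using $R_3(\alpha)$-irreducibility), $v_{i,g} < \alpha\MMS_i/3 = \alpha/3 = 1/4 + \delta/3$, since the instance is normalized and hence $\MMS_i = 1$. Next, since $g$ was added to bag $k$ rather than the bag being allocated, we have $v_i(B'_k \setminus \{g\}) < \alpha = 3/4 + \delta$. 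Adding the two inequalities gives
\[
v_i(B'_k) = v_i(B'_k \setminus \{g\}) + v_{i,g} < \left(\frac{3}{4} + \delta\right) + \left(\frac{1}{4} + \frac{\delta}{3}\right) = 1 + \frac{4\delta}{3},
\]
which is exactly the claimed bound. Notice that the $B'_k$ in play consists only of the non-dummy goods that actually go into the bag, so dummy goods never enter this particular estimate directly; \cref{thm:dummy-bound} is not needed here (it will be needed in the analogues of \cref{thm:S:small-goods-bound} and \cref{thm:S:n2-agents}).

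I do not expect any real obstacle: this is a routine adaptation, and indeed the paper's saved-proof macro \texttt{sp:upper-1} already packages both the $\delta = 0$ and $\delta > 0$ versions together via the \texttt{\textbackslash ifBetter} switch, so the proof is essentially a one-line substitution. The only point requiring a moment's care is confirming that the last good added, $g$, really satisfies $g > 2n'$ so that \cref{thm:vr-upper-bounds} applies with the bound $\alpha\MMS_i/3$ — but this is immediate from the definition of $U_G$ in $\bagFill$, since only goods outside the initial $2n'$ ever get added to a bag.
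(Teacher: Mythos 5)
Your proof is correct and is essentially identical to the paper's own argument (the paper literally reuses the saved proof of \cref{thm:S:upper-1} with the $\delta$-terms switched on): trivial case when the bag is unchanged, then bound the pre-final contents by $3/4+\delta$ and the last added good by $1/4+\delta/3$ via $R_3(\alpha)$-irreducibility and \cref{thm:vr-upper-bounds}. Your side remarks about $g > 2n'$ and the irrelevance of dummy goods here are also accurate.
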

\movedProofNote

\begin{lemma}
\label{thm:B:n1-agents}
If $\delta \le 3/(16n-4)$, then $U_A \cap N^1 = \emptyset$, i.e., every agent in $N^1$ gets a bag.
\end{lemma}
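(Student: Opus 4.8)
The plan is to adapt the proof of \cref{thm:S:n1-agents} from \cref{sec:simple} to account for the slack introduced by dummy goods and the larger threshold $\alpha = 3/4 + \delta$. Suppose for contradiction that some agent $i \in N^1$ does not receive a bag. Then at termination some bag $B'_j$ was never allocated, so $v_i(B'_j) < \alpha = 3/4 + \delta$. Since $i \in N^1$, every initial bag satisfies $v_i(B_k) \le 1$, so \cref{thm:B:upper-1} gives $v_i(B'_k) \le 1 + 4\delta/3$ for all $k \in [n']$.

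The key accounting step is that $\bigcup_{k \in [n']} B'_k$ is a subset of $[m]$, the non-dummy goods, and $v_i([m]) = v_i(M \cup D) - v_i(D) = n - v_i(D)$ (using that the instance is normalized, so agent $i$'s total value over $M \cup D$ equals $n'$... wait, it equals $|N'| = n'$, not $n$). Let me restate: after normalization $v_i(M \cup D) = n'$, and $|D| \le n - n'$, so by \cref{thm:dummy-bound}, $v_i(D) < (n-n') \cdot 4\delta/3$. Hence $v_i([m]) = v_i(M \cup D) - v_i(D) > n' - (n-n')\cdot 4\delta/3$. On the other hand,
\begin{align*}
v_i([m]) &\le \sum_{k \in [n']} v_i(B'_k) = v_i(B'_j) + \sum_{k \in [n'] \setminus \{j\}} v_i(B'_k)
\\ &< \left(\frac{3}{4} + \delta\right) + (n'-1)\left(1 + \frac{4\delta}{3}\right)
= n' - \frac14 + \delta + (n'-1)\frac{4\delta}{3}.
\end{align*}
Combining the two bounds, I get $n' - (n-n')\frac{4\delta}{3} < n' - \frac14 + \delta + (n'-1)\frac{4\delta}{3}$, i.e. $\frac14 < \delta + (n'-1)\frac{4\delta}{3} + (n-n')\frac{4\delta}{3} = \delta + (n-1)\frac{4\delta}{3} = \delta\left(1 + \frac{4(n-1)}{3}\right) = \delta \cdot \frac{4n-1}{3}$. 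So the contradiction arises exactly when $\delta \le \frac{3}{4(4n-1)} = \frac{3}{16n-4}$, which is the hypothesis; then $\delta(4n-1)/3 \le 1/4$, contradicting $1/4 < \delta(4n-1)/3$.

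The main obstacle, and the reason this is slightly more delicate than the $3/4$ case, is the careful bookkeeping of dummy-good value: one must track both that the bags only cover $[m]$ (so the ``lost'' value $v_i(D)$ must be subtracted from $n'$, not $n$) and that the number of dummy goods is bounded by $n - n'$, the number of performed reductions, so that \cref{thm:dummy-bound} can be summed. Once this is set up, the arithmetic is routine and the threshold $\delta \le 3/(16n-4)$ falls out precisely. I would present the proof as: assume $\exists\, i \in U_A \cap N^1$; derive the upper bound on $v_i([m])$ via \cref{thm:B:upper-1} as above; derive the lower bound $v_i([m]) > n' - (n-n')\cdot 4\delta/3$ via \cref{thm:dummy-bound} and normalizedness; combine and simplify to $\frac14 < \delta(4n-1)/3$; observe this contradicts $\delta \le 3/(16n-4)$.

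\begin{proof}
Suppose for contradiction that there is an agent $i \in U_A \cap N^1$.
Since $|U_A| = |U_B|$ throughout and $i$ received no bag, some bag $B'_j$ is unallocated,
so $v_i(B'_j) < \alpha = \frac34 + \delta$.
Since $i \in N^1$, we have $v_i(B_k) \le 1$ for all $k \in [n']$,
so \cref{thm:B:upper-1} gives $v_i(B'_k) \le 1 + \frac{4\delta}{3}$ for every $k \in [n']$.
As $\bigcup_{k \in [n']} B'_k \subseteq [m]$, this yields
\begin{align*}
v_i([m]) &\le \sum_{k \in [n']} v_i(B'_k)
= v_i(B'_j) + \sum_{k \in [n'] \setminus \{j\}} v_i(B'_k)
< \left(\frac34 + \delta\right) + (n'-1)\left(1 + \frac{4\delta}{3}\right).
\end{align*}
On the other hand, since the instance is normalized, $v_i(M \cup D) = n'$,
and since $|D| \le n - n'$, \cref{thm:dummy-bound} gives
$v_i(D) < (n-n') \cdot \frac{4\delta}{3}$, hence
\[ v_i([m]) = v_i(M \cup D) - v_i(D) > n' - (n-n')\cdot\frac{4\delta}{3}. \]
Combining the two bounds on $v_i([m])$ and cancelling $n'$,
\[ -(n-n')\frac{4\delta}{3} < -\frac14 + \delta + (n'-1)\frac{4\delta}{3}, \]
i.e.\ $\frac14 < \delta + (n'-1)\frac{4\delta}{3} + (n-n')\frac{4\delta}{3}
= \delta\left(1 + \frac{4(n-1)}{3}\right) = \delta\cdot\frac{4n-1}{3}$.
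But $\delta \le \frac{3}{16n-4} = \frac{3}{4(4n-1)}$ implies $\delta\cdot\frac{4n-1}{3} \le \frac14$,
a contradiction. Hence $U_A \cap N^1 = \emptyset$.
\end{proof}
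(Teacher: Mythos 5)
Your proof is correct and is essentially the paper's argument, merely rearranged: the paper writes $n' = v_i([m]\cup D) = v_i(D) + \sum_k v_i(B'_k)$ and bounds $v_i(D) < |D|\cdot 4\delta/3$ with $|D|+n' \le n$ at the end, whereas you subtract $v_i(D)$ first and substitute $|D| \le n-n'$ earlier; the resulting threshold $\delta(4n-1)/3 \le 1/4$ is identical. One justification is stated backwards, though: from $\bigcup_k B'_k \subseteq [m]$ one gets $\sum_k v_i(B'_k) \le v_i([m])$, the \emph{opposite} of the inequality you need. The inequality you use is nevertheless valid (in fact an equality), but for a different reason: if $U_A \neq \emptyset$ at termination, then $\bagFill$ exited via the error line, which is reached only when $U_G = \emptyset$, so every good of $[m]$ has been placed in some bag and $[m] = \bigcup_k B'_k$. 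You should replace the ``$\subseteq$'' justification with this observation.
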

\begin{proof}
For the sake of contradiction, assume $U_A \cap N^1 \neq \emptyset$.
Hence, $\exists i \in U_A \cap N^1$. Also, for some $t \in [n']$, the $t\Th$ bag is unallocated.
We have $v_i(B_t') < 3/4 + \delta$. Therefore,
\begin{align*}
n' &= v_i([m] \cup D)
\\ &= v_i(D) + v_i(B_t') + \sum_{k \in [n'] \setminus \{t\}} v_i(B'_k)
    \reason{since $[m] = \bigcup_{k \in [n']} B'_k$}
\\ &< |D|\left(\frac{4\delta}{3}\right) + \left(\frac{3}{4} + \delta\right)
    + (n'-1)\left(1 + \frac{4\delta}{3}\right)
    \reason{by \cref{thm:dummy-bound,thm:B:upper-1}}
\\ &= n' - \frac{1}{4} + \delta \left(\frac{4(|D| + n') - 1}{3}\right)
\\ &\le n' - \frac{1}{4} + \delta \frac{4n-1}{3}
    \reason{since $|D| + n' \le n$}
\\ &\le n',
    \reason{since $\delta \le 3/(16n-4)$}
\end{align*}
which is a contradiction. Hence, $U_A \cap N^1 = \emptyset$.
\end{proof}

Now we prove that $\approxMMS$ allocates a bag to all agents $i \in N^2$.
Assume for the sake of contradiction that $U_A \neq \emptyset$.
Let $a$ be a fixed agent in $U_A$. By \cref{thm:S:n1-agents}, $a \in N^2$.

Let $A^+ \defeq \{k \in [n'] \mid v_a(B_k) > 1\}$,
$A^- \defeq \{k \in [n'] \mid v_a(B_k) < 3/4\}$,
and $A^0 \defeq \{k \in [n'] \mid 3/4 \le v_a(B_k) \le 1\}$.
We will try to get upper bounds on $v_a(B_k')$ for each of the cases
$k \in A^+$, $k \in A^-$, and $k \in A^0$.
Note that $n' = |A^+| + |A^-| + |A^0|$.
Also, $n' \in A^-$ since the instance is $R_2(3/4+\delta)$-irreducible,
and $|A^+| \ge 1$ since $a \in N^2$.

Let $\ell$ be the smallest such that for all $k \in [\ell+1, n']$,
$v_{a,k} + v_{a, 2n'-k+1 + \ell} \le 1$.
Note that $\ell \ge 1$, since $a \in N^2$.
The following five lemmas have their proofs moved to \cref{sec:missing-proofs}.

\begin{restatable}{lemma}{rthm:B:upper-1-12}
\label{thm:B:upper-1-12}
$i \in N^2 \implies v_{i, 2n'+1} < 1/12 + \delta$.
\end{restatable}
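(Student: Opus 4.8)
The plan is to imitate the proof of \cref{thm:S:upper-1-12} almost verbatim, keeping track of the extra $\delta \defeq \alpha - 3/4$ that now appears because $R_4$ is applied with threshold $\alpha = 3/4 + \delta$ rather than $3/4$. Recall that the instance $\Ical \defeq ([n'], [m], v, D)$ handed to $\bagFill$ is ordered, normalized, and totally-$\alpha$-irreducible, so in particular $\MMS_i = 1$ for every agent $i$. Fix an agent $i \in N^2$; by definition of $N^2$ there is an index $k \in [n']$ with $v_i(B_k) = v_{i,k} + v_{i,2n'+1-k} > 1$.

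First I would apply \cref{thm:pair-bound} with $n'$ playing the role of $n$: since the instance is ordered and normalized, $v_{i,k} > 2/3$, and orderedness then gives $v_{i,1} \ge v_{i,k} > 2/3$. Next I would use $R_4(\alpha)$-irreducibility of $\Ical$: with $S_4 = \{1, 2n'+1\}$ and $\MMS_i = 1$, irreducibility means $v_{i,1} + v_{i,2n'+1} = v_i(S_4) < \alpha\MMS_i = 3/4 + \delta$. Subtracting the bound $v_{i,1} > 2/3$ yields $v_{i,2n'+1} < 3/4 + \delta - 2/3 = 1/12 + \delta$, which is exactly the claim. (If $m < 2n'+1$ there is no good indexed $2n'+1$; then $S_4 = \emptyset$, $R_4(\alpha)$ is vacuous, and we read $v_{i,2n'+1}$ as $0$, so the bound holds trivially.)

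I expect no genuine obstacle here. The one point worth a sentence of justification is that \cref{thm:pair-bound} was stated for instances without dummy goods: it transfers to $(N, M, v, D)$ because "ordered and normalized" for such an instance is defined through $(N,M,v)$ and $(N, M\cup D, v)$, and the proof of \cref{thm:pair-bound} only manipulates goods in $[2n']$, all of which are non-dummy — an MMS partition of $M \cup D$ can be used exactly as in that proof, with the dummy goods simply sitting inside some bundles without affecting the pigeonhole count on $\{k+1, \ldots, 2n'-k+1\}$. Everything else is the same elementary arithmetic as in \cref{sec:simple}, which is why this proof is deferred to \cref{sec:missing-proofs}.
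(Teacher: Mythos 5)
Your proof is correct and follows exactly the same route as the paper's: apply \cref{thm:pair-bound} to get $v_{i,1}\ge v_{i,k}>2/3$, then use $R_4(3/4+\delta)$-irreducibility (with $\MMS_i=1$ by normalization) to conclude $v_{i,2n'+1}<3/4+\delta-v_{i,1}<1/12+\delta$. Your extra remarks on the degenerate case $m<2n'+1$ and on transferring \cref{thm:pair-bound} to instances with dummy goods are sound but not needed beyond what the paper already assumes.
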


\begin{restatable}{lemma}{rthm:B:upper-5-6}
\label{thm:B:upper-5-6}
$\forall k \in A^-$, $v_a(B'_k) < 5/6 + 2\delta$.
\end{restatable}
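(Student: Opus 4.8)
The plan is to mirror the argument used for \cref{thm:S:upper-5-6} in \cref{sec:simple}, carrying the $\delta$-terms through the estimates. Fix $k \in A^-$, and split into two cases according to whether the bag $B_k$ ever grew during $\bagFill$.

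If $B'_k = B_k$, then since $k \in A^-$ we have $v_a(B'_k) = v_a(B_k) < 3/4 \le 5/6 + 2\delta$ (using $\delta \ge 0$), and we are done immediately. Otherwise $B_k \subsetneq B'_k$, so let $g$ be the last good that $\bagFill$ added to this bag. The key observation is that, at the iteration just before $g$ was inserted, the bag (then equal to $B'_k \setminus \{g\}$) was still unassigned and agent $a$ was still unsatisfied — $a$ is unsatisfied throughout since $a \in U_A$ at termination. Hence, if $v_a(B'_k \setminus \{g\}) \ge \alpha = 3/4 + \delta$ held, the \textbf{if}-branch of $\bagFill$ would have fired and allocated a bag, rather than the \textbf{elsif}-branch that appended $g$; therefore $v_a(B'_k \setminus \{g\}) < 3/4 + \delta$. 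Furthermore $g \in U_G \subseteq [m] \setminus [2n']$, so $g > 2n'$, and since the instance is ordered this gives $v_{a,g} \le v_{a,2n'+1}$; since $a \in N^2$, \cref{thm:B:upper-1-12} yields $v_{a,2n'+1} < 1/12 + \delta$. Combining, $v_a(B'_k) = v_a(B'_k \setminus \{g\}) + v_{a,g} < (3/4 + \delta) + (1/12 + \delta) = 5/6 + 2\delta$.

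I expect no genuine obstacle here. The only points needing care are (i) reading off from the $\bagFill$ pseudocode that $v_a$ of any partially filled, still-unassigned bag must be strictly below $\alpha$ as long as $a$ is unsatisfied — which is precisely the guard of the main \textbf{while}-loop — and (ii) the legitimacy of invoking \cref{thm:B:upper-1-12}, which requires $a \in N^2$; this holds because $A^-$ (and this lemma) are only used under the standing contradiction hypothesis where a specific $a \in U_A \cap N^2$ has been fixed. The arithmetic is the same single-line bound as in \cref{sec:simple}, with $3/4$ replaced by $3/4 + \delta$ and $1/12$ by $1/12 + \delta$.
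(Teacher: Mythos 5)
Your proof is correct and follows essentially the same argument as the paper: the case split on whether the bag grew, the observation that the bag's value just before the last insertion must be below $\alpha = 3/4+\delta$ (since $a$ remains unsatisfied), and the bound $v_{a,g}\le v_{a,2n'+1} < 1/12+\delta$ from \cref{thm:B:upper-1-12}. The extra care you take in justifying why the guard applies (agent $a$ is unsatisfied throughout) and why $a\in N^2$ is available is exactly what the paper leaves implicit.
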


\begin{restatable}{lemma}{rthm:B:tricky-bound}
\label{thm:B:tricky-bound}
$\displaystyle \sum_{k \in A^+} v_i(B'_k)
    < |A^+| + \min(|A^+|, \ell)\Big(\frac{1}{12}+\delta\Big)$.
\end{restatable}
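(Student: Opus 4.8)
The plan is to reprove \cref{thm:S:tricky-bound} from \cref{sec:simple} essentially verbatim, carrying the additive slack $\delta = \alpha - 3/4$ through the one place where the bound $v_{a,k}<3/4$ was used. First I would note that for every $k\in A^+$ we have $B'_k = B_k = \{k,2n'+1-k\}$: since $a\in U_A$, agent $a$ stays unsatisfied for the entire run of $\bagFill$, and $v_a(B_k)>1>\alpha$, so as long as the $k\Th$ bag is unassigned the pair $(a,B_k)$ keeps the algorithm in the ``if''-branch and no good is ever appended to that bag. Hence $\sum_{k\in A^+}v_a(B'_k)=\sum_{k\in A^+}\bigl(v_{a,k}+v_{a,2n'+1-k}\bigr)$.

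Next comes the same rearrangement as in \cref{sec:simple}. Let $S$ and $L$ be, respectively, the sets of the $r\defeq\min(\ell,|A^+|)$ smallest and $r$ largest indices of $A^+$ (so $|S|=|L|=r\ge 1$), and split the sum into the ``diagonal'' part $\sum_{k\in S}v_{a,k}+\sum_{k\in L}v_{a,2n'+1-k}$ and the remaining ``off-diagonal'' part. For the diagonal part, \cref{thm:vr-upper-bounds} applied to $R_1(\alpha)$ (the instance is normalized, so $\MMS_a=1$) gives $v_{a,k}<3/4+\delta$, while \cref{thm:pair-bound} gives $v_{a,2n'+1-k}\le 1/3$ for $k\in A^+$; hence the diagonal part is $<r\,(3/4+\delta+1/3)=r+r(1/12+\delta)$.

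For the off-diagonal part: if $\ell\ge|A^+|$ it is empty and we are done, so assume $r=\ell<|A^+|$. Enumerating $A^+=\{g_1<\dots<g_{|A^+|}\}$, the leftover small-index goods are $g_{\ell+1},\dots,g_{|A^+|}$ and the leftover large-index goods are $2n'+1-g_1,\dots,2n'+1-g_{|A^+|-\ell}$, which I pair as $(g_{k+\ell},\,2n'+1-g_k)$ for $k\in[|A^+|-\ell]$. Since the $g$'s are distinct integers, $g_{k+\ell}\ge g_k+\ell$, and (using $g_{|A^+|-\ell}\le g_{|A^+|}-\ell\le n'-\ell$ so that the index $g_k+\ell$ lies in $[\ell+1,n']$) orderedness together with the minimality of $\ell$ gives $v_{a,g_{k+\ell}}+v_{a,2n'+1-g_k}\le v_{a,g_k+\ell}+v_{a,2n'-g_k+1}\le 1$. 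So this part is $\le|A^+|-\ell$, and adding the two bounds yields $\sum_{k\in A^+}v_a(B'_k)<|A^+|+\min(\ell,|A^+|)(1/12+\delta)$ (the $v_i$ in the statement should read $v_a$).

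This argument involves no new idea; it is the bookkeeping that needs care — keeping the coefficient as $\min(\ell,|A^+|)$ rather than $\ell$, and checking that the pairing $(g_{k+\ell},2n'+1-g_k)$ uses only indices in $[\ell+1,n']$ so that the defining property of $\ell$ applies. The one substantive difference from the $3/4$ case is the replacement of $3/4$ by $3/4+\delta$ in the per-good bound, which is exactly what turns the $1/12$ coefficient into $1/12+\delta$.
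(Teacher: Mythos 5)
Your proposal is correct and follows essentially the same route as the paper's proof: the same split of $\sum_{k\in A^+}v_a(B_k')$ into the $\min(\ell,|A^+|)$ extreme indices (bounded via \cref{thm:vr-upper-bounds} and \cref{thm:pair-bound} by $13/12+\delta$ each) and the paired remainder $(g_{k+\ell},\,2n'+1-g_k)$ bounded by $1$ per pair via the definition of $\ell$. Your explicit justifications that $B'_k=B_k$ for $k\in A^+$ and that the index $g_k+\ell$ stays in $[\ell+1,n']$ are details the paper leaves implicit, but they match its intent.
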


\begin{restatable}{lemma}{rthm:B:small-goods-bound}
\label{thm:B:small-goods-bound}
$v_i(D \cup [m] \setminus [2n']) > \ell(1/4 - \delta)$.
\end{restatable}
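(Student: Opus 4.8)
The plan is to replay the proof of \cref{thm:S:small-goods-bound}, but with $n'$ in place of $n$ and with the dummy goods carried along. Fix the agent $a \in U_A \cap N^2$ from the surrounding argument, and let $P \defeq (P_1, \ldots, P_{n'})$ be an MMS partition of agent $a$ of the goods $[m] \cup D$; since the instance is normalized, $v_a(P_s) = 1$ for every $s \in [n']$. (The statement is for this agent $a$, for whom $\ell$ is defined.)

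First I would unpack the definition of $\ell$: since $\ell$ is the \emph{smallest} index for which $v_{a,k} + v_{a,2n'-k+1+\ell} \le 1$ holds for all $k \in [\ell+1, n']$, and $\ell \ge 1$, the value $\ell-1$ fails that property, so there is some $k \in [\ell, n']$ with $v_{a,k} + v_{a,2n'-k+\ell} > 1$. As the instance is ordered and $k \le n'$, this gives $v_{a,j} + v_{a,t} \ge v_{a,k} + v_{a,2n'-k+\ell} > 1$ for every $j \in [k]$ and every $t \in [2n'-k+\ell]$. Because each bundle of $P$ has value exactly $1$, no two goods whose values to $a$ sum to more than $1$ can share a bundle; in particular no two goods of $[k]$ do, so \wLoG{} $j \in P_j$ for all $j \in [k]$, and every good of $[2n'-k+\ell]\setminus[k]$ lies in $P_{k+1}\cup\cdots\cup P_{n'}$.

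Next comes the counting step. Each of the $k$ bundles $P_1,\ldots,P_k$ can therefore only contain goods from $[k]$, from $[2n']\setminus[2n'-k+\ell]$ (a set of $k-\ell$ goods), and from $D \cup ([m]\setminus[2n'])$ (the dummy goods together with the ``small'' non-dummy goods). Since at most $k-\ell$ of these bundles can contain a good of $[2n']\setminus[2n'-k+\ell]$, at least $\ell$ of them contain exactly one good from $[2n']$, namely their singleton $\{j\}$; let $L$ be the set of such indices $j$, so $|L| \ge \ell$. For $j \in L$ we have $P_j \setminus \{j\} \subseteq D \cup ([m]\setminus[2n'])$, and $v_{a,j} < 3/4 + \delta$ by \cref{thm:vr-upper-bounds} (using $R_1(\alpha)$-irreducibility and normalization), so $v_a(P_j \setminus \{j\}) = 1 - v_{a,j} > 1/4 - \delta$. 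Summing the pairwise disjoint sets $P_j \setminus \{j\}$ over $j \in L$ yields $v_a(D \cup ([m]\setminus[2n'])) > |L|(1/4-\delta) \ge \ell(1/4-\delta)$, as required.

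I expect the only real subtlety — and it is mild — to be the bookkeeping of where goods can sit: one must check that for $j \in L$ the ``leftover'' $P_j \setminus \{j\}$ consists solely of dummy goods and goods indexed above $2n'$, so that its value is accounted for by $v_a(D \cup ([m]\setminus[2n']))$, and that the pigeonhole count of bundles with a single $[2n']$-good is genuinely $\ge \ell$. Note that \cref{thm:dummy-bound} is not needed here: the dummy goods are simply absorbed into the set $D \cup ([m]\setminus[2n'])$ on the left-hand side, so no bound on an individual dummy good's value is used.
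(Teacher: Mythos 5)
Your proof is correct and follows essentially the same route as the paper's: extract $k$ from the minimality of $\ell$, place $j \in P_j$ for $j \in [k]$ in the MMS partition of $[m]\cup D$, pigeonhole to find $\ge \ell$ bundles whose only $[2n']$-good is their singleton, and lower-bound each leftover by $1 - (3/4+\delta)$ via \cref{thm:vr-upper-bounds}. Your closing remarks on the bookkeeping of dummy goods and the non-use of \cref{thm:dummy-bound} are accurate and match the paper's treatment.
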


\begin{restatable}{lemma}{rthm:B:lower-1-2}
\label{thm:B:lower-1-2}
For all $i \in N^2$ and $k \in [n']$, $v_i(B_k) > 1/2-2\delta$.
\end{restatable}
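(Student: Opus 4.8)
The plan is to follow the proof of \cref{thm:S:lower-1-2} almost verbatim, carrying along the extra additive slack coming from $\delta = \alpha - 3/4 \ge 0$. Fix an agent $i \in N^2$. By definition of $N^2$ there is at least one bag with value exceeding $1$; let $t \in [n']$ be the smallest index with $v_i(B_t) > 1$. Since the instance is ordered and normalized, \cref{thm:pair-bound} applies to $B_t = \{t, 2n'+1-t\}$ and gives $v_{i,t} > 2/3$.

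I would then split on the position of $k$. If $k \le t$, orderedness gives $v_i(B_k) \ge v_{i,k} \ge v_{i,t} > 2/3 \ge 1/2 - 2\delta$, which is stronger than the claimed bound. If $k > t$, I first extract a lower bound on $v_{i,2n'+1-t}$: from $v_i(B_t) = v_{i,t} + v_{i,2n'+1-t} > 1$, together with the fact that $R_1(\alpha)$-irreducibility and \cref{thm:vr-upper-bounds} (case $k=1$) yield $v_{i,t} \le v_{i,1} < \alpha = 3/4 + \delta$ (recall $\MMS_i = 1$ since the instance is normalized), I obtain $v_{i,2n'+1-t} > 1/4 - \delta$.

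Finally, for $k > t$ note that $t < k \le n'$ forces $k < 2n'+1-k < 2n'+1-t$, so orderedness gives $v_{i,k} \ge v_{i,2n'+1-k} \ge v_{i,2n'+1-t} > 1/4 - \delta$; summing the two contributions of $B_k = \{k, 2n'+1-k\}$ gives $v_i(B_k) = v_{i,k} + v_{i,2n'+1-k} > 2(1/4 - \delta) = 1/2 - 2\delta$, as required. I do not anticipate any real obstacle here: the only point needing care is setting up the index chain $k < 2n'+1-k < 2n'+1-t$ correctly so that orderedness can legitimately be invoked, and keeping the $\delta$-arithmetic consistent with $\alpha = 3/4 + \delta$ throughout.
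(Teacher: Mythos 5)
Your proposal is correct and follows essentially the same argument as the paper's proof: fix the smallest $t$ with $v_i(B_t)>1$, use \cref{thm:pair-bound} for $k\le t$, and for $k>t$ derive $v_{i,2n'+1-t}>1/4-\delta$ from $v_{i,t}<3/4+\delta$ (via \cref{thm:vr-upper-bounds}) and the index chain $k<2n'+1-k<2n'+1-t$. No gaps.
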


\begin{restatable}{lemma}{rthm:B:n2-agents}
\label{thm:B:n2-agents}
If either $n = 2$ and $\delta \le 1/12$,
or if $\delta \le \min(\frac{1}{36}, \frac{3}{16n-4})$,
then $U_A \cap N^2 = \emptyset$, i.e., every agent in $N^2$ gets a bag.
\end{restatable}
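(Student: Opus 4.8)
The plan is to mirror the proof of \cref{thm:S:n2-agents} from \cref{sec:simple}, but carry the $\delta$-terms through carefully and only at the very end impose the constraints on $\delta$ that make the final inequality a genuine contradiction. So I would again assume for contradiction that $a\in U_A\cap N^2$ and set up $A^+,A^-,A^0,\ell$ as above. First I would write the value-conservation identity for agent $a$: since $[m]=\bigcup_{k\in[n']}B'_k$ and dummy goods contribute as well,
\begin{align*}
n' &= v_a([m]\cup D)
= v_a(D)+\sum_{k\in A^-}v_a(B'_k)+\sum_{k\in A^+}v_a(B'_k)+\sum_{k\in A^0}v_a(B'_k).
\end{align*}
Now I would bound each piece: $v_a(D)<|D|(4\delta/3)$ by \cref{thm:dummy-bound}; $\sum_{k\in A^-}v_a(B'_k)<(5/6+2\delta)|A^-|$ by \cref{thm:B:upper-5-6}; $\sum_{k\in A^+}v_a(B'_k)<|A^+|+\min(|A^+|,\ell)(1/12+\delta)$ by \cref{thm:B:tricky-bound}; and $\sum_{k\in A^0}v_a(B'_k)\le |A^0|(1+4\delta/3)$ by \cref{thm:B:upper-1}. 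Substituting and using $n'=|A^+|+|A^-|+|A^0|$, the $|A^+|,|A^0|$ "main" terms cancel against $n'$, leaving an inequality of roughly the shape
\[
0 < -\tfrac16|A^-| + \min(|A^+|,\ell)\big(\tfrac1{12}+\delta\big) + \delta\cdot O(n),
\]
where the $\delta\cdot O(n)$ comes from $v_a(D)$, the $2\delta|A^-|$ term, and the $4\delta/3$ term on $A^0$. As in \cref{sec:simple} this should give an upper bound on $|A^-|$ of the form $|A^-| < \ell/2 + (\text{small }\delta\text{-correction involving }n)$.

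Next I would run the counting argument that there are not enough small goods to fill the $A^-$ bags, again following \cref{thm:S:n2-agents} but with the dummy-good version \cref{thm:B:small-goods-bound} in place of \cref{thm:S:small-goods-bound}. That lemma gives $v_a(D\cup[m]\setminus[2n']) > \ell(1/4-\delta)$. On the other hand, $D\cup[m]\setminus[2n']$ is exactly the set of goods that got distributed into the bags during $\bagFill$ plus the dummy goods, so its value is $v_a(D)+\sum_{k\in A^-}(v_a(B'_k)-v_a(B_k))$ (bags in $A^+\cup A^0$ never grow). Using $v_a(D)<|D|(4\delta/3)$, $v_a(B'_k)<5/6+2\delta$ (\cref{thm:B:upper-5-6}), and $v_a(B_k)>1/2-2\delta$ (\cref{thm:B:lower-1-2}), this is at most $|D|(4\delta/3) + |A^-|(1/3+4\delta)$. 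Combining the two bounds yields
\[
\ell\big(\tfrac14-\delta\big) < |D|\tfrac{4\delta}{3} + |A^-|\big(\tfrac13+4\delta\big),
\]
into which I substitute the bound on $|A^-|$ from the first part. Everything is then in terms of $\ell$, $|D|$, $n$, $\delta$; using $\ell\le n'\le n$ and $|D|\le n$ and dividing through by $\ell$ (which is $\ge 1$), the claim reduces to a single scalar inequality in $n$ and $\delta$ that must fail — i.e., I need $\delta$ small enough (as a function of $n$) that the $\delta$-corrections cannot close the $1/4$ vs.\ $1/3\cdot\tfrac12 = 1/6$ gap that produced the contradiction in the $\delta=0$ case.

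The main obstacle is the bookkeeping in that final scalar inequality: I must track the $\min(|A^+|,\ell)$ versus $\ell$ distinction (splitting into the cases $\ell\le|A^+|$ and $\ell>|A^+|$, exactly as in the proof of \cref{thm:B:tricky-bound}), and be careful that the various $\delta$-terms are multiplied by the right quantities ($|D|$, $|A^-|$, $|A^0|$, $\min(|A^+|,\ell)$), each of which is at most $n$ but not obviously smaller. Collecting all $\delta$-contributions gives a coefficient that is $\Theta(n)$, so the requirement becomes $\delta = O(1/n)$, which is where $\delta\le 3/(16n-4)$ enters; the separate absolute bound $\delta\le 1/36$ should come from a term whose coefficient is an absolute constant rather than growing with $n$ (e.g.\ the $1/12+\delta$ factor multiplying $\min(|A^+|,\ell)$ when $\ell$ is the binding quantity and $|A^-|$ is forced to be small). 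For $n=2$ the same computation is slacker, so only the weaker $\delta\le 1/12$ is needed; I would verify that special case by direct substitution of $n'\le 2$. Once the scalar inequality is shown to be violated under the stated hypotheses on $\delta$, the contradiction is established and $U_A\cap N^2=\emptyset$.
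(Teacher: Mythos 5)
Your skeleton is the right one --- the same two master inequalities the paper uses (value conservation over the bags, and the count of goods outside $[2n']$ against the growth of the $A^-$ bags, both corrected by the dummy-good term $v_a(D) < \tfrac{4\delta}{3}|D|$) --- but the way you combine them will not produce the constants in the statement, and the constants are the whole point of the lemma. You chain the inequalities: extract $|A^-| < \ell/2 + O(n\delta)$ from the first and substitute into the second. If you carry that through, the first inequality reads $|A^-|(\tfrac16-2\delta) < \min(|A^+|,\ell)(\tfrac1{12}+\delta) + \tfrac{4\delta}{3}(|D|+|A^0|)$, whose right-hand side carries a $\Theta(n\delta)$ term, and after substitution into $\ell(\tfrac14-\delta) < \tfrac{4\delta}{3}|D| + |A^-|(\tfrac13+4\delta)$ the contradiction survives only when roughly $12\delta(4n+8)\le 1$, i.e.\ $\delta \lesssim \tfrac{1}{48n}$. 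That is an order of magnitude short of the claimed $\tfrac{3}{16n-4}\approx\tfrac{1}{5.3n}$, and it gives no route to the absolute branch $\tfrac1{36}$ or to the $n=2$, $\delta\le\tfrac1{12}$ branch.

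The paper closes this gap with three devices your plan omits. First, it isolates the set $S\subseteq A^-$ of bags whose final value for $a$ is still below $3/4+\delta$ (nonempty because $a$ never receives a bag) and bounds those by $3/4+\delta$ rather than $5/6+2\delta$, which turns a $-\tfrac16$ coefficient into $-\tfrac14$ in the value-conservation inequality. Second, instead of substituting one inequality into the other, it rewrites \emph{both} as lower bounds on $\delta$ of the form $\delta > \tfrac1{12}\cdot\frac{a_1x+b_1y+c_1}{a_2x+b_2y+c_2}$ in $x=|S|-1$, $y=|A^-|-|S|$, takes the maximum of the two, and optimizes the crossover threshold $\ell \gtrless px+qy+r$ (with $p=2$, $q=(\sqrt{17}-1)/2$, $r=2$) using \cref{thm:trinomial-ratio}; the first bound is the binding one when $\ell$ is small and the second when $\ell$ is large, and chaining forfeits exactly the regime where $\ell$ and $|A^-|$ are both small. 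Third, it splits on $|A^+|$: the cases $|A^+|=1$ and $|A^+|=2$ (taking $\ellhat=|A^+|$ in the first bound) yield $12\delta>1$ and $12\delta>\tfrac13$ respectively --- this is where $\tfrac1{36}$ and the $n=2$ branch come from --- while $|A^+|\ge 3$ forces $n'\ge 4$, hence $|D|\le n-4$, which is what makes the $n$-dependent denominator come out as $4n-1$ rather than something larger. Without these three ingredients your final scalar inequality will be true for a nontrivial range of $\delta$ below $\min(\tfrac1{36},\tfrac{3}{16n-4})$, so the contradiction fails and the lemma as stated is not proved.
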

\begin{proof}
Assume for the sake of contradiction that $U_A \cap N^2 \neq \emptyset$.
Then, as discussed before, we fix an agent $a \in U_A \cap N^2$
and define $A^+$, $A^-$, $A^0$, and $\ell$.

Let $S \defeq \{k \in A^- \mid v_a(B_k') < 3/4 + \delta\}$.
Then $|S| \ge 1$, since $i$ didn't get a bag.
\begin{align*}
n' &= v_a([m] \cup D) = \sum_{k=1}^{n'} v_a(B_k')
\\ &= \sum_{k \in A^+} v_a(B_k') + \sum_{k \in A^0} v_a(B_k')
    \TCNLA[\qquad]+ \sum_{k \in A^- \setminus S} v_a(B_k') + \sum_{k \in S} v_a(B_k')
\\ &< \left(|A^+| + \min(|A^+|, \ell)\left(\frac{1}{12}+\delta\right)\right) + |A^0|
    \TCNLA[\qquad]+ (|A^-|-|S|)\left(\frac{5}{6} + 2\delta\right) + |S|\left(\frac{3}{4} + \delta\right)
    \reason{by \cref{thm:B:tricky-bound,thm:B:upper-5-6}}
\\ &= n' + \min(|A^+|, \ell)\left(\frac{1}{12}+\delta\right)
    \TCNLA[\qquad]+ (|A^-|-|S|)\left(-\frac{1}{6} + 2\delta\right) + |S|\left(-\frac{1}{4} + \delta\right)
    \reason{since $n' = |A^+| + |A^0| + |A^-|$}
\end{align*}
Then $\forall \ellhat \in \{\ell, |A^+|\}$, we get
\begin{align}
& \nonumber 0 < \ellhat\left(\frac{1}{12}+\delta\right) + (|A^-|-|S|)\left(-\frac{1}{6} + 2\delta\right)
    \TCNLA[\qquad]\nonumber+ |S|\left(-\frac{1}{4} + \delta\right)
\\ &\label{eq:delta-lb1}\implies \delta > \frac{1}{12}\;
    \frac{2(|A^-|-|S|) + 3|S| - \ellhat}{\ellhat + |S| + 2(|A^-|-|S|)}.
\end{align}
Using \cref{thm:B:small-goods-bound,thm:dummy-bound}, we get
\[ v_a([m] \setminus [2n']) > \ell\left(\frac{1}{4} - \delta\right) - \frac{4|D|}{3}\delta. \]
Using \cref{thm:B:upper-5-6,thm:B:lower-1-2}, we get
\begin{align*}
& v_a([m] \setminus [2n'])
\TCNLA= \sum_{k \in A^- \setminus S} v_a(B_k') + \sum_{k \in S} v_a(B_k') - \sum_{k \in A^-} v_a(B_k)
\\ &< (|A^-|-|S|)\left(\frac{5}{6} + 2\delta\right) + |S|\left(\frac{3}{4} + \delta\right)
    \TCNLA[\qquad]- |A^-|\left(\frac{1}{2} - 2\delta\right)
\\ &= (|A^-|-|S|)\left(\frac{1}{3} + 4\delta\right) + |S|\left(\frac{1}{4} + 3\delta\right).
\end{align*}
Hence,
\begin{align}
& \nonumber\ell\left(\frac{1}{4} - \delta\right) - \frac{4|D|}{3}\delta < v_a([m] \setminus [2n'])
    \TCNLA[\qquad]\nonumber< (|A^-|-|S|)\left(\frac{1}{3} + 4\delta\right) + |S|\left(\frac{1}{4} + 3\delta\right)
\\ &\label{eq:delta-lb2}\implies \delta > \frac{1}{12}\;
    \frac{3\ell - 3|S| - 4(|A^-|-|S|)}{\ell + 4|D|/3 + 3|S| + 4(|A^-|-|S|)}.
\end{align}
Let $x \defeq |S|-1$ and $y \defeq |A^-|-|S|$.
Then $x \ge 0$ and $y \ge 0$, since $|S| \ge 1$ and $S \subseteq A^-$.
By \eqref{eq:delta-lb1} and \eqref{eq:delta-lb2}, $\forall \ellhat \in \{\ell, |A^+|\}$, we get
\begin{equation}
\label{eq:delta-lb}
\begin{aligned}
\delta > \frac{1}{12}\max\bigg(&
    \frac{3x + 2y + (3 - \ellhat)}{x + 2y + 1 + \ellhat},
    \TCNLA[\quad]\frac{3\ell - 3 - 3x - 4y}{\ell + 4|D|/3 + 3 + 3x + 4y}
\bigg).
\end{aligned}
\end{equation}

The following useful inequality is proved in \cref{sec:missing-proofs}.
\begin{restatable}{claim}{rthmTrinomialRatio}
\label{thm:trinomial-ratio}
If $a_2$, $b_2$, and $c_2$ are positive, then $\forall x \ge 0$, $\forall y \ge 0$,
\[ \frac{a_1x + b_1y + c_1}{a_2x + b_2y + c_2}
\ge \min\left(\frac{a_1}{a_2}, \frac{b_1}{b_2}, \frac{c_1}{c_2}\right). \]
\end{restatable}

Let $p$, $q$, and $r$ be arbitrary constants whose values we will decide later.
If $\ell \ge px + qy + r$, then
\begin{align*}
12\delta &> \frac{3\ell - 3 - 3x - 4y}{\ell + 4|D|/3 + 3 + 3x + 4y}  \reason{by \cref{eq:delta-lb}}
\\ &\ge \frac{3(px + qy + r) - 3 - 3x - 4y}{(px + qy + r) + 4|D|/3 + 3 + 3x + 4y}
\\ &= \frac{(3p-3)x + (3q-4)y + (3r-3)}{(p+3)x + (q+4)y + (r+3+4|D|/3)}
\\ &\ge \min\left(\frac{3p-3}{p+3}, \frac{3q-4}{q+4}, \frac{3r-3}{r+3+4|D|/3}\right).
    \reason{by \cref{thm:trinomial-ratio}}
\end{align*}
If $\ell \le px + qy + r$, then
\begin{align*}
12\delta &> \frac{3x + 2y + 3 - \ell}{x + 2y + 1 + \ell}  \reason{by \cref{eq:delta-lb}}
\\ &\ge \frac{3x + 2y + 3 - (px + qy + r)}{x + 2y + 1 + (px + qy + r)}
\\ &= \frac{(3-p)x + (2-q)y + (3-r)}{(p+1)x + (q+2)y + (r+1)}
\\ &\ge \min\left(\frac{3-p}{p+1}, \frac{2-q}{q+2}, \frac{3-r}{r+1}\right).
    \reason{by \cref{thm:trinomial-ratio}}
\end{align*}
Hence, $\forall \ell$,
\begin{align*}
12\delta \ge \min\bigg(&\frac{3p-3}{p+3}, \frac{3-p}{p+1}, \frac{3q-4}{q+4}, \frac{2-q}{q+2},
    \TCNLA[\quad]\frac{3r-3}{r+3+4|D|/3}, \frac{3-r}{r+1}\bigg).
\end{align*}
On setting $p = 2$, $q = (\sqrt{17}-1)/2$, and $r = 2$, we get
\[ 12\delta \ge \min\left(\frac{1}{3}, \frac{9}{4|D| + 15}\right). \]

If $|A^+| \le 2$, then
\begin{align*}
& 12\delta > \frac{3x + 2y + (3-|A^+|)}{x + 2y + 1 + |A^+|}
    \TCNLA[\quad]\ge \min\left(3, 1, \frac{3-|A^+|}{1+|A^+|}\right)
    = \begin{cases}1 & \textrm{ if } |A^+| = 1 \\ 1/3 & \textrm{ if } |A^+| = 2\end{cases}.
\end{align*}
Note that if $n = 2$, then $n' = 2$ and $|A^+| = 1$.

If $|A^+| \ge 3$, then $n' \ge 4$, so $|D| \le n - n' \le n - 4$.
Hence,
\[ 12\delta > \min\left(\frac{1}{3}, \frac{9}{4(n-4) + 15}\right)
= \min\left(\frac{1}{3}, \frac{9}{4n-1}\right). \]
Hence, if someone in $N^2$ doesn't get a bag, then $\delta$ is sufficiently large.
So, if we pick $\delta$ to be sufficiently small, everyone in $N^2$ will get a bag.
\end{proof}

By \cref{thm:B:n1-agents,thm:B:n2-agents}, we get that $U_A = \emptyset$, i.e.,
every agent gets a bag, and hence, $\bagFill$'s output is $(3/4 + \delta)$-MMS
when $\delta \le \imp$.

\section{Missing Proofs}
\label{sec:missing-proofs}

\callMacro{rthm:B:upper-1}*
\recallProof{upper-1}

\callMacro{rthm:B:upper-1-12}*
\recallProof{upper-1-12}

\callMacro{rthm:B:upper-5-6}*
\recallProof{upper-5-6}

\callMacro{rthm:B:tricky-bound}*
\recallProof{tricky-bound}

\callMacro{rthm:B:small-goods-bound}*
\recallProof{small-goods-bound}

\callMacro{rthm:B:lower-1-2}*
\recallProof{lower-1-2}

\rthmTrinomialRatio*
\begin{proof}
\begin{align*}
& \forall x \ge 0, \forall y \ge 0, \frac{a_1x + b_1y + c_1}{a_2x + b_2y + c_2} \ge \frac{a_1}{a_2}
\\ &\iff \forall x \ge 0, \forall y \ge 0,
    \TCNLA[\qquad]\frac{(a_2b_1 - a_1b_2)y + (a_2c_1 - a_1c_2)}{a_2(a_2x + b_2y + c_2)} \ge 0
\\ &\iff a_2b_1 - a_1b_2 \ge 0 \textrm{ and } a_2c_1 - a_1c_2 \ge 0
\\ &\iff \frac{b_1}{b_2} \ge \frac{a_1}{a_2} \textrm{ and } \frac{c_1}{c_2} \ge \frac{a_1}{a_2}
\\ &\iff \frac{a_1}{a_2} = \min\left(\frac{a_1}{a_2}, \frac{b_1}{b_2}, \frac{c_1}{c_2}\right).
\end{align*}
\begin{align*}
\\ &\forall x \ge 0, \forall y \ge 0, \frac{a_1x + b_1y + c_1}{a_2x + b_2y + c_2} \ge \frac{b_1}{b_2}
\\ &\iff \forall x \ge 0, \forall y \ge 0,
    \TCNLA[\qquad]\frac{(a_1b_2 - a_2b_1)x + (b_2c_1 - b_1c_2)}{b_2(a_2x + b_2y + c_2)} \ge 0
\\ &\iff a_1b_2 - a_2b_1 \ge 0 \textrm{ and } b_2c_1 - b_1c_2 \ge 0
\\ &\iff \frac{a_1}{a_2} \ge \frac{b_1}{b_2} \textrm{ and } \frac{c_1}{c_2} \ge \frac{b_1}{b_2}
\\ &\iff \frac{b_1}{b_2} = \min\left(\frac{a_1}{a_2}, \frac{b_1}{b_2}, \frac{c_1}{c_2}\right).
\end{align*}
\begin{align*}
\\ &\forall x \ge 0, \forall y \ge 0, \frac{a_1x + b_1y + c_1}{a_2x + b_2y + c_2} \ge \frac{c_1}{c_2}
\\ &\iff \forall x \ge 0, \forall y \ge 0,
    \TCNLA[\qquad]\frac{(a_1c_2 - a_2c_1)x + (b_1c_2 - b_2c_1)y)}{c_2(a_2x + b_2y + c_2)} \ge 0
\\ &\iff a_1c_2 - a_2c_1 \ge 0 \textrm{ and } b_1c_2 - b_2c_1 \ge 0
\\ &\iff \frac{a_1}{a_2} \ge \frac{c_1}{c_2} \textrm{ and } \frac{b_1}{b_2} \ge \frac{c_1}{c_2}
\\ &\iff \frac{c_1}{c_2} = \min\left(\frac{a_1}{a_2}, \frac{b_1}{b_2}, \frac{c_1}{c_2}\right).
\qedhere \end{align*}
\end{proof}

\end{document}